\definecolor{DarkGreen}{rgb}{0.1,0.5,0.1}
\definecolor{DarkRed}{rgb}{0.5,0.1,0.1}
\definecolor{DarkBlue}{rgb}{0.1,0.1,0.5}
\def\draft{1}
\def\submit{1}
    \def\ShowAuthNotes{1}
    \def\ShowAuthNotes{0}
\newcommand{\authnote}[2]{{ \footnotesize \bf{\color{red}[#1's Note: {\color{blue}#2}]}}}
\newcommand{\authnote}[2]{}
\newcommand{\forsubmit}[1]{#1}
\newcommand{\forreals}[1]{}
\newcommand{\forreals}[1]{#1}
\newcommand{\forsubmit}[1]{}
\newtheorem{theorem}{Theorem}[section]
\newtheorem{itheorem}{Informal Theorem}[section]
\newtheorem{corollary}[theorem]{Corollary}
\newtheorem{remark}{Remark}[section]
\newtheorem{lemma}[theorem]{Lemma}
\newtheorem{proposition}{Proposition}[section]
\newtheorem{claim}[theorem]{Claim}
\newtheorem{fact}{Fact}[section]
\theoremstyle{definition}
\newtheorem{definition}{Definition}[section]
\renewcommand{\mathbb}{\varmathbb}
\newcommand{\Esymb}{\mathbb{E}}
\newcommand{\Psymb}{\mathbb{P}}
\DeclareMathOperator*{\E}{\Esymb}
\DeclareMathOperator*{\ProbOp}{\Psymb r}
\renewcommand{\Pr}{\ProbOp}
\newcommand{\nfrac}{\nicefrac}
\newcommand{\mper}{\,.}
\newcommand{\mcom}{\,,}
\newcommand{\cA}{{\cal A}}
\newcommand{\cD}{{\cal D}}
\newcommand{\cG}{{\cal G}}
\newcommand{\cI}{{\cal I}}
\newcommand{\poly}{{\rm poly}}
\newcommand{\remove}[1]{}
\renewcommand{\epsilon}{\varepsilon}
\newcommand{\Lap}{\mathit{Lap}}
\newcommand{\defeq}{\stackrel{\small \mathrm{def}}{=}}
\newcommand{\KL}{\mathrm{RE}}
\newcommand{\bR}{\mathbb{R}}
\newcommand{\map}{F}
\newcommand{\sens}{\gamma}
\newcommand{\dom}{V}
\newcommand{\rest}{T}
\renewcommand{\path}{P}
\newcommand{\tolf}{\tilde{f}}
\newcommand{\toldom}{\dom'}
\newcommand{\initOneLiners}{%
    \setlength{\itemsep}{0pt}
    \setlength{\parsep }{0pt}
    \setlength{\topsep }{0pt}
}
\title{\bf Privately Releasing Conjunctions\\ and the Statistical Query Barrier}
\author{Anupam Gupta\thanks{Computer Science Department, Carnegie Mellon
    University, Pittsburgh, PA 15213. Research was partly supported by
    NSF award CCF-1016799 and an Alfred P.~Sloan Fellowship.}
\and Moritz Hardt\thanks{Center for Computational Intractability, Department
of Computer Science, Princeton University.  Supported by NSF grants
CCF-0426582 and CCF-0832797. Email: {\tt mhardt@cs.princeton.edu}.}
\and Aaron Roth\thanks{Microsoft Research New England, Cambridge MA. Email: {\tt alroth@cs.cmu.edu}.}
\and Jonathan Ullman\thanks{School of Engineering and Applied Sciences, Harvard University, Cambridge MA.  Supported by NSF grant CNS-0831289.  Email: {\tt jullman@seas.harvard.edu}.}}
\begin{document}
\maketitle

\begin{abstract}

Suppose we would like to know \emph{all}
 answers to a set of statistical queries~$C$
on a data set up to small error, but we can only access the data itself using
statistical queries. A trivial solution is to exhaustively ask all
queries in~$C.$ Can we do any better?

\begin{enumerate}
\item We show that the number of statistical queries necessary and
  sufficient for this task is---up to polynomial factors---equal to the
  agnostic learning complexity of~$C$ in Kearns' statistical query (SQ)
  model. This gives a complete answer to the question when running time
  is not a concern.

\item
We then show that the problem can be solved efficiently (allowing arbitrary
error on a small fraction of queries) whenever the answers to~$C$ can be
described by a submodular function. This includes many natural concept
classes, such as graph cuts and Boolean disjunctions and conjunctions.
\end{enumerate}

While interesting from a learning theoretic point of view, our main
applications are in \emph{privacy-preserving data analysis}:
Here, our second result leads to an algorithm that efficiently
releases differentially private answers to all Boolean conjunctions
with~1\% average error.
This presents significant progress on a key open problem in
privacy-preserving data analysis.
Our first result on the other hand gives unconditional lower bounds on any
differentially private algorithm that admits a (potentially
non-privacy-preserving) implementation using only
statistical queries. Not only our algorithms, but also most known
private algorithms
 can be implemented using only statistical queries, and hence are
constrained by these lower bounds. Our result therefore isolates the
complexity of agnostic learning in the SQ-model as a new barrier in the
design of differentially private algorithms.
\end{abstract}

\vfill
\thispagestyle{empty}
\setcounter{page}{0}
\pagebreak

\section{Introduction}
\label{sec:introduction}

Consider a data set $D\subseteq\{0,1\}^d$ in which each element corresponds to
an individual's record over~$d$ binary attributes. The goal of privacy-preserving
data analysis is to enable rich statistical analyses on the data set while
respecting individual privacy. In paritcular, we would like to guarantee
\emph{differential privacy}~\cite{DMNS06}, a rigorous
notion of privacy that guarantees the outcome of a statistical analysis
is nearly indistinguishable on any two data sets that differ only in a single
individual's data.

One of the most important classes of statistical queries on the data set
are Boolean conjunctions, sometimes called contingency tables or
marginal queries. See, for example, \cite{BCDKMT07,BLR08,KRSU10,UV10}.
A boolean conjunction corresponding to a subset $S\subseteq[d]$ counts
what fraction of the individuals have each attribute in~$S$ set to~$1.$
A major open problem in privacy-preserving data analysis is to
efficiently create a differentially private synopsis of the data set
that accurately encodes answers to all Boolean conjunctions. In this
work we give an algorithm with runtime polynomial in $d$, which outputs
a differentially private data structure that represents all boolean
conjunctions up to an average error of~1\%.

Our result is significantly more general and applies to any collection of
queries that can be described
by a low sensitivity \emph{submodular} function. Submodularity is a property
that often arises in data analysis and machine learning problems
\cite{KG07}, including problems for which privacy is a first-order
design constraint\footnote{For example, Kempe, Kleinberg, and Tardos
  show that for two common models of influence propagation on social
  networks, the function capturing the ``influence'' of a set of users
  (perhaps the targets of a viral marketing campaign) is a monotone
  submodular function \cite{KKT03}.}. Imagine, for example, a social network
on $d$ vertices. A data analyst may wish to analyze the size of the cuts
induced by various subsets of the vertices.  Here, our result provides a data
structure that represents all cuts up to a small average error.
Another important example of submodularity is the \emph{set-coverage}
function, which given a set system over elements in some universe $U$,
represents the number of elements that are covered by the union of any
collection of the sets.

The size of our data structure grows exponentially in the inverse error
desired, and hence we can represent submodular functions only up to
constant error if we want polynomial query complexity. \emph{Can any
  efficient algorithm do even better?} We give evidence that in order to
do better, fundamentally new techniques are needed. Specifically, we
show that no polynomial-time algorithm that guarantees small error for \emph{every}
boolean conjunction can do substantially better if
the algorithm permits an implementation that only accesses the database
through statistical queries.  This statement holds regardless of whether such an
implementation is privacy-preserving. (A statistical query is given by a
function $q\colon\{0,1\}^d\to\{0,1\}$, to which the answer is $\E_{x\in
  D}[q(x)]$.)

We show this limitation using connection between the data release
problem and standard problems in learning theory.  Putting aside privacy concerns,
we pose the following question: \emph{How many statistical queries to a
  data set are necessary and sufficent in order to approximately answer
  \emph{all} queries in a class~$C$?} We show that the number of
statistical queries necessary and sufficient for this task is, up to a
factor of~$O(d)$, equal to the agnostic learning complexity of~$C$ (over
arbitrary distributions) in Kearns' statistical query (SQ)
model~\cite{Kearns98}. Using an SQ lower bound for agnostically
learning monotone conjunctions shown by Feldman~\cite{Feldman10}, this
connection implies that no polynomial-time algorithm operating in the
SQ-model can release even monotone conjunctions to subconstant
error. Since monotone conjunction queries can be described by a submodular function,
the lower bound applies to releasing submodular functions as well.

While the characterization above is independent of privacy concerns, it
has two immediate implications for private data release:
\begin{itemize}
\item Firstly, it also characterizes what can be released in the
  \emph{local privacy} model of Kasiviswanathan et al.~\cite{KLNRS09};
  this follows from the fact that~\cite{KLNRS09} showed that SQ
  algorithms are precisely what can be computed in the local privacy
  model.
\item Secondly, and perhaps even more importantly, it gives us the
  claimed unconditional lower bounds on the running time of any
  query-release algorithm that permits an implementation using only
  statistical queries---regardless of whether its privacy analysis can
  be carried out in the local privacy model.  To our knowledge, this
  class includes almost all privacy preserving algorithms developed to
  date, including the recently introduced Median Mechanism \cite{RR10}
  and Multiplicative Weights Mechanism \cite{HardtRo10}\footnote{A
    notable exception is the private parity-learning algorithm of
    \cite{KLNRS09}, which explicitly escapes the statistical query
    model.}. Note that these mechanisms cannot be implemented in the
  local privacy model while preserving their privacy guarantees, because
  they will have to make too many queries. Indeed, they are capable of
  releasing conjunctions to subconstant error! Yet, they can be
  implemented using only statistical queries, and so our lower bounds
  apply to their running time.
\end{itemize}
To summarize, our results imply that if we want to develop efficient
algorithms to solve the query release problem for classes as expressive
as monotone conjunctions (itself an extremely simple class!), we need to
develop techniques that are able to sidestep this \emph{statistical query
barrier.} On a conceptual level, our results present new reductions from
problems in differential privacy to problems in learning theory.

\subsection{Overview of our results}

In this section we give an informal statement of our theorems with pointers to
the relevant sections. Our theorem on approximating submodular
functions is proved in Section~\ref{sec:submodular}. The definition of
submodularity is found in the Preliminaries (Section~\ref{sec:prelim}).

\begin{itheorem}[Approximating submodular functions]
Let $\alpha>0,\beta>0.$ Let $f\colon \{0,1\}^d\to[0,1]$ be a submodular
function.  Then, there is an algorithm with runtime
$d^{O(\log(1/\beta)/\alpha^2)}$ which produces an
approximation~$h\colon\{0,1\}^d\to[0,1]$ such that
$\Pr_{x\in\{0,1\}^d}\{|f(x)-h(x)|\le\alpha\}\ge1-\beta.$
\end{itheorem}

In Section~\ref{sec:conjunctions} we then show how this algorithm gives the
following differentially private release mechanism for Boolean conjunctions.
The definition of differential privacy is given in Section~\ref{sec:prelim}.

\begin{itheorem}[Differentially private query release for conjunctions]
Let $\alpha>0,\beta>0.$ There is an $\epsilon$-differentially private
algorithm with runtime $d^{O(\log(1/\beta)/\alpha^2)}$
which releases the set of Boolean conjunctions with error at most
$\alpha$ on a $1-\beta$ fraction of the queries provided that
$|D|\ge d^{O(\log(1/\beta)/\alpha^2)}/\epsilon\mper$
\end{itheorem}

The guarantee in our theorem can be refined to give an $\alpha$-approximation
to a $1-\beta$ fraction of the set of $w$-way conjunctions (conjunctions of
width~$w$) for all $w\in\{1,...,d\}.$ Nevertheless, our algorithm has the
property that the error may be larger than~$\alpha$ on a small fraction of the
queries. We note, however, that for $\beta\le\alpha^p/2$ our guarantee is
stronger than error~$\alpha$ in the $L_p$-norm which is also a natural
objective that has been considered in other works.
For example, Hardt and Talwar study
error bounds on mechanisms with respect to the Euclidean norm across all
answers~\cite{HardtTa10}.
From a practical
point of view, it also turns out that some privacy-preserving algorithms in the
literature indeed only require the ability to answer \emph{random} conjunction
queries privately, e.g., \cite{JPW09}.

Finally, in Section~\ref{sec:agnostic}, we study the general query release
problem and relate it to the agnostic learning complexity in the Statistical
Query model.
\begin{itheorem}[Equivalence between query release and agnostic learning]
Suppose there exists an algorithm that learns a class~$C$ up to error $\alpha$
under arbitrary distributions using at most~$q$ statistical queries. Then,
there is a release mechanism for~$C$ that makes at most~$O(qd/\alpha^2)$
statistical queries.

Moreover, any release mechanism for $C$ that makes at most $q$ statistical
queries implies an agnostic learner that makes at most $2q$ queries.
\end{itheorem}
While both reductions preserve the query complexity of the
problem neither reduction preserves runtime.
We also note that our equivalence characterization is more general than what we
stated: the same proof shows that agnostic learning of a class $C$ is (up
to small factors) information theoretically equivalent to releasing the
answers to all queries in a class $C$ for any class of algorithms that
may access the database only in some restricted manner. The ability to
make only SQ queries is one restriction, and the requirement to be
differentially private is another. Thus, we also show that on a class by
class basis, the privacy cost of releasing the answers to a class of
queries using any technique is not much larger than the privacy cost of
simply optimizing over the same class to find the query with the highest
value, and vice versa.

\paragraph{Our techniques.}

Our release algorithm is based on a structural theorem about general submodular
functions $f:2^U\rightarrow [0,1]$ that may be of independent
interest. Informally, we show that any submodular function has a
``small'' ``approximate'' representation. Specifically, we show that for
any $\alpha > 0$, there exist at most $|U|^{2/\alpha}$ submodular
functions $g_i$ such that each $g_i$ satisfies a strong Lipschitz
condition, and for each $S \subset U$, there exists an $i$ such that
$f(S) = g_i(S)$.  We then take advantage of Vondrak's observation in~\cite{Von10} that
Lipschitz submodular functions are \emph{self-bounding},
which allows us to apply recent dimension-free concentration bounds for
self-bounding functions~\cite{BLM00, BLM09}. These concentration results imply
that if we associate each function $g_i$ with its expectation, and
respond to queries $f(S)$ with $\E[g_i(S)]$ for the appropriate $g_i$,
then most queries are answered to within only $\alpha$ additive
error. This yields an algorithm for \emph{learning} submodular
functions over product distributions, which can easily be made privacy
preserving when the values $f(S)$ correspond to queries on a sensitive database.

Our characterization of the query complexity of the release problem in
the SQ model uses the multiplicative weights
method~\cite{LittlestoneWa94,AroraHaKa05} similar to how it was
used recently in~\cite{HardtRo10}. That is we maintain a distribution over the
universe on which the queries are defined. What is new is the observation
that an agnostic learning algorithm for a class~$C$ can be used to find
a query from~$C$ that distinguishes between the true data set and our
distribution as much as possible.
Such a query can then be used in the multiplicative weights
update to reduce the relative entropy between the true data set and our
distribution significantly. Since the relative entropy is nonnegative there
can only be a few such steps before we find a distribution which provides a
good approximation to the true data set on \emph{all} queries in the
class~$C.$

\subsection{Related Work}

\paragraph{Learning Submodular Functions.}
The problem of learning submodular functions was recently introduced by
Balcan and Harvey \cite{BH10}; their PAC-style definition was different
from previously studied point-wise learning
approaches~\cite{GHIM09,FS08}. For product distributions, Balcan and
Harvey give an algorithm for learning monotone, Lipschitz continuous
submodular functions up to constant \emph{multiplicative} error using
only random examples. \cite{BH10} also give strong lower bounds and
matching algorithmic results for non-product distributions. Our main
algorithmic result is similar in spirit, and is inspired by their
concentration-of-measure approach. Our model is different from theirs,
which makes our results incomparable. We introduce a decomposition that
allows us to learn arbitrary (i.e. potentially non-Lipschitz,
non-monotone) submodular functions to constant \emph{additive}
error. Moreover, our decomposition makes value queries to the submodular
function, which are prohibited in the model studied by \cite{BH10}.

\forsubmit{\vspace{-0.15in}}
\paragraph{Information Theoretic Characterizations in Privacy.}
Kasiviswanathan et al. \cite{KLNRS09} introduced the \emph{centralized} and
\emph{local} models of privacy and gave information theoretic
characterizations for which classes of functions could be \emph{learned} in
these models: they showed that information theoretically, the class of
functions that can be learned in the centralized model of privacy is
equivalent to the class of functions that can be agnostically PAC learned, and
the class of functions that can be learned in the local privacy model is
equivalent to the class of functions that can be learned in the SQ model of
Kearns \cite{Kearns98}.

Blum, Ligett, and Roth \cite{BLR08} considered the \emph{query release}
problem (the task of releasing the approximate value of all functions in some
class) and characterized exactly which classes of functions can be information
theoretically released while preserving differential privacy in the \emph{centralized} model of data privacy. They also posed
the question: which classes of functions can be released using mechanisms that
have running time only polylogarithmic in the size of the data universe and the
class of interest? In particular, they asked if conjunctions were such a
class.

In this paper, we give an exact information theoretic characterization of
which classes of functions can be released in the SQ model, and hence in the
local privacy model: we show that it is exactly the class of functions that
can be \emph{agnostically learned} in the SQ model. We note that the agnostic SQ learnability of a class $C$ (and hence, by our result, the SQ releasability of $C$) can also be characterized by combinatorial properties of $C$, as done by Blum et al. \cite{BFJ+94} and recently Feldman \cite{Feldman10}.

\forsubmit{\vspace{-0.15in}}
\paragraph{Lower bounds and hardness results.}
There are also several conditional lower bounds on the running time of private
mechanisms for solving the query release problem. Dwork et al. \cite{DNRRV09}
showed that under cryptographic assumptions, there exists a class of queries
that can be privately released using the inefficient mechanism of \cite{BLR08}, but cannot be
privately released by any mechanism that runs in time polynomial in the dimension of the data
universe (e.g.~$d$, when the data universe is $\{0,1\}^d$).
Ullman and Vadhan \cite{UV10} extended this result to the class of
conjunctions: they showed that under cryptographic assumptions, no polynomial
time mechanism \emph{that outputs a data set} can answer even the set of
$d^2$ conjunctions of two-literals!

The latter lower bound applies only to the class of mechanisms that
output data sets, rather than some other data structure encoding their answers, and only to mechanisms
that answer \emph{all} conjunctions of two-literals with small error. In fact, because there are only
$d^2$ conjunctions of size 2 in total, the hardness result of \cite{UV10} does
not hold if the mechanism is allowed to output some other data structure --
such a mechanism can simply privately query each of the $d^2$ questions.

We circumvent the hardness result of \cite{UV10} by outputting a data
structure rather than a synthetic data set, and by releasing all conjunctions with small \emph{average} error.
Although there are no known computational lower bounds for releasing conjunctions with small average error,
even for algorithms that output a data set, since our algorithm does not output a data set, our approach may be useful in circumventing the lower bounds of \cite{UV10}.

We also prove a new unconditional (information theoretic) lower
bound on algorithms for privately releasing monotone conjunctions that applies
to the class of algorithms that interact with the data using only SQ queries:
no such polynomial time algorithm can release monotone conjunctions with
$o(1)$ average error. We note that our lower bound does not depend on the
output representation of the algorithm. Because almost all known private
algorithms can indeed be implemented using statistical queries, this provides
a new perspective on sources of hardness for private query release.
We note that information theoretic lower bounds on the query complexity imply
lower bounds on the running time of such differentially private algorithms.

There are also many lower bounds on the \emph{error} that must be introduced by any private mechanism, 
independent of its running time.   In particular, Kasiviswanathan 
et.~al.~\cite{KRSU10} showed that average error of $\Omega(1/\sqrt{n})$ is necessary for private mechanisms that answer all conjunction queries of constant size.  Recently, this work was extended by De~\cite{De11} to apply to mechanisms that are allowed to have arbitrarily large error on a constant fraction of conjunction queries of constant size.  These results extend earlier results by Dinur and Nissim~\cite{DinurN03} showing that average error $\Omega(1/\sqrt{n})$ is necessary for random queries.

\forsubmit{\vspace{-0.15in}}
\paragraph{Interactive private query release mechanisms.}
Recently, Roth and Roughgarden \cite{RR10} and Hardt and Rothblum
\cite{HardtRo10} gave interactive private query release mechanisms that allow
a data analyst to ask a large number of questions, while only expending their
privacy budgets slowly. Their privacy analyses depend on the fact that only a
small fraction of the queries asked necessitate updating the internal state of
the algorithm. However, to answer large classes of queries, these algorithms
need to make a large number of statistical queries to the database, even
though only a small number of statistical queries result in update steps!
Intuitively, our characterization of the query complexity of the release
problem in the SQ model is based on two observations: first, that it would be
possible to implement these interactive mechanisms using only a small number
of statistical queries if the data analyst was able to ask only those queries
that would result in update steps, and second, that finding queries that
induce large update steps is exactly the problem of agnostic learning.

\section{Preliminaries}
\label{sec:prelim}

\paragraph{Differential privacy and counting queries.}
We study the question of answering \emph{counting queries} over a
database while preserving differential privacy. Given an arbitrary domain $X$,
we consider databases $D \in X^n$.  We write $n = |D|$.  Two databases $D = (x_1, \dots, x_n)$ and
$D' = (x'_1, \dots, x'_n)$ are called \emph{adjacent} if they differ only in one entry.  That is, there exists $i \in [n]$ such that for every $j \neq i$, $x_j = x'_j$.
We are interested in algorithms (or \emph{mechanisms}) that map databases to
some abstract range $\mathcal{R}$ while satisfying $\epsilon$-differential
privacy:
\begin{definition}[Differential Privacy \cite{DMNS06}]
A mechanism $\mathcal{M}:X^*\rightarrow \mathcal{R}$ satisfies
$\epsilon$-differential privacy if for all $S \subset \mathcal{R}$ and every
pair of two adjacent databases $D,D',$ we have
$\Pr(\mathcal{M}(D) \in S)\leq e^\epsilon \Pr(\mathcal{M}(D') \in S)
\mper$
\end{definition}
A \emph{counting query} is specified by a predicate~$q\colon X\rightarrow
[0,1]$. We will denote the answer to a counting query (with some abuse of
notation) by~$q(D) = \frac1n\sum_{x \in D}q(X)\mper$
Note that a count query can differ by at most $1/n$ on any two adjacent
databases. In particular, adding Laplacian noise of magnitude $1/\epsilon n,$
denoted $\Lap(1/\epsilon n),$ guarantees $\epsilon$-differential privacy on
a single count query (see \cite{DMNS06} for details).

\forsubmit{\vspace{-0.15in}}
\paragraph{The statistical query model and its connection to differential
privacy.}
We will state our algorithms in Kearns' statistical query (SQ) model. In this
model an algorithm~$A^{\cal O}$ can access a distribution~$D$ over a
universe~$X$ only through \emph{statistical queries} to an oracle~${\cal O}.$
That is, the algorithm may ask any query~$q\colon X\to[0,1]$ and the
oracle may respond with any answer~$a$ satisfying $|a-\E_{x\sim
D}q(x)|\le\tau\mper$ Here, $\tau$ is a parameter called the \emph{tolerance}
of the query.

In the context of differential privacy, the distribution~$D$ will typically be
the uniform distribution over a data set of size~$n.$ A statistical query is
then just the same as a counting query as defined earlier. Since SQ algorithms
are tolerant to noise it is not difficult to turn them into differentially
private algorithms using a suitable oracle. This observation is not new, and has been used previously, for example by Blum et al. \cite{BDMN05} and Kasiviswanathan et al. \cite{KLNRS09}.
\begin{proposition}\label{prop:sq2dp}
Let $A$ denote an algorithm that requires $k$ queries of tolerance~$\tau.$
Let ${\cal O}$ denote the oracle that outputs $\E_{x\sim
D}q(x)+\Lap(k/n\epsilon).$ Then, the algorithm $A^{\cal O}$ satisfies
$\epsilon$-differential privacy and with probability at least
$1-\beta,$ the oracle answers all $q$ queries with error at most~$\tau$
provided that
$n\ge \frac{k(\log k + \log(1/\beta))}{\epsilon\tau}\mper$
\end{proposition}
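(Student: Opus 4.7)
The plan is to establish the two assertions (privacy and accuracy) separately, as both follow from standard properties of the Laplace mechanism.

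For privacy, I would first observe that any counting query $q\colon X\to[0,1]$ has sensitivity $1/n$: since changing a single record of $D$ alters the empirical average $\frac{1}{n}\sum_{x\in D}q(x)$ by at most $1/n$. By the standard Laplace-mechanism argument (as in \cite{DMNS06}), adding noise $\Lap(1/(n\cdot(\epsilon/k))) = \Lap(k/n\epsilon)$ to the answer of a single such query yields an $(\epsilon/k)$-differentially private release. Then I would invoke basic composition of differential privacy: the $k$ noisy answers returned by $\mathcal{O}$ over the course of the execution together are $\epsilon$-differentially private. Finally, since $A^{\mathcal{O}}$ only interacts with $D$ through these noisy responses, by post-processing the overall algorithm is $\epsilon$-differentially private.

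For accuracy, I would use the standard tail bound for the Laplace distribution, $\Pr[|\Lap(b)|>t]=e^{-t/b}$. Substituting $b = k/(n\epsilon)$ and $t=\tau$, a single query has error exceeding $\tau$ with probability at most $e^{-\tau n\epsilon/k}$. A union bound over the $k$ queries shows the probability that some query exceeds tolerance $\tau$ is at most $k\,e^{-\tau n\epsilon/k}$. Setting this $\le\beta$ and solving for $n$ gives
\[
n\ \ge\ \frac{k(\log k+\log(1/\beta))}{\epsilon\,\tau},
\]
which is exactly the stated bound.

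There is no real obstacle here: both ingredients (the per-query Laplace guarantee and its tail bound) are textbook, and the only care needed is to correctly track the factor of $k$ introduced by basic composition to ensure that the per-query noise scale $k/n\epsilon$ is sufficient for $\epsilon$-privacy in aggregate, while still being small enough that the union bound for accuracy goes through under the given lower bound on $n$.
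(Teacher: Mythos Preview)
Your proposal is correct and matches the paper's own proof essentially verbatim: privacy via the Laplace mechanism plus basic composition, and accuracy via the Laplace tail bound $\Pr(|\Lap(\sigma)|\ge\tau)\le e^{-\tau/\sigma}$ followed by a union bound over the $k$ queries. The paper's argument is in fact terser than yours, omitting the explicit mention of sensitivity $1/n$ and post-processing, but the structure is identical.
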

\begin{proof}
The first claim follows directly from the properties of the Laplacian
mechanism and the composition property of $\epsilon$-differential privacy.
To argue the second claim note that $\Pr(|\Lap(\sigma)|\ge \tau)\le
\exp(-\tau/\sigma)\mper$ Using that $\sigma = k/n\epsilon$
and the assumption on~$n$, we get that this probability is less
than~$\beta/k.$ The claim now follows by taking a union bound over all $k$~queries.
\end{proof}

\paragraph{Query release.}
A \emph{concept class} (or \emph{query class}) is a set of predicates from $X \to [0,1]$.
\begin{definition}[Query Release]
Let $C$ be a concept class. We say that an algorithm~$A$
\emph{$(\alpha,\beta)$-releases}~$C$ over a data set~$D$ if
$\Pr_{q\sim C}\{|q(D)-A(q)|\le\alpha\}\ge1-\beta\mper$
\end{definition}
Specifically, we are interested in algorithms which release $C$ using few
statistical queries to the underlying data set.
We will study the query release problem by considering the function $f(q) = q(D)$.
In this setting, releasing a concept class $C$ is equivalent to \emph{approximating} the function
$q$ is the following sense

\begin{definition}
We say that an algorithm $A$ \emph{$(\alpha,\beta)$-approximates} a function $f\colon
2^U\to[0,1]$ over a distribution $\cD$ if
$\Pr_{S\sim \cD}\{|f(S)-A(S)|\le\alpha\}\ge1-\beta\mper$
\end{definition}

For many concept classes of interest, the function $f(q) = q(D)$ will be \emph{submodular}, defined
next.

\paragraph{Submodularity.}
  Given a universe $U$, a function $f:2^U\rightarrow\bR$ is called
  \emph{submodular} if for all $S, T \subset U$ it holds that
    $f(S \cup T) + f(S \cap T) \leq f(S) + f(T)\mper$
  We define the \emph{marginal value} of $x$ (or \emph{discrete
    derivative}) at $S$ as $\partial_x f(S)=f(S\cup\{x\})-f(S).$
\begin{fact}\label{fact:marginal}
  A function $f$ is submodular if and only if
    $\partial_x f(S)\ge\partial_x f(T)$
for all $S\subseteq T\subseteq U$ and all $x\in U.$
\end{fact}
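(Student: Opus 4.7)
The plan is to prove the two implications of the equivalence separately, using only elementary manipulation of the submodularity inequality.

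For the forward direction (submodularity $\Rightarrow$ diminishing marginals), I would fix $S\subseteq T\subseteq U$ and an element $x\in U\setminus T$ (the case $x\in S$ makes both marginals vanish, and the intermediate case $x\in T\setminus S$ is handled analogously by observing $T\cup\{x\}=T$). I would then apply the defining submodular inequality to the pair of sets $A=S\cup\{x\}$ and $B=T$. Their union is $T\cup\{x\}$ and their intersection is exactly $S$, since $x\notin T$ and $S\subseteq T$. The resulting inequality
\[
f(T\cup\{x\})+f(S)\;\le\;f(S\cup\{x\})+f(T)
\]
rearranges directly to $\partial_x f(S)\ge \partial_x f(T)$.

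For the reverse direction (diminishing marginals $\Rightarrow$ submodularity), my approach is a telescoping chain argument. Fix arbitrary $A,B\subseteq U$ and enumerate the elements of $B\setminus A$ as $x_1,\ldots,x_k$. Define two parallel chains
\[
S_i \;=\; (A\cap B)\cup\{x_1,\ldots,x_i\}, \qquad T_i \;=\; A\cup\{x_1,\ldots,x_i\},
\]
so that $S_0 = A\cap B$, $S_k = B$, $T_0 = A$, and $T_k = A\cup B$. For each $i$ we have $S_{i-1}\subseteq T_{i-1}$ and $x_i\notin T_{i-1}$ (because the $x_j$ are distinct elements of $B\setminus A$), so the diminishing-returns hypothesis gives $\partial_{x_i} f(S_{i-1})\ge \partial_{x_i} f(T_{i-1})$, i.e.\ $f(S_i)-f(S_{i-1})\ge f(T_i)-f(T_{i-1})$. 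Summing from $i=1$ to $k$ telescopes to $f(B)-f(A\cap B)\ge f(A\cup B)-f(A)$, which is exactly the submodularity inequality after rearrangement.

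There is no real obstacle here beyond careful bookkeeping: one must verify that at each step of the chain $x_i$ genuinely lies outside $T_{i-1}$ so that the diminishing-returns assumption applies in a nontrivial way, and that the chains terminate at the correct sets. Both are immediate from the construction. The entire argument is standard and requires no tools beyond Fact~\ref{fact:marginal}'s own definitions.
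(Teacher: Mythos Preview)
The paper states this as a Fact without proof, so there is no argument to compare against; your two-direction proof is the standard textbook approach and the main steps (applying submodularity to $A=S\cup\{x\}$, $B=T$ for the forward direction, and telescoping along parallel chains for the converse) are correct.

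There is, however, a genuine gap in your treatment of the ``intermediate case'' $x\in T\setminus S$ in the forward direction. You claim it is ``handled analogously by observing $T\cup\{x\}=T$,'' but that observation only tells you $\partial_x f(T)=0$; you still need $\partial_x f(S)\ge 0$, and for non-monotone submodular $f$ this can fail. Concretely, on $U=\{1,2\}$ take $f(\emptyset)=0$, $f(\{1\})=-1$, $f(\{2\})=0$, $f(\{1,2\})=-1$; this $f$ is submodular, yet with $S=\emptyset$, $T=\{1\}$, $x=1$ one gets $\partial_x f(S)=-1<0=\partial_x f(T)$. The issue is really with the statement: the standard and correct formulation of the diminishing-returns characterization restricts to $x\in U\setminus T$ (equivalently $x\notin S$ and $x\notin T$), which is also the only case your converse argument ever invokes. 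With that restriction in place, your proof goes through cleanly; without it, the forward implication is simply false as my example shows.
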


\begin{definition} A function $f: 2^U \to \mathbb{R}$ is $\sens$-Lipschitz if for every $S \subseteq U$ and $x \in U$, $|\partial_x f(S)| \leq \sens$.
\end{definition}

\paragraph{Concentration bounds for submodular functions.}

The next lemma was shown by Vondrak~\cite{Von10} building on concentration
bounds for so-called self-bounding functions due to~\cite{BLM00,BLM09}.

\begin{lemma}[Concentration for submodular functions]
\label{lem:concentration-submodular}
Let $f\colon2^{U}\rightarrow \bR$ be a $1$-Lipschitz submodular function.
Then for any product distribution~$\cD$ over $2^U,$ we have
\begin{equation}
\Pr_{S \sim \cD}\left\{\left|f(S) - \E f(S)\right| \geq t\right\}
\leq 2\exp\left(-\frac{t^2}{2(\E f(S) + 5t/6)}\right),
\end{equation}
where the expectations are taken over $S \sim \cD$.
\end{lemma}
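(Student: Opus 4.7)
The route I would follow is Vondrak's~\cite{Von10}: identify $f$ with a function $g\colon\{0,1\}^U\to\mathbb{R}_{\ge 0}$ of the indicator vector of $S$, verify that $g$ is a \emph{self-bounding function} in the sense of Boucheron--Lugosi--Massart~\cite{BLM00,BLM09}, and then appeal to their Bernstein-type concentration inequality for self-bounding functions of independent coordinates. The product-distribution hypothesis on $\cD$ ensures independence of the coordinates $\{x_i\}_{i\in U}$, so the inequality will apply directly and produce the claimed tail bound.

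Recall that $g$ is self-bounding if there exist maps $g_i\colon\{0,1\}^{U\setminus\{i\}}\to\mathbb{R}$ such that, for every $x$ and every $i$, (a)~$0\le g(x)-g_i(x_{-i})\le 1$, and (b)~$\sum_i\bigl(g(x)-g_i(x_{-i})\bigr)\le g(x)$. I would take $g_i(x_{-i}):=g(x_{-i},0)$, i.e.\ the value of $f$ on the set obtained by deleting element $i$. The upper bound in~(a) is precisely the $1$-Lipschitz hypothesis $|\partial_i f|\le 1$; the lower bound uses the monotonicity accompanying the applications of interest (and, if $f$ is not monotone, is patched by taking instead $g_i(x_{-i}):=\min_{y\in\{0,1\}}g(x_{-i},y)$). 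For~(b), indices with $x_i=0$ contribute nothing, and each $i\in S$ contributes $\partial_i f(S\setminus\{i\})$. Enumerating $S=\{s_1,\dots,s_k\}$ and telescoping,
\[
f(S)-f(\emptyset) \;=\; \sum_{j=1}^{k}\partial_{s_j} f(\{s_1,\dots,s_{j-1}\}) \;\ge\; \sum_{j=1}^{k}\partial_{s_j} f(S\setminus\{s_j\}),
\]
where the inequality applies Fact~\ref{fact:marginal} with $\{s_1,\dots,s_{j-1}\}\subseteq S\setminus\{s_j\}$; combined with $f(\emptyset)\ge 0$, this is exactly condition~(b).

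Once self-boundedness is in hand, the Boucheron--Lugosi--Massart concentration inequality for self-bounding functions of independent coordinates gives
\[
\Pr\bigl\{|g-\E g|\ge t\bigr\}\;\le\; 2\exp\!\left(-\frac{t^2}{2(\E g+5t/6)}\right),
\]
which is precisely the statement of the lemma. I expect the main obstacle to be the careful verification of~(a) and~(b) -- in particular the submodular telescoping used for~(b), and, should one wish to drop the monotonicity assumption, re-deriving~(b) with the $\min$-based definition of $g_i$ (which requires splitting on whether the minimum is attained at $y=x_i$ or at $y=1-x_i$ and invoking Fact~\ref{fact:marginal} once on each part). Beyond that, nothing remains except to cite the BLM inequality.
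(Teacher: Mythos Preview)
Your route is exactly the paper's: it does not prove the lemma but attributes it to Vondrak~\cite{Von10}, who shows self-boundedness and then invokes~\cite{BLM00,BLM09}. So the plan is right.

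There is, however, a real technical slip in the non-monotone case. The definition of ``self-bounding'' you wrote down is the $(1,0)$ version, $\sum_i\bigl(g(x)-g_i(x_{-i})\bigr)\le g(x)$, and you claim the $\min$-based choice of $g_i$ recovers it. It does not. Take the cut function on the single edge $\{a,b\}$: $f(\emptyset)=f(\{a,b\})=0$, $f(\{a\})=f(\{b\})=1$; this is $1$-Lipschitz submodular and nonnegative. At $S=\{a\}$ the $\min$-based $g_i$ give $g_a=g_b=0$, so the sum is $2>1=f(S)$. What \emph{does} hold in general is $(2,0)$-self-bounding, and that is precisely where the constant $5/6=(3\cdot 2-1)/6$ in the stated bound comes from. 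Your proposed split ``$y=x_i$ versus $y=1-x_i$'' is the right first move, but when you carry it out you get two families of positive marginals---one over $i\in S$ (for $f$) and one over $i\notin S$ (equivalently for $\bar f(T)=f(U\setminus T)$)---and your telescoping argument, applied once to each family, bounds each by $f(S)$, giving $2f(S)$ in total. So the fix is simply to replace ``self-bounding'' by ``$(2,0)$-self-bounding'' and cite the $(a,b)$-version of the BLM inequality; the monotone case (where $(1,0)$ does hold) then gives a strictly better constant than the lemma claims, which is fine.

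A second small point: you use $f(\emptyset)\ge 0$ (and implicitly $f\ge 0$) in the telescoping step; the lemma as stated has range $\mathbb{R}$, so you should note that nonnegativity is being assumed---as it is in all the paper's applications, where $f$ takes values in $[0,1]$.
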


We obtain as a simple corollary

\begin{corollary}
\label{cor:concentration-submodular}
Let $f\colon2^{U} \rightarrow [0,1]$ be a $\sens$-Lipschitz submodular function.  Then for any product distribution~$\cD$ over $2^U,$ we have
\begin{equation}
\Pr_{S \sim \cD}\left\{\left|f(S) - \E f(S)\right| \geq \sens t\right\}
\leq 2\exp\left(-\frac{t^2}{2(1/\sens + 5t/6)}\right),
\end{equation}
where the expectations are taken over $S \sim \cD$.
\end{corollary}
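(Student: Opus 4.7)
The plan is to deduce the corollary from Lemma~\ref{lem:concentration-submodular} by a simple rescaling. Given a $\sens$-Lipschitz submodular $f\colon 2^U\to[0,1]$, define $g\colon 2^U\to\bR$ by $g(S)=f(S)/\sens$. Scalar multiplication by a positive constant preserves submodularity (equivalently, by Fact~\ref{fact:marginal}, the marginals satisfy $\partial_x g(S)=\partial_x f(S)/\sens$, so monotonicity of marginals in $S$ is inherited), and the same identity on marginals gives $|\partial_x g(S)|\le 1$, so $g$ is $1$-Lipschitz submodular.

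Next I would apply Lemma~\ref{lem:concentration-submodular} to $g$ at threshold $t$, obtaining
\[
\Pr_{S\sim\cD}\bigl\{|g(S)-\E g(S)|\ge t\bigr\}\le 2\exp\!\left(-\frac{t^2}{2(\E g(S)+5t/6)}\right).
\]
Since $|g(S)-\E g(S)|=|f(S)-\E f(S)|/\sens$, the event on the left is exactly $\{|f(S)-\E f(S)|\ge \sens t\}$, matching the event in the corollary.

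Finally, I would bound the denominator in the exponent. Because $f$ takes values in $[0,1]$, we have $\E g(S)=\E f(S)/\sens\le 1/\sens$, hence $\E g(S)+5t/6\le 1/\sens+5t/6$. Substituting this upper bound into the denominator can only weaken (i.e.\ enlarge) the right-hand side of the inequality, yielding the desired
\[
\Pr_{S\sim\cD}\bigl\{|f(S)-\E f(S)|\ge \sens t\bigr\}\le 2\exp\!\left(-\frac{t^2}{2(1/\sens+5t/6)}\right).
\]
There is no real obstacle here; the only thing to verify carefully is that the rescaled function $g$ is genuinely $1$-Lipschitz submodular so that Lemma~\ref{lem:concentration-submodular} applies, and that replacing $\E g(S)$ by the upper bound $1/\sens$ goes in the correct direction for the inequality.
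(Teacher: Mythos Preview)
Your proposal is correct and is exactly the natural rescaling argument the paper has in mind; the paper itself provides no proof beyond the phrase ``We obtain as a simple corollary,'' and your derivation---scaling $f$ by $1/\sens$ to obtain a $1$-Lipschitz submodular function, applying Lemma~\ref{lem:concentration-submodular}, and then bounding $\E g(S)\le 1/\sens$---is the intended one-line argument.
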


\section{Approximating Submodular Functions}
\label{sec:submodular}

Our algorithm for approximating submodular functions is based on a
structural theorem, together with some strong concentration inequalities
for submodular functions (see Lemma~\ref{lem:concentration-submodular}). 
The structure theorem essentially says that we can decompose any bounded submodular function
into a small collection of Lipschitz submodular functions, one for each region of the domain.
In this section, we prove our structure theorem, present our algorithm, and
prove its correctness.

\subsection{Monotone Submodular Functions}
We begin with a simpler version of the structure theorem.  This version will
be sufficient for approximating bounded monotone submodular functions from
value queries, and will be the main building block in our stronger results,
which will allow us to approximate arbitrary bounded submodular functions,
even from ``tolerant'' value queries.

Our structure theorem follows from an algorithm that decomposes a given
submodular function into Lipschitz submodular functions. The algorithm is
presented next and analyzed in Lemma~\ref{lem:structure}.

\begin{algorithm}
\caption{Decomposition for monotone submodular functions}
\textbf{Input:} Oracle access to a submodular function $f\colon 2^U\to[0,1]$ and parameter
$\sens > 0.$
  \begin{algorithmic}
    \STATE{\textbf{Let} $\prec$ denote an arbitrary ordering of $U$.}
    \STATE{\textbf{Let} $\mathcal{I} \leftarrow \{\emptyset\}$}
    \FOR{$x \in U$ (in ascending order under $\prec$)}
    \STATE{$\cI' \leftarrow \emptyset$}
    \FOR{$B \in \cI$}
    \STATE{\textbf{if} $\partial_{x} f(B) > \sens$ \textbf{then}
      $\cI' \leftarrow \cI' \cup \{ B \cup \{x\} \}$}
    \ENDFOR
    \STATE{$\cI \leftarrow \cI \cup \cI'$}
    \ENDFOR
    \STATE{\textbf{Let}
$\dom(S) = \{x \in U \mid \partial_x f(S) \leq \sens\}$
denote the set of elements that have small
marginal value with respect to $S\subseteq U.$}
  \end{algorithmic}

\textbf{Output:} the 
collection of functions $\cG = \{g^B \mid B \in \cI\},$ where 
for $B \in \cI$ we define the
function $g^{B}: 2^{\dom(B)} \to [0,1]$ as $g^{B}(S) = f(S \cup B).$
\label{alg:decompose}
\end{algorithm}
  
\begin{lemma}
\label{lem:structure}
Given any submodular function $f\colon2^U\to[0,1]$ and $\sens > 0,$
Algorithm~\ref{alg:decompose} makes the following guarantee. There are 
maps $\map,\rest\colon 2^U\to 2^U$ such that:
\begin{enumerate}
\item\emph{(Lipschitz)}  \label{item:Lipschitz} For every $g^B \in \cG$, $g^B$ is submodular and satisfies $\sup_{x \in \dom(B), S \subseteq \dom(B)} \partial_x g^{B}(S) \leq \sens$.
\item\emph{(Completeness)} \label{item:mapping} For every 
$S \subseteq U$, $\map(S) \subseteq S \subseteq \dom(\map(S))$
and $g^{\map(S)}(S) = f(S).$
\item\emph{(Uniqueness)} \label{item:unique} 
For every $S\subseteq U$ and every $B \in \cI$, we have $\map(S) = B$ if and only if $B \subseteq S \subseteq \dom(B)$ and $S \cap \rest(B) = \emptyset$. 
\item\emph{(Size)} \label{item:efficient} The size of $\cG$ is at most $|\cG| =
|U|^{O(1/\sens)}$.  Moreover, given oracle access to $f,$ 
we 
compute $\map, \dom, \rest$ in time $|U|^{O(1/\sens)}$.
  \end{enumerate}
\end{lemma}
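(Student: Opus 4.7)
The first step is to define the two auxiliary maps that the lemma quantifies over but Algorithm~\ref{alg:decompose} does not explicitly produce. Define $\map(S)$ by rerunning the outer loop restricted to $S$: process $x\in U$ in order under $\prec$, and whenever $x\in S$ and $\partial_x f(B_{\mathrm{curr}}) > \sens$, insert $x$ into the running set $B_{\mathrm{curr}}$; let $\map(S)$ be the final value of $B_{\mathrm{curr}}$. Define $\rest(B) = \{y \in U\setminus B : \partial_y f(B \cap \{z : z\prec y\}) > \sens\}$, the ``forbidden'' elements whose presence in $S$ would force branching beyond $B$. With these definitions in hand, the Lipschitz property follows from a single submodularity step: for $x\in\dom(B)$ and $S\subseteq\dom(B)$, $\partial_x g^B(S) = \partial_x f(S\cup B) \leq \partial_x f(B) \leq \sens$ by submodularity ($B\subseteq S\cup B$) and the definition of $\dom(B)$; and $g^B$ itself is submodular on $2^{\dom(B)}$ via the identity $(B\cup S_1)\cap(B\cup S_2) = B\cup(S_1\cap S_2)$.

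Completeness of $\map$ is essentially immediate: $\map(S)\subseteq S$ by construction, so $g^{\map(S)}(S) = f(\map(S)\cup S) = f(S)$, and for $S\subseteq\dom(\map(S))$ I would split into $x\in\map(S)$ (where $\partial_x f(\map(S)) = 0$) and $x\in S\setminus\map(S)$, where the algorithm declined to add $x$, so $\partial_x f(B_{\mathrm{curr}})\leq\sens$ at that moment, and submodularity together with $B_{\mathrm{curr}}\subseteq\map(S)$ upgrades this to $\partial_x f(\map(S))\leq\sens$. Uniqueness is the main obstacle: the argument is driven by the inductive invariant that during the simulation on $S$, the running set $B_{\mathrm{curr}}$ at the start of processing $x$ equals $\map(S)\cap\{z : z\prec x\}$. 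For the forward direction, if $\map(S) = B$ and $y\in S\setminus B$, then $B_{\mathrm{curr}} = B\cap\{z\prec y\}$ at the moment $y$ is processed, and the algorithm chose not to add $y$, giving $\partial_y f(B\cap\{z\prec y\})\leq\sens$ and hence $y\notin\rest(B)$. For the converse, assuming $B\subseteq S$ and $S\cap\rest(B) = \emptyset$, I would induct on $\prec$: the step $x\in B$ is taken because $B\in\cI$ was itself produced by adding $x$ when $\partial_x f(B\cap\{z\prec x\}) > \sens$, and the step $x\in S\setminus B$ is skipped because $x\notin\rest(B)$. Note that $S\subseteq\dom(B)$ actually follows from the other two conditions---for $y\in S\setminus B$, submodularity gives $\partial_y f(B)\leq\partial_y f(B\cap\{z\prec y\})\leq\sens$---so that part of the condition is a convenient redundancy rather than a separate constraint.

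Finally, the size bound rests on the observation that any $B = \{x_{i_1},\ldots,x_{i_k}\}\in\cI$ was produced by $k$ consecutive branching steps, so $f(B) - f(\emptyset) = \sum_j \partial_{x_{i_j}} f(B\cap\{z\prec x_{i_j}\}) > k\sens$; since $f$ ranges in $[0,1]$, this forces $k<1/\sens$, and hence $|\cI|$ is bounded by the number of subsets of $U$ of size less than $1/\sens$, which is $|U|^{O(1/\sens)}$. The outer loop of Algorithm~\ref{alg:decompose} uses $O(|U|\cdot|\cI|)$ oracle calls to build $\cG$, and once $\cG$ is in hand each of $\map(S), \dom(B), \rest(B)$ can be evaluated in $O(|U|)$ oracle calls per query, so everything fits within the stated bound of $|U|^{O(1/\sens)}$.
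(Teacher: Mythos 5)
Your proposal is correct and follows essentially the same route as the paper: the same greedy definition of $\map(S)$, an equivalent (closed-form rather than procedural) definition of $\rest(B)$ as the rejected high-influence elements, the same one-line submodularity argument for the Lipschitz property, and the same telescoping bound for the size of $\cI$. The only cosmetic difference is that you prove uniqueness by a direct prefix-invariant induction where the paper argues by contradiction on the minimal element of $B\triangle\map(S)$; both arguments are sound.
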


Note that the lemma applies to non-monotone submodular functions $f$ as well;
however, since our release algorithm will require the stronger condition
$\sup_{x \in \dom(B), S \subseteq \dom(B)} |\partial_x g(S)| \le \sens$, the
lemma will only be sufficient for releasing monotone submodular functions
(where it holds that $|\partial_xg(S)| \le \sens \iff \partial_xg(S) \le
\sens$). We will return to the non-monotone case later.

\begin{proof}
Algorithm~\ref{alg:decompose} always terminates and we have
the following bound on the size of $\mathcal{I}.$
\begin{claim} \label{clm:setsize}
  $|\mathcal{I}| \leq |U|^{1/\sens}$
\end{claim}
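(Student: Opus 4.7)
The plan is to show that every $B \in \mathcal{I}$ has small size, and then count subsets of $U$ of that size.

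First, I would unwind how sets enter $\mathcal{I}$. A set $B$ belongs to $\mathcal{I}$ only if it was obtained by starting from $\emptyset$ and iteratively adding elements $x_1 \prec x_2 \prec \cdots \prec x_k$ (processed in the outer loop) such that at the time $x_i$ was processed, the set $B_{i-1} = \{x_1,\ldots,x_{i-1}\}$ was already in $\mathcal{I}$ and the marginal $\partial_{x_i} f(B_{i-1}) > \sens$. In particular, each $B \in \mathcal{I}$ has a canonical ``ancestor chain'' $\emptyset = B_0 \subset B_1 \subset \cdots \subset B_k = B$ with $\partial_{x_i} f(B_{i-1}) > \sens$ for all $i \le k$.

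Next, I would telescope. Summing the marginals along the chain gives
\begin{equation*}
f(B) - f(\emptyset) \;=\; \sum_{i=1}^{k} \partial_{x_i} f(B_{i-1}) \;>\; k \sens \mper
\end{equation*}
Since $f$ takes values in $[0,1]$, the left-hand side is at most $1$, hence $k < 1/\sens$. In other words, every $B \in \mathcal{I}$ satisfies $|B| \le 1/\sens$.

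Finally, I would count: $\mathcal{I}$ is a collection of subsets of $U$ each of size at most $1/\sens$, so
\begin{equation*}
|\mathcal{I}| \;\le\; \sum_{k=0}^{\lfloor 1/\sens \rfloor} \binom{|U|}{k} \;\le\; |U|^{O(1/\sens)}\mper
\end{equation*}
I expect no real obstacle: the only subtlety is being consistent about the constant absorbed in the exponent of the stated bound $|U|^{1/\sens}$, which is harmless since the later application in part~\ref{item:efficient} only uses the bound $|U|^{O(1/\sens)}$. The key conceptual step is the telescoping argument, which uses nothing more than the definition of the discrete derivative and the boundedness of $f$; submodularity itself is not needed for this particular claim (though it is crucial elsewhere in the lemma).
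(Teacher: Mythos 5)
Your proof is correct and follows essentially the same route as the paper's: telescope the marginals $\partial_{x_i}f(B_{i-1}) > \sens$ along the construction chain of $B$, use $f(B) \le 1$ to conclude $|B| \le 1/\sens$, and count subsets of that size. Your added remarks (that submodularity is not needed here, and that the binomial sum is only $|U|^{O(1/\sens)}$ rather than literally $|U|^{1/\sens}$) are accurate but immaterial to how the bound is used later.
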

  \begin{proof} Let $B \in \mathcal{I}$ be a set, $B = \{x_1, \dots, x_{|B|}\}$.  Let $B_0 = \emptyset$
  and $B_i = \{x_1, \dots, x_i\}$ for $i = 1,\dots,|B|-1$.  Then
  \begin{equation}
    1 \geq f(B) = \sum_{i = 0}^{|B|-1} \partial_{x_{i+1}} f(B_{i}) > |B| \cdot \sens \mper
  \end{equation}
  Therefore, it must be that $|B| \leq 1/\sens$, and there are at
  most $|U|^{1/\sens}$ such sets over $|U|$ elements.
  \end{proof}
Item~\ref{item:Lipschitz} is shown next.
\begin{claim}[Lipschitz]\label{clm:lipschitz}
   For every $g^{B} \in \cG$, $g^B$ is submodular and $\sup_{x\in \dom(B), S\subseteq \dom(B)} \partial_x g^{B}(S)\le\sens$.
\end{claim}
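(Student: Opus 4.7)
The plan is to reduce both assertions directly to properties of $f$, using the definition $g^B(S) = f(S \cup B)$ and the definition of $\dom(B)$.

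First I would verify submodularity of $g^B$ by a one-line calculation: for any $S_1, S_2 \subseteq \dom(B)$,
\begin{align*}
g^B(S_1) + g^B(S_2) &= f(S_1 \cup B) + f(S_2 \cup B) \\
&\ge f\bigl((S_1 \cup B) \cup (S_2 \cup B)\bigr) + f\bigl((S_1 \cup B) \cap (S_2 \cup B)\bigr) \\
&= f\bigl((S_1 \cup S_2) \cup B\bigr) + f\bigl((S_1 \cap S_2) \cup B\bigr) \\
&= g^B(S_1 \cup S_2) + g^B(S_1 \cap S_2),
\end{align*}
where the inequality is submodularity of $f$ and the intersection step uses $B \cap B = B$ together with distributivity. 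So submodularity of $g^B$ is immediate.

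For the Lipschitz bound, the key observation is that marginals of $g^B$ are marginals of $f$ on shifted sets: for any $x \in \dom(B)$ and $S \subseteq \dom(B)$ with $x \notin S$,
\[
\partial_x g^B(S) = f(S \cup B \cup \{x\}) - f(S \cup B) = \partial_x f(S \cup B).
\]
Since $B \subseteq S \cup B$, the diminishing-returns characterization of submodularity (Fact~\ref{fact:marginal}) gives $\partial_x f(S \cup B) \le \partial_x f(B)$. By the definition of $\dom(B)$, the fact that $x \in \dom(B)$ means exactly that $\partial_x f(B) \le \sens$. Chaining these two inequalities yields $\partial_x g^B(S) \le \sens$, as required.

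This proof is essentially routine once the definition of $\dom(B)$ and the diminishing-returns form of submodularity are in place; there is no real obstacle. The only subtlety worth flagging is that the bound here is one-sided ($\partial_x g^B(S) \le \sens$ rather than $|\partial_x g^B(S)| \le \sens$), which is why the preceding discussion notes that this lemma alone only suffices for monotone $f$: nonnegativity of marginals for monotone $f$ converts the one-sided bound into the absolute-value bound needed for the concentration inequality of Corollary~\ref{cor:concentration-submodular}. Handling the non-monotone case will require the refined decomposition developed later.
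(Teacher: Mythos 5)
Your proof is correct and follows essentially the same route as the paper: both arguments rest on the observation that $g^B$ is a shift of $f$, so its marginals are marginals of $f$ at supersets of $B$, whence $\partial_x g^B(S) = \partial_x f(B \cup S) \le \partial_x f(B) \le \sens$ by diminishing returns and the definition of $\dom(B)$. The only cosmetic difference is that you verify submodularity via the lattice inequality $f(S\cup T)+f(S\cap T)\le f(S)+f(T)$ while the paper checks the equivalent decreasing-marginals condition of Fact~\ref{fact:marginal}; your closing remark about the one-sidedness of the bound matches the paper's own discussion.
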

\begin{proof}
    Submodularity follows from the fact that $g^B$ is a ``shifted'' version of $f$.  Specifically, if $T \subseteq S$,
    then $\partial_x g^{B}(S) = \partial_x f(B \cup S) \leq \partial_x f(B \cup T) = \partial_x g^{B}(T)$, where the inequality is by submodularity of $f$.  
    
    To establish the Lipschitz property, we note that by the definition of $\dom$, $\partial_x f(B) \leq \sens$ for every $x \in \dom(B)$.  Also, by the submodularity of $f$, we have $\partial_x g^{B}(S) = \partial_x f(B \cup S) \leq \partial_x f(B) \leq \sens$.
  \end{proof}

\paragraph{Definition of $\map$ and proof of Item~\ref{item:mapping}.}
Now we turn to constructing the promised mappings $\map$ and $\rest$ in
order to Properties~\ref{item:mapping} and~\ref{item:unique}.  Roughly, we
want $\map(S)$ to choose a maximal set in $\cI$ such that $\map(S) \subseteq
S$, in order to assure that $S \subseteq \dom(\map(S))$.  This task is
complicated by the fact that there could be many such sets.  We want to be
able to choose a unique such set, and moreover, given any such set $B$,
determine efficiently if $\map(S) = B$.  To achieve the former task, we define
a specific, deterministic mapping $\map(S)$ and to achieve the latter we will
carefully define the mapping $\rest$.

We define $\map(S)$ as follows:
  \begin{algorithmic}
    \STATE{\textbf{let} $j \leftarrow 0$, $B_j \leftarrow \emptyset$ }
    \FOR{$x \in U$ (in ascending order under $\prec$)}
    \STATE{\textbf{if} $x \not\in \dom(B_j)$ and $x \in S$ \textbf{then}
      $B_{j+1} \leftarrow B_{j} \cup \{x\}$, $j \leftarrow j+1$}
    \ENDFOR
    \RETURN{$\map(S) = B_j$.}
  \end{algorithmic}
  Note that this procedure is similar to the procedure we use to construct
$\cI$.  To construct $\cI$, we gradually constructed a tree of sets, where
each set $B \in \cI$ had a child for every set $B \cup \{x\}$ such that $x$
has high influence on $B$ ($x \not\in \dom(B)$..  The procedure $\map(S)$
differs in that it only constructs a single root-leaf path in this tree, where
for each $B_j$ in the path, the next set in the path is $B_j \cup \{x\}$ where
$x$ is the \emph{minimal} $x \in S$ that has high influence on $B_j$ (and has
not already been considered by $\map(S)$.  We will use $\path(S) = (B_0
\subset B_1 \subset \dots \subset \map(S))$ to denote this path, which is the
sequence of intermediate sets $B_j$ in the execution of $\map(S)$.  Given
these observations, we can state the following useful facts about $\map$.

\begin{fact} \label{fact:uniquepath}
If $\map(S) = B$, then $\path(S) = \path(B)$.  Moreover, for every $S \in U$, $\path(S) \subseteq \cI$.
\end{fact}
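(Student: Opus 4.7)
My plan is to prove the two claims separately, tackling the second one first since it is simpler and serves as a warm-up for the bookkeeping needed in the first. Both arguments rest on the observation that the construction of $\cI$ and the construction of $\path(S)$ are running essentially the same greedy procedure: both iterate through $U$ in $\prec$-order and grow a set by appending an element $x$ to a current ``base'' $B_j$ exactly when $x \notin \dom(B_j)$ (equivalently, when $\partial_x f(B_j) > \sens$). The construction of $\cI$ branches over all such possibilities, while $\map(S)$ follows a single branch determined by which of these candidate elements lie in $S$.

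For the second claim, $\path(S) \subseteq \cI$, I would induct on the index $j$ of the sets $B_0 \subset B_1 \subset \dots \subset \map(S)$ in $\path(S)$. The base case $B_0 = \emptyset \in \cI$ is immediate since $\cI$ is initialized to $\{\emptyset\}$. For the inductive step, note that the definition of $\map(S)$ builds $B_{j+1} = B_j \cup \{x\}$ only for an element $x$ satisfying $x \notin \dom(B_j)$, which is precisely the trigger ``$\partial_x f(B_j) > \sens$'' used by Algorithm~\ref{alg:decompose} to place $B_j \cup \{x\}$ into $\cI$ whenever $B_j$ is already in $\cI$. So $B_{j+1} \in \cI$, completing the induction.

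For the first claim, let $B = \map(S)$ and enumerate its elements in $\prec$-order as $x_1 \prec x_2 \prec \dots \prec x_{|B|}$, so that $\path(S) = (\emptyset, \{x_1\}, \{x_1,x_2\}, \dots, B)$. I would show by induction on $i$ that when we execute $\map(B)$, after having processed every element $y \preceq x_i$, the current set is exactly $B_i = \{x_1, \dots, x_i\}$. The key point is that elements $y \in U \setminus B$ are ignored by $\map(B)$ because the check ``$y \in B$'' fails, and elements $y \in B$ with $y \prec x_{i+1}$ other than $x_1,\dots,x_i$ do not exist by our enumeration. When the loop finally reaches $x_{i+1}$, we have $x_{i+1} \in B$ and, because $x_{i+1}$ was appended during $\map(S)$ at the exact moment the current set equaled $B_i$, we already know $x_{i+1} \notin \dom(B_i)$; thus $\map(B)$ appends $x_{i+1}$ and produces $B_{i+1}$. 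Running this induction to completion yields $\path(B) = \path(S)$.

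The only real subtlety — and the step I would be most careful about writing out — is verifying that the ``current set'' in the two executions stays synchronized in the presence of elements $y \in S \setminus B$. During $\map(S)$, such an element is examined but not appended, which forces $y \in \dom(B_j)$ for whatever $B_j$ was current at that moment; during $\map(B)$, $y$ is simply skipped because $y \notin B$. In both cases the current set is unchanged, so the $\prec$-indexed trajectories of ``current set'' values coincide on all of $U$, which is what the inductive argument above formalizes.
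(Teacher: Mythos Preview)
Your proof is correct. The paper itself does not give an explicit proof of this Fact: it simply remarks that the procedure defining $\map(S)$ walks a single root-to-leaf branch of the tree of sets that Algorithm~\ref{alg:decompose} builds, and then states the Fact as following from ``these observations.'' Your write-up fleshes out exactly that intuition, and the induction you describe is the right way to make it precise.

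One small remark on the second claim: when you argue that $B_j \in \cI$ implies $B_{j+1} = B_j \cup \{x\} \in \cI$, you are implicitly using that every element of $B_j$ precedes $x$ in $\prec$, so that $B_j$ is already present in $\cI$ at the moment Algorithm~\ref{alg:decompose} processes $x$. This is true (elements are appended in $\prec$-order in both procedures) and is essentially contained in your discussion, but it is worth saying out loud since Algorithm~\ref{alg:decompose} only branches on sets that are in $\cI$ \emph{before} the current element is reached.
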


We can now establish Property~\ref{item:mapping} by the following claim.
 \begin{claim}[Completeness]
\label{clm:complete}
  For every $S \subseteq U$, $\map(S) \subseteq S \subseteq \dom(\map(S))$, and $g^{\map(S)}(S) = f(S)$.
  \end{claim}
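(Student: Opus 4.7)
The plan is to verify each of the three subclaims in Property~\ref{item:mapping} by reading them off the construction of $\map$, with submodularity doing the work only in one place.

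First I would handle $\map(S)\subseteq S$ as an immediate consequence of the loop defining $\map(S)$: the algorithm only ever sets $B_{j+1} \leftarrow B_j\cup\{x\}$ when the tested condition includes $x\in S$, so every element of $\map(S)=B_{\text{final}}$ was drawn from $S$. No submodularity is needed here.

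Next I would prove $S\subseteq \dom(\map(S))$, which is the one step requiring a short argument. Fix $x\in S$. If $x\in\map(S)$, then $\partial_x f(\map(S)) = f(\map(S))-f(\map(S))=0\le\sens$, so $x\in\dom(\map(S))$. Otherwise $x\in S\setminus\map(S)$; let $B_j$ be the value of the running set at the moment $x$ was examined in the loop. Because $x\in S$ but $x$ was not added, the loop's guard forces $x\in\dom(B_j)$, i.e. $\partial_x f(B_j)\le\sens$. Since all subsequent additions only grow the set, $B_j\subseteq \map(S)$, and so by Fact~\ref{fact:marginal} (submodularity)
\[
\partial_x f(\map(S))\;\le\;\partial_x f(B_j)\;\le\;\sens,
\]
giving $x\in\dom(\map(S))$. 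This is the only place where submodularity of $f$ is used, and is the main conceptual step of the claim.

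Finally, for $g^{\map(S)}(S)=f(S)$ I would just unwind definitions: $g^B$ is defined on subsets of $\dom(B)$ by $g^B(T)=f(T\cup B)$, and the previous step guarantees that $S\subseteq\dom(\map(S))$, so $g^{\map(S)}(S)$ is indeed defined. Then using $\map(S)\subseteq S$ from the first step, $S\cup\map(S)=S$, and hence $g^{\map(S)}(S)=f(S\cup\map(S))=f(S)$. I expect no real obstacle beyond being careful that the marginal-comparison inequality above is applied in the correct direction (larger set $\Rightarrow$ smaller marginal), which is exactly Fact~\ref{fact:marginal}.
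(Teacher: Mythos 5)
Your proof is correct and follows essentially the same route as the paper's: both rest on the loop guard of $\map$, the fact that marginals shrink on larger sets (Fact~\ref{fact:marginal}), and the observation that $\partial_x f(B)=0$ when $x\in B$. The only difference is cosmetic --- you argue the inclusion $S\subseteq\dom(\map(S))$ directly by cases, whereas the paper phrases the identical argument as a proof by contradiction.
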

\begin{proof}
Let $\path(S) = B_0 \subset B_1 \subset \dots \subset \map(S)$.  $\map(S)$
always checks that $x \in S$ before including an element $x$, so $\map(S)
\subseteq S$.  To see that $S \subseteq \dom(\map(S))$, assume there exists $x
\in S \setminus \dom(\map(S))$.  By submodularity we have $\partial_x f(B_j)
\geq \partial_x f(\map(S)) > \sens$ for every set $B_j$.  But if $\partial_x
f(B_j) > \sens$ for every $B_j$ and $x \in S$, it must be that $x \in
\map(S)$.  But then $\partial_x f(\map(S)) = 0$, contradicting the fact that
$x \not\in \dom(\map(S))$.
  
  Finally, we note that since $S \subseteq \dom(\map(S))$, $g^{\map(S)}(S)$ is
defined ($S$ is in the domain of $g^{\map(S)}$) and since $\map(S) \subseteq
S$, $g^{\map(S)}(S) = f(\map(S) \cup S) = f(S)$.  
\end{proof}
  
\paragraph{Definition of $T$ and proof of Item~\ref{item:unique}.}
We will now define the mapping $\rest$. The idea is to consider a set $B \in \cI$ and $\path(B)$ and consider all the elements we had to ``reject'' on the way from the root to $B$.  We say that an element $x \in U$ is ``rejected'' if, when $x$ is considered by $\map(S)$, it has high influence on the current set, but is not in $B$.  Since any set $S$ such that $B = \map(S)$ satisfies $\path(S) = \path(B)$ (Fact~\ref{fact:uniquepath}), and any set $S$ that contains a rejected element would have taken a different path, we will get that the elements $x \in \rest(B)$ ``witness'' the fact that $B \neq \map(S)$.  We define the map $\rest(B)$ as follows:
 \begin{algorithmic}
    \STATE{\textbf{let} $j \leftarrow 0$, $B_j \leftarrow \emptyset$, $R \leftarrow \emptyset$ }
    \FOR{$x \in U$ (in ascending order under $\prec$)}
    \STATE{\textbf{if} $x \not\in \dom(B_j)$ and $x \not\in B$ \textbf{then} $R \leftarrow R \cup \{x\}$}
    \STATE{\textbf{else if} $x \not\in V_{B_{j}}$ and $x \in B$ \textbf{then}
      $B_{j+1} \leftarrow B_{j} \cup \{x\}$, $j \leftarrow j+1$}
    \ENDFOR
    \RETURN{$\rest(B) = R$.}
  \end{algorithmic}
  
  We'll establish Property~\ref{item:unique} via the following two claims.
  
  \begin{claim}\label{clm:unique1}
  If $B = \map(S)$, then $B \subseteq S \subseteq \dom(B)$ and $S \cap \rest(B) = \emptyset$.
  \end{claim}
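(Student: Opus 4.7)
The plan is to handle the three conclusions separately. Parts 1 and 2, namely $B \subseteq S \subseteq \dom(B)$, follow immediately from Claim~\ref{clm:complete} applied to $B = \map(S)$: by completeness $\map(S) \subseteq S \subseteq \dom(\map(S))$, and substituting $B$ gives the desired chain of inclusions. So the real content is to show $S \cap \rest(B) = \emptyset$.

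For that, I would first observe that the two procedures $\map(S)$ and $\rest(B)$ both traverse $U$ in the same order $\prec$ and both build up intermediate sets by adding elements with high marginal influence ($x \not\in \dom(B_j)$). By Fact~\ref{fact:uniquepath}, the assumption $\map(S) = B$ forces $\path(S) = \path(B)$, so the sequence of intermediate sets $B_0 \subset B_1 \subset \cdots \subset B$ constructed inside $\map(S)$ coincides, iteration by iteration, with the sequence constructed inside $\rest(B)$. The key observation is that in $\map(S)$, the only elements added to the path are elements of $B$ encountered with high marginal influence on the current $B_j$; and in $\rest(B)$ the path increments only on elements of $B$ with high marginal influence on the current $B_j$. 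Since both scan $U$ under the same order and extend to the same terminal set $B$, the intermediate $B_j$ values are identical at every index of the loop.

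I would then argue by contradiction. Suppose there exists $x \in S \cap \rest(B)$. Being in $\rest(B)$ means that when $\rest(B)$'s scan reached $x$, the current intermediate set $B_j$ satisfied $x \not\in \dom(B_j)$ and $x \not\in B$. By the matching observation above, when $\map(S)$'s scan reaches $x$, the current intermediate set is the same $B_j$. Since $x \in S$ and $x \not\in \dom(B_j)$, the procedure $\map(S)$ would add $x$ to $B_{j+1}$, forcing $x \in \map(S) = B$. This contradicts $x \not\in B$, completing the proof.

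The only delicate point, and the one that I would write out most carefully, is the synchronization claim that the $B_j$'s produced by the two algorithms agree at every step of their common scan through $U$. Once that is established, the contradiction is immediate from the definitions of the if/else branches in the two procedures, so no further calculation is needed.
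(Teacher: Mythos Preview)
Your proof is correct and follows essentially the same approach as the paper's: invoke Claim~\ref{clm:complete} for the inclusions, then use Fact~\ref{fact:uniquepath} to align the intermediate sets of $\map(S)$ and $\rest(B)$, and derive a contradiction from a hypothetical $x \in S \cap \rest(B)$. The paper's proof is terser---it simply asserts that the relevant $B_j$ is the same in both procedures---whereas you spell out the synchronization of the two scans more carefully; but the underlying argument is identical.
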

  \begin{proof}
  	We have already demonstrated the first part of the claim in Claim~\ref{clm:complete}, so we focus on the claim that $S \cap \rest(B) = \emptyset$.
  	By Fact~\ref{fact:uniquepath}, every set $S$ s.t. $B = \map(S)$ satisfies $\path(S) = \path(B)$.  Let $(B_0 \subset B_1 \subset \dots \subset B) = \path(B)$.  Suppose there is an element $x \in S \cap \rest(B)$.  Then there is a set $B_j$ such that $x \not\in \dom(B_j)$ and $x \not\in B$.  But since $x \not\in \dom(B_j)$ and $x \in S$, it must be that $x \in B_{j+1}$, contradicting the fact that $B_{j+1} \subseteq B$.
  \end{proof}
  
  Now we establish the converse.
  \begin{claim}\label{clm:unique2}
If $B \subseteq S \subseteq \dom(B)$, $S \cap \rest(B) = \emptyset$, then $B = \map(S)$.
  \end{claim}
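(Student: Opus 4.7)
The plan is to couple the executions of the procedures defining $\map(S)$ and $\rest(B)$ and show that they maintain identical internal ``current sets'' at every step of their iteration over $U$, eventually forcing $\map(S) = B$.

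First, I would observe that the procedure computing $\rest(B)$, although its stated output is the reject set $R$, also internally maintains a sequence of sets $\emptyset = B_0 \subset B_1 \subset \cdots \subset B_k$. Since $B \in \cI$, the inductive definition of $\cI$ guarantees that $B$ can be built from $\emptyset$ by adding the elements of $B$ one at a time in the order $\prec$, each time to a set on which the new element has marginal value greater than $\sens$. This is exactly the update rule that $\rest(B)$ applies on encountering an element of $B$, so its internal trajectory coincides with $\path(B)$ and its final current set is $B_k = B$.

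Next, I would run $\map(S)$ and $\rest(B)$ in lockstep, since both process $U$ in the order $\prec$. Writing $C^{(i)}$ for the current set of $\map(S)$ and $D^{(i)}$ for that of $\rest(B)$ after processing the first $i$ elements of $U$, I would show $C^{(i)} = D^{(i)}$ by induction on $i$. The base case is immediate. For the inductive step, set $C = C^{(i-1)} = D^{(i-1)}$ and let $x$ be the $i$-th element. If $x \in \dom(C)$, neither procedure updates. If $x \notin \dom(C)$ and $x \in B$, then $\rest(B)$ extends its set to $C \cup \{x\}$, and since the hypothesis $B \subseteq S$ gives $x \in S$, $\map(S)$ does the same. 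If $x \notin \dom(C)$ and $x \notin B$, then $\rest(B)$ adds $x$ to its reject set, witnessing $x \in \rest(B)$; the hypothesis $S \cap \rest(B) = \emptyset$ then forces $x \notin S$, so $\map(S)$ does not update either. Combined with the first step, this yields $\map(S) = C^{(|U|)} = D^{(|U|)} = B$, as desired.

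The main obstacle is the first step: one has to verify that the internal trajectory of $\rest(B)$ really does trace out $\path(B)$ and terminates at $B$, which requires carefully aligning the definition of $\cI$ with the update rule of $\rest(B)$. Once this alignment is established, the lockstep induction reduces to a clean three-case check driven by the position of $x$ relative to $B$ and $\dom(C)$, with each hypothesis of the claim being used in exactly one case ($B \subseteq S$ in the second case, $S \cap \rest(B) = \emptyset$ in the third, while $S \subseteq \dom(B)$ is a consequence rather than a driving hypothesis).
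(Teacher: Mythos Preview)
Your proof is correct. The paper argues by contradiction: assuming $\map(S) = B' \neq B$, it takes the $\prec$-minimal $x \in B \triangle B'$, notes that $\path(B)$ and $\path(B')$ agree up to the step at which $x$ is processed, and then derives a contradiction separately in the cases $x \in B \setminus B'$ (using $B \subseteq S$) and $x \in B' \setminus B$ (using $S \cap \rest(B) = \emptyset$). Your forward lockstep induction packages the same two-case analysis more directly, and you are right that the hypothesis $S \subseteq \dom(B)$ is never invoked --- the paper's proof does not use it either. Your first step, that the internal trajectory of $\rest(B)$ coincides with $\path(B)$ and terminates at $B$, is exactly what the paper leans on implicitly when it writes ``which implies $x \in \rest(B)$, by construction''; both arguments rest on this alignment, and you are right to flag it as the point requiring care.
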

  \begin{proof}
  Suppose for the sake of contradiction that there a set $B' \neq B$ such that $B' = \map(S)$.  There exists an element $x \in B \triangle B'$, and we consider the minimal such $x$ under $\prec$.  Let $\path(B) = (B_0 \subset B_1 \subset \dots \subset B)$ and $\path(S) = \path(B') = (B'_0 \subset B'_1 \subset \dots \subset B')$.  Since $x$ is minimal in $B \triangle B'$, there must be $j$ be such that $B_i = B'_i$ for all $i \leq j$, but $x \in B_{j+1} \triangle B'_{j+1}$.  Consider two cases:
  \begin{enumerate}
  \item $B \supset B'$.  Thus $x \in B \setminus B'$  Moreover, since $x \in B \subseteq S$, it must be that when $x$ was considered in the execution of $\map(S)$, and $B'_{j}$ was the current set, it was the case that $x \in \dom(B'_j)$.  But $B_j = B'_j$, so $x \in \dom(B_j)$, contradicting the fact that $x \in B_{j+1}$.

  \item $B \not\supset B'$.  Thus $x \in B' \setminus B$.  Since $x \in B' = \map(S) \subseteq S$ (Claim~\ref{clm:complete}), we have $x \in S$.  Moreover, since $x \in B'_{j+1}$ we must have $x \not\in \dom(B'_{j}) = \dom(B_{j})$.  Thus we have $x \not\in \dom(B_{j})$ and $x \not\in B$, which implies $x \in \rest(B)$, by construction.  Thus $S \cap \rest(B) \neq \emptyset$, a contradiction.
  \end{enumerate} 
  \end{proof}
The previous two claims establish Item~\ref{item:unique}.

Finally we observe that the enumeration of $\cI$ requires time at most $|U|
\cdot |\cI| = |U|^{O(1/\sens)}$, since we iterate over each element of $U$ and
then iterate over each set currently in $\cI$.  We also note that we can
compute the mappings $\map$ and $\rest$ in time linear in $|\cI| =
|U|^{O(1/\sens)}$ and can compute $\dom(B)$ in time linear in $|U|$.  These
observations establish Property~\ref{item:efficient} and complete the proof of
Lemma~\ref{lem:structure}.
\end{proof}

\begin{algorithm}
\caption{Decomposition for monotone submodular functions from tolerant queries}
\textbf{Input:} Tolerant oracle access to a submodular function $f\colon 2^U\to[0,1]$ with tolerance at most $\sens/12$ and parameter $\sens > 0.$
  \begin{algorithmic}
    \STATE{\textbf{Let} $\tolf$ denote the function specified by tolerant oracle queries to $f$ such that for every $S \subseteq U$ $$|f(S) - \tolf(S)| \leq \sens/12.$$}
    \STATE{\textbf{Let} $\prec$ denote an arbitrary ordering of $U$.}
    \STATE{\textbf{Let} $\mathcal{I} \leftarrow \{\emptyset\}$}
    \FOR{$x \in U$ (in ascending order under $\prec$)}
    \STATE{$\cI' \leftarrow \emptyset$}
    \FOR{$B \in \cI$}
    \STATE{\textbf{if} $\partial_{x} \tolf(B) > \sens/3$ \textbf{then}
      $\cI' \leftarrow \cI' \cup \{ B \cup \{x\} \}$}
    \ENDFOR
    \STATE{$\cI \leftarrow \cI \cup \cI'$}
    \ENDFOR
    \STATE{\textbf{Let}
$\dom(S) = \{x \in U \mid \partial_x \tolf(S) \leq 2\sens/3\}$
denote the set of elements that have small
marginal value with respect to $S\subseteq U.$}
  \end{algorithmic}

\textbf{Output:} the 
collection of functions $\cG = \{g^B \mid B \in \cI\},$ where 
for $B \in \cI$ we define the
function $g^{B}: 2^{\dom(B)} \to [0,1]$ as $g^{B}(S) = f(S \cup B).$
\label{alg:tolerantdecompose}
\end{algorithm}
  
\begin{lemma}[Lemma~\ref{lem:structure} with tolerance]
\label{lem:tolerantstructure}
Given any submodular function $f\colon2^U\to[0,1]$ and $\sens > 0,$
Algorithm~\ref{alg:tolerantdecompose} makes the following guarantee. There are 
maps $\map,\rest\colon 2^U\to 2^U$ satisfying properties 1-4 of Lemma~\ref{lem:structure} and moreover, can be computed using tolerant queries to $f$ with tolerance $\sens/12$.
\end{lemma}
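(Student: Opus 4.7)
The plan is to mimic the proof of Lemma~\ref{lem:structure} essentially verbatim, with the key observation that the gap between the two thresholds used in Algorithm~\ref{alg:tolerantdecompose} ($\sens/3$ for admission into $\cI$ versus $2\sens/3$ for membership in $\dom(B)$) is wide enough to absorb the tolerance error. Specifically, since each value $\tolf(S)$ is within $\sens/12$ of $f(S)$, every marginal satisfies
\[
|\partial_x \tolf(S) - \partial_x f(S)| \le 2 \cdot \sens/12 = \sens/6.
\]
This means: (a) if $\partial_x \tolf(B) > \sens/3$ (so $x$ is added to produce a new set in $\cI$), then $\partial_x f(B) > \sens/6$; and (b) if $\partial_x \tolf(B) \le 2\sens/3$ (so $x \in \dom(B)$), then $\partial_x f(B) \le 5\sens/6 < \sens$. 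These two facts are exactly what drives all four conclusions of the lemma.

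First I would prove the size bound. Iterating through a chain $B_0 \subset B_1 \subset \cdots \subset B$ in $\cI$, observation (a) gives $\partial_{x_{i+1}} f(B_i) > \sens/6$, so $1 \ge f(B) = \sum_i \partial_{x_{i+1}} f(B_i) > |B| \cdot \sens/6$, yielding $|B| \le 6/\sens$ and $|\cI| \le |U|^{6/\sens} = |U|^{O(1/\sens)}$. This handles Property~\ref{item:efficient}; the runtime analysis is identical to before. Next, for Property~\ref{item:Lipschitz}, submodularity of each $g^B$ is purely a ``shift'' argument that does not depend on how $\dom(B)$ was computed. For the Lipschitz bound, observation (b) combined with submodularity of $f$ gives $\partial_x g^B(S) = \partial_x f(B \cup S) \le \partial_x f(B) \le 5\sens/6 \le \sens$ for every $x \in \dom(B)$ and $S \subseteq \dom(B)$, exactly as required.

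For Properties~\ref{item:mapping} and~\ref{item:unique}, I would redefine $\map(S)$ and $\rest(B)$ by the same procedures as in the proof of Lemma~\ref{lem:structure}, but with every test ``$x \notin \dom(B_j)$'' now referring to the tolerant definition (i.e.\ $\partial_x \tolf(B_j) > 2\sens/3$). Fact~\ref{fact:uniquepath} then goes through unchanged because $\map$ and the enumeration of $\cI$ rely only on deterministic tests against the fixed function $\tolf$. For completeness (Claim~\ref{clm:complete}), the only nontrivial step is showing $S \subseteq \dom(\map(S))$: if $x \in S \setminus \dom(\map(S))$, then $\partial_x \tolf(\map(S)) > 2\sens/3$, hence $\partial_x f(\map(S)) > \sens/2$; by submodularity of $f$, $\partial_x f(B_j) > \sens/2$ at every $B_j$ on $\path(S)$, which translates back to $\partial_x \tolf(B_j) > \sens/2 - \sens/6 = \sens/3$, so $x$ would have been admitted into $\map(S)$. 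But then $\partial_x f(\map(S)) = 0$, so $\partial_x \tolf(\map(S)) \le \sens/6 < 2\sens/3$, contradicting $x \notin \dom(\map(S))$.

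Property~\ref{item:unique} splits into the two directions of Claims~\ref{clm:unique1} and~\ref{clm:unique2}; in each, the argument is a syntactic copy of the original proof with $f$ replaced by $\tolf$ in the algorithmic tests, since those tests are what determine $\path$, $\map$, and $\rest$. The main obstacle, which is really the only nontrivial point, is tracking the tolerance through the submodularity argument in the completeness proof to ensure the gap $2\sens/3 - \sens/3 = \sens/3$ strictly exceeds $2 \cdot \sens/6 = \sens/3$ by a comfortable enough margin; the chosen constants above show this works. All other steps go through with the minor reparameterization noted.
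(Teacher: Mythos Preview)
Your completeness argument has a genuine gap. You define $\map$ using the test ``$x \notin \dom(B_j)$,'' i.e.\ $\partial_x \tolf(B_j) > 2\sens/3$, and then for $x \in S \setminus \dom(\map(S))$ you derive only $\partial_x \tolf(B_j) > \sens/3$. Since $\sens/3 < 2\sens/3$, this does \emph{not} force $x$ to be admitted into $\map(S)$ under your own rule. More generally, if $\map$ and $\dom$ share a single threshold $\theta$, the chain
\[
\partial_x \tolf(\map(S)) > \theta \;\Longrightarrow\; \partial_x f(B_j) > \theta - \sens/6 \;\Longrightarrow\; \partial_x \tolf(B_j) > \theta - \sens/3
\]
can never recover back up to $\theta$, regardless of the value of $\theta$. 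Your own closing check---that the gap $2\sens/3 - \sens/3 = \sens/3$ must strictly exceed $2\cdot\sens/6 = \sens/3$---correctly identifies the needed inequality but then fails it (equality is not strict excess), so the conclusion ``the chosen constants above show this works'' is simply wrong.

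The paper's fix is to decouple the two thresholds: $\dom$ stays at $2\sens/3$, but $\map$ and $\rest$ are defined via an auxiliary set $\toldom(B) = \{x : \partial_x \tolf(B) \le \sens/3\}$, i.e.\ the \emph{lower} threshold $\sens/3$ that already matches the $\cI$-construction. Now the derived bound $\partial_x \tolf(B_j) > \sens/3$ is exactly what is needed to force $x \notin \toldom(B_j)$ and hence $x \in \map(S)$, completing the contradiction. All other pieces of your sketch---the size bound, the Lipschitz property, uniqueness, and Fact~\ref{fact:uniquepath}---are essentially correct and match the paper; but using a strictly lower threshold for $\map$ than for $\dom$ is the one new idea the tolerant setting requires, and your proposal misses it.
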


\begin{proof}
Throughout the proof, we will assume that the oracle always gives the same answer to each query.  Thus the function $\tolf$ defined in Algorithm~\ref{alg:tolerantdecompose} is well defined.  Note
that $\tolf(S)$ need not be submodular even if $f$ is, however, we can assume that we have exact oracle
access to $\tolf(S)$.  Also note that, since we can compute $\partial_x f(S)$ using two queries to $f$, we are guaranteed that for every $S \subseteq U$, and $x \in U$, 
\begin{equation}\label{eq:almostsubmodular}
|\partial_x \tolf(S) - \partial_x f(S)| \leq \sens/6.
\end{equation}

Observe that Algorithm~\ref{alg:tolerantdecompose} differs from Algorithm~\ref{alg:decompose} only in the choice of parameters.  The analysis required to establish the Lemma is also a natural modification of the analysis of Lemma~\ref{lem:structure}, so we will refer the reader to the proof of that Lemma for several details and only call attention to the steps of the proof that require modification.

We will proceed by running through the construction of Lemma~\ref{lem:structure} on $\tolf(S)$ using $\sens/3$ as the error parameter.  Since the argument is a fairly straightforward modification to Lemma~\ref{lem:structure}, we will refer the reader to the proof of that Lemma for several details, and only call attention to the steps of the proof that require modification.

First, we establish a bound on the size of $\cI$
\begin{claim} \label{clm:tolerantsetsize}
  $|\mathcal{I}| \leq |U|^{6/\sens}$
  \end{claim}
  \begin{proof} Let $B \in \mathcal{I}$ be a set, $B = \{x_1, \dots, x_{|B|}\}$.  Let $B_0 = \emptyset$
  and $B_i = \{x_1, \dots, x_i\}$ for $i = 1,\dots,|B|-1$.  Then
  \begin{equation}
    1 \geq f(B) = \sum_{i = 0}^{|B|-1} \partial_{x_{i+1}} f(B_{i}) \geq \sum_{i=0}^{|B|-1} \left( \partial_{x_{i+1}} \tolf(B_{i}) - \sens/6 \right) > |B| \cdot (\sens/3 - \sens/6) = |B| \cdot \sens/6 \mper
  \end{equation}
  Therefore, it must be that $|B| \leq 6/\sens$, and there are at
  most $|U|^{6/\sens}$ such sets over $|U|$ elements.
  \end{proof}
  
  Item~\ref{item:Lipschitz} is shown next.
\begin{claim}[Lipschitz]
   For every $g^{B} \in \cG$, $g^B$ is submodular and $\sup_{x\in \dom(B), S\subseteq \dom(B)} \partial_x g^{B}(S)\le\sens$.
\end{claim}
\begin{proof}
    The proof of submodularity is identical to Claim~\ref{clm:lipschitz}
    
    To establish the Lipschitz property, observe that for every $B \subseteq U$, and every $x \in \dom(B)$, $\partial_x f(B) \leq \partial_x \tolf(B) + \sens/6 \leq \sens$.
\end{proof}

\paragraph{Definition of $\map$ and proof of Item~\ref{item:mapping}.}
In addition to the sets $\dom(B) = \{ x \in U \mid \partial_x \tolf(B) \leq 2\sens/3\}$, we will define the sets $\toldom(B) = \{ x \in U \mid \partial_x \tolf(B) \leq \sens/3\}$, note that for every $B \subseteq U$, $\toldom(B) \subseteq \dom(B)$.  We define the promised mapping $\map(S)$ in the the same manner as in the proof of Lemma~\ref{lem:structure}, but we use $\toldom$ in place of $\dom$ to decide whether or not we select an element $x$ for inclusion in the set $\map(S)$.

Now we establish Property~\ref{item:mapping} via the following claim, analogous to Claim~\ref{clm:complete} in the proof of Lemma~\ref{lem:structure}
  \begin{claim}[Completeness] \label{clm:tolerantcomplete}
  For every $S \subseteq U$, $\map(S) \subseteq S \subseteq \dom(\map(S))$.  Moreover, $g^{\map(S)}(S) = f(S)$.
  \end{claim}
  \begin{proof}
  Let $\path(S) = B_0 \subset B_1 \subset \dots \subset \map(S)$.
 The fact that $\map(S) \subseteq S$ follows as in Claim~\ref{clm:complete}.  To see that $S \subseteq \dom(\map(S))$, assume there exists $x \in S \setminus \dom(\map(S))$.  By submodularity of $f$, and~\eqref{eq:almostsubmodular}, we have 
 $$
 \partial_x \tolf(B_j) \geq \partial_x f(B_j) - \sens/6 \geq \partial_x f(\map(S)) - \sens/6 > \partial_x \tolf(\map(S)) - \sens/3 > \sens/3.
 $$
 Thus, $\partial_x \tolf(B_j) > \sens/3$ for every set $B_j$.  But if $\partial_x f(B_j) > \sens/3$ for every $B_j$ and $x \in S$, then $x \not\in \toldom(B_j)$ for every $B_j$, and it must be that $x \in \map(S)$.  But then $\partial_x f(\map(S)) = 0$, contradicting the fact that $x \not\in \dom(\map(S))$.
  
The fact that $g^{\map(S)}(S) = f(S)$ follows as in the proof of Claim~\ref{clm:complete}.
  \end{proof}

\paragraph{Definition of $\rest$ and proof of Item~\ref{item:unique}.}
We also define the promised mapping $\rest(S)$ in the same manner as in the proof of Lemma~\ref{lem:structure}, but using $\toldom$ in place of $\dom$ to decide whether or not we select an element $x$ for inclusion in the set $\map(S)$.

To establish Property~\ref{item:unique}, we note that the proofs of Claims~\ref{clm:unique1} and~\ref{clm:unique2} do not rely on the submodularity of $f$, therefore they apply as-is to the case where we compute on $\tolf$, even though $\tolf$ is not necessarily submodular.

Property~\ref{item:efficient} also follows as in the proof of Lemma~\ref{lem:structure}.  This completes the proof of the Lemma.
\end{proof}

\newcommand{\comp}[1]{\overline{#1}}

We now present our algorithm for learning monotone submodular functions
over product distributions. For a subset of the universe $V \subseteq
U$, let $\cD_V$ denote the distribution $\cD$ restricted to the
variables in $V$. Note that if $\cD$ is a product distribution, then
$\cD_V$ remains a product distribution and is easy to sample from.

\begin{algorithm}
  \caption{Approximating a monotone submodular function from tolerant queries}
  \textbf{Learn}($f,\alpha,\beta, \mathcal{D}$)
  \begin{algorithmic}
    \STATE Let $\sens = \frac{\alpha^2}{6\log(2/\beta)}.$

    \STATE \textbf{Construct} the collection of functions~$\cG$ returned by Algorithm~\ref{alg:tolerantdecompose} and let $\map, \dom, \rest$ be the associated mappings given by Lemma~\ref{lem:tolerantstructure} with parameter~$\sens.$ 

    \STATE \textbf{Estimate} the value
    $\mu_{g^B}=\E_{S\sim\cD_{\dom(B) \setminus \rest(B)}}[g^B(S)]$ for each
    $g^B\in\cG.$ 

    \STATE \textbf{Output} the data structure~$h$ that consists of the values $\mu_{g^{B}}$ for every $g^{B} \in \cG$ as well as the mapping $\map$.
\end{algorithmic}
\label{alg:approx-submodular0}
\end{algorithm}

\begin{theorem}\label{thm:approx0}
  For any $\alpha, \beta \in (0,1]$, Algorithm~\ref{alg:approx-submodular}
  $(\alpha,\beta)$-approximates any submodular function
  $f\colon2^U\to[0,1]$ under any product distribution $\cD$ in time
  $|U|^{O(\alpha^{-2}\log(1/\beta))}$ using oracle queries to~$f$ of tolerance
  $\alpha^2 / 72 \log(2/\beta)$.
\end{theorem}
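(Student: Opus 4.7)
The plan is to combine the decomposition from Lemma~\ref{lem:tolerantstructure} with the submodular concentration bound in Corollary~\ref{cor:concentration-submodular}. Set $\sens = \alpha^2/(6\log(2/\beta))$. Lemma~\ref{lem:tolerantstructure} then produces $\cG = \{g^B : B \in \cI\}$ with each $g^B$ a $\sens$-Lipschitz submodular function on $2^{\dom(B)}$, together with the maps $\map, \dom, \rest$, using only $\sens/12$-tolerant queries. Since $|\cI| \leq |U|^{O(1/\sens)} = |U|^{O(\log(2/\beta)/\alpha^2)}$ and $\sens/12 = \alpha^2/(72\log(2/\beta))$, the runtime and tolerance bounds claimed in the theorem follow immediately.

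The core of the argument is to show that $h(S) := \mu_{g^{\map(S)}}$ satisfies $|h(S) - f(S)| \leq \alpha$ for all but a $\beta$-fraction of $S \sim \cD$. By Completeness, $f(S) = g^{\map(S)}(S)$, so I partition by $B = \map(S)$ and need only bound $\Pr[\,|g^B(S) - \mu_{g^B}| > \alpha \mid \map(S) = B\,] \leq \beta$ for every $B$. To identify this conditional distribution I appeal to the Uniqueness property: $\map(S) = B$ is exactly the event that the coordinates in $B$ are set to $1$ and those in $(U\setminus \dom(B)) \cup \rest(B)$ to $0$. Because $\cD$ is a product distribution, the free coordinates $W := \dom(B)\setminus(B \cup \rest(B))$ remain mutually independent with their original marginals, so conditioned on $\map(S)=B$ we have $S = B \cup S_1$ with $S_1 \sim \cD_W$. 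The function $h^B(S_1) := f(B \cup S_1) = g^B(B \cup S_1)$ on $2^W$ inherits submodularity and the $\sens$-Lipschitz bound from $g^B$ (monotonicity of $f$ upgrades the one-sided bound in Item~1 of Lemma~\ref{lem:tolerantstructure} to the two-sided Lipschitz condition required by the preliminaries). By the same product-marginalization argument applied to $\cD_{\dom(B) \setminus \rest(B)}$, one checks $\mu_{g^B} = \E_{S_1 \sim \cD_W}[h^B(S_1)]$.

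Now I apply Corollary~\ref{cor:concentration-submodular} to $h^B$ under $\cD_W$ with $t = \alpha/\sens$. The exponent becomes $\alpha^2/(2\sens(1 + 5\alpha/6)) \geq 3\alpha^2/(11\sens) \geq \log(2/\beta)$ using $\alpha \leq 1$ and the choice of $\sens$, so the conditional probability is at most $\beta$; summing, weighted by $\Pr[\map(S)=B]$, yields the $\beta$ bound on the total failure probability. The main subtlety, and the step that drives everything else, is identifying the conditional law of $S \mid \map(S) = B$ as a product distribution on $W$, which is precisely what permits a direct application of the submodular concentration bound.

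The only remaining issue is that the algorithm knows the $\mu_{g^B}$ only through empirical estimates: for each $B$ it averages $g^B(S') = f(S' \cup B)$ over samples $S' \sim \cD_{\dom(B) \setminus \rest(B)}$, each resolved by a single $\sens/12$-tolerant oracle call to $f$. A standard Hoeffding plus union-bound argument shows that $\poly(1/\alpha, \log(|\cI|/\beta))$ such samples suffice to approximate every $\mu_{g^B}$ within additive error $O(\alpha)$, and this extra error can be absorbed by shrinking the constant in the definition of $\sens$ without changing the claimed asymptotics on runtime or tolerance.
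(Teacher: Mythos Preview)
Your proposal is correct and follows essentially the same approach as the paper: decompose via Lemma~\ref{lem:tolerantstructure}, partition by $B=\map(S)$, use Uniqueness to identify the conditional law as a product distribution, and apply Corollary~\ref{cor:concentration-submodular}. You are in fact more explicit than the paper on two points it glosses over---the identification of the conditional law on $W=\dom(B)\setminus(B\cup\rest(B))$ (and why $\mu_{g^B}$ computed over $\cD_{\dom(B)\setminus\rest(B)}$ coincides with the expectation over $\cD_W$), and the empirical estimation of the $\mu_{g^B}$---and you correctly flag that monotonicity is what converts the one-sided marginal bound into the two-sided Lipschitz condition needed for the concentration corollary.
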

\begin{proof}
  For a set $S \subseteq U$, we let $B = \map(S)$ and $g^{B}$ be
  the corresponding submodular function as in Lemma~\ref{lem:tolerantstructure}.  Note that since
  the queries have tolerance $\alpha^2 / 72 \log(1/\beta) \leq \sens/12$, the lemma applies.
  We will analyze the error probability as if the estimates $\mu_{g^{B}}$ were computed using
  exact oracle queries to $f$, and will note that using tolerant queries to $f$ can only introduce an additional
  error of $\alpha^2/72\log(1/\beta) \leq \alpha/6$.
  We claim that, under this condition
  \begin{align}\label{eq:f-to-g0}
    \Pr_{S\sim \cD}\left\{ \left|f(S)-h(S)\right| > 5\alpha/6 \right\}
    &=\Pr_{S \sim \cD}\left\{
      \left|g^{\map(S)}(S)-\mu_{g^{\map(S)}}\right|> 5\alpha/6 \right\} \notag \\
     &= \sum_{g^{B} \in \cG} \Pr_{S\sim \cD}\left\{ B = \map(S) \right\} \cdot \Pr_{S \sim \cD}\left\{ \left|g^{B}(S) - \mu_{g^{B}}\right| >5\alpha/6 \ | \ B = \map(S)\right\} \mper
  \end{align}
  To see this, recall that for every $S \subseteq U$, $g^{\map(S)}(S) = f(S)$.  By Property~\ref{item:unique} of Lemma~\ref{lem:structure}, the condition that $B = \map(S)$ is equivalent to the conditions that $B \subseteq S \subseteq \dom(B)$ and $S \cap \rest(B) = \emptyset$.  Hence,
  \[
   \Pr_{S \sim \cD}\left\{ \left|g^{B}(S) - \mu_{g^{B}}\right| > 5\alpha/6 \ | \ B = \map(S)\right\}
   = \Pr_{S \sim \cD_{\dom(B) \setminus \rest(B)}}\left\{ \left|g^{B}(S) - \mu_{g^{B}}\right| > 5\alpha/6 \right\}\mper
  \]
   Now, applying the concentration inequality for submodular
  functions stated as Corollary~\ref{cor:concentration-submodular}, we
  get
  \begin{equation}
    \Pr_{S\sim\cD_{V_B\setminus T_B}}\left\{\left|g^B(S) - \mu_{g^B}\right|\geq \sens t\right\}
    \leq 2\exp\left(-\frac{t^2}{2(1/\sens + 5/6t)}\right)\mper
  \end{equation}
  Plugging in $t = 5\alpha/6\sens =
  \frac{5 \log(2/\beta)}{\alpha}$ and simplifying we get
  $\Pr_{S\sim \cD_{V(B) \setminus T(B)}}\left\{\left|g^{B}(S) - \mu_{g^{B}}\right| > \alpha\right\} \leq
  \beta\mper$ Combining this with~\eqref{eq:f-to-g0}, the claim
  follows.
\end{proof}

\subsection{Non-monotone Submodular Functions}

For non-monotone functions, we need a more refined argument. Our main structure theorem replaces Property~\ref{item:Lipschitz} in
Lemma~\ref{lem:structure} by the stronger guarantee that
$|\partial_x g(S)|\le\alpha$ for all $g\in\cG$, even for non-monotone submodular functions.
Observe that for a submodular function $f:2^V\rightarrow \bR$, 
the function $\comp{f}\colon 2^V\to\bR$ defined as
$\comp{f}(S)=f(V\backslash S)$ is also submodular; moreover
\begin{equation}
  \label{eq:lip}
  \inf_{x \in V,S \subseteq V}\,\partial_x \comp{f}(S)=-\sup_{x \in V,S \subseteq V}\,\partial_xf(S)\mper
\end{equation}
Given these two facts, we can now prove our main structure theorem.




\begin{algorithm}
\caption{Decomposition for submodular functions from tolerant queries}
\textbf{Input:} Tolerant oracle access to a submodular function $f\colon 2^U\to[0,1]$ with tolerance at most $\sens/12$ and parameter $\sens > 0.$
  \begin{algorithmic}
    \STATE{\textbf{Let} $\tolf$ denote the function specified by tolerant oracle queries to $f$ such that for every $S \subseteq U$ $$|f(S) - \tolf(S)| \leq \sens/12.$$}
    \STATE{\textbf{Let} $\prec$ denote an arbitrary ordering of $U$.}
    \STATE{\textbf{Let} $\cG(f)$ denote the collection of functions returned by Algorithm~\ref{alg:tolerantdecompose} with oracle $f$ and parameter $\sens$, and let $\map_f, \dom_f, \rest_f$ be the associated mappings promised by Lemma~\ref{lem:tolerantstructure}.}
    \FOR{$g^B \in \cG(f)$}
    \STATE{\textbf{Let} $\cG(B)$ be the collection of functions returned by Algorithm~\ref{alg:tolerantdecompose} with oracle $\comp{g}^B$ and parameter $\sens$, and let $\map_B, \dom_B, \rest_B$ be the associated mappings promised by Lemma~\ref{lem:tolerantstructure}.}
    \ENDFOR
        \STATE{\textbf{Let} $\dom(S,T) = \dom_f(S) \cap \dom_S(T)$ denote the set of elements that have small marginal absolute value with respect to $S,T \subseteq U$.}
  \end{algorithmic}

\textbf{Output:} the 
collection of functions $\cG = \bigcup_{g^B \in \cG(f)} \{g^{B,C} = \comp{g}^C \mid g^{C} \in \cG(B)\}$ where $g^{B,C}\colon 2^{\dom(B,C)} \to [0,1]$.
\label{alg:decompose2}
\end{algorithm}

\begin{theorem}
\label{thm:structure}
Given any submodular function $f\colon2^U\to[0,1]$ and $\sens > 0,$
Algorithm~\ref{alg:decompose2} makes the following guarantee. There are 
maps $\map\colon 2^U \to 2^U \times 2^U$ and $\rest\colon 2^U \times 2^U \to 2^U$ such that:
\begin{enumerate}
\item\emph{(Lipschitz)}  \label{item:Lipschitz2} For every $g^{B,C} \in \cG$, $g^{B,C}$ is submodular and satisfies $\sup_{x \in \dom(B,C), S \subseteq \dom(B,C)} |\partial_x g^{B,C}(S) | \leq \sens$.
\item\emph{(Completeness)} \label{item:mapping2} For every 
$S \subseteq U$, $\map(S) \subseteq S \subseteq \dom(\map(S))$
and $g^{\map(S)}(S) = f(S).$
\item\emph{(Uniqueness)} \label{item:unique} 
For every $g^{B,C} \in \cG$, $\map(S) = (B,C)$ if and only if $B,C \subseteq S \subseteq \dom(B,C)$ and $S \cap \rest(B,C) = \emptyset$. 
\item\emph{(Size)} \label{item:efficient} The size of $\cG$ is at most $|\cG| =
|U|^{O(1/\sens)}$.  Moreover, given tolerant oracle access to $f$ with tolerance $\sens/12$, we
compute $\map, \dom, \rest$ in time $|U|^{O(1/\sens)}$.
  \end{enumerate}
\end{theorem}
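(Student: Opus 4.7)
The plan is to iterate Lemma~\ref{lem:tolerantstructure} twice, using the identity~\eqref{eq:lip} that complementing a submodular function negates each of its marginal derivatives. The first application, to $f$ itself, caps $\sup_{x,S}\partial_x g^B(S)\le\sens$; the second application, to each complemented piece $\comp{g}^B$, caps what is now the top marginal of $\comp{g}^B$, which by~\eqref{eq:lip} is the negative of the previously uncontrolled bottom marginal of $g^B$. A final complement then packages the output as a submodular function whose marginals are bounded in \emph{absolute value} by $\sens$.

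Step~1 invokes Lemma~\ref{lem:tolerantstructure} on $f$ with parameter $\sens$ and tolerance $\sens/12$, producing pieces $g^B(S)=f(S\cup B)$ with $\sup\partial_x g^B\le\sens$ on $\dom_f(B)$ and associated maps $\map_f,\rest_f$. Step~2, for each $B$, invokes the same lemma on the submodular function $\comp{g}^B$ (over universe $\dom_f(B)$), producing pieces $g^C(T)=\comp{g}^B(T\cup C)$ with $\sup\partial_x g^C\le\sens$ on $\dom_B(C)\subseteq\dom_f(B)$ and associated maps $\map_B,\rest_B$. The crucial inheritance is that since $\partial_x g^C(T)=\partial_x\comp{g}^B(T\cup C)$ and $\inf\partial_x\comp{g}^B=-\sup\partial_x g^B\ge -\sens$ by~\eqref{eq:lip}, one has $\partial_x g^C\ge -\sens$ as well, so $|\partial_x g^C|\le\sens$ on $\dom_B(C)$. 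Step~3 then defines $g^{B,C}=\comp{g}^C$, which lives on $\dom(B,C)=\dom_f(B)\cap\dom_B(C)=\dom_B(C)$; one more application of~\eqref{eq:lip} gives submodularity of $g^{B,C}$ and $|\partial_x g^{B,C}|=|\partial_x g^C|\le\sens$, proving Property~\ref{item:Lipschitz2}. Property~\ref{item:efficient} is immediate: each decomposition contributes a factor of $|U|^{O(1/\sens)}$ to the size and running time, and the composition still gives $|U|^{O(1/\sens)}$.

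For Properties~\ref{item:mapping2} and~\ref{item:unique}, the natural attempt is $\map(S)=(B,C)$ with $B=\map_f(S)$ and $C=\map_B(\dom_f(B)\setminus S)$. First completeness yields $B\subseteq S\subseteq\dom_f(B)$ and $g^B(S)=f(S)$; writing $\tilde S=\dom_f(B)\setminus S$, second completeness yields $C\subseteq\tilde S\subseteq\dom_B(C)$ and $g^C(\tilde S)=\comp{g}^B(\tilde S)=g^B(S)=f(S)$. Unwinding the outer complement gives $g^{B,C}(S)=\comp{g}^C(S)=g^C(\dom_B(C)\setminus S)$, and since $\tilde S\subseteq\dom_B(C)$ forces $\dom_B(C)\setminus S=\tilde S$, this simplifies to $g^{B,C}(S)=g^C(\tilde S)=f(S)$, provided $S\subseteq\dom_B(C)=\dom(B,C)$, which is precisely the containment claimed by Property~\ref{item:mapping2}. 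Uniqueness follows by chaining the uniqueness clauses of the two invocations of Lemma~\ref{lem:tolerantstructure}: $B$ is pinned down by $\rest_f(B)$, and given $B$, the value of $C$ is pinned down by $\rest_B(C)$. One sets $\rest(B,C)$ to be $\rest_f(B)$ together with the preimage of $\rest_B(C)$ under the involution $T\mapsto\dom_f(B)\setminus T$ on $\dom_f(B)$ (this involution translates between the conventions ``$C\subseteq S$'' of the theorem statement and ``$C\cap S=\emptyset$'' produced by the second decomposition), so that $S\cap\rest(B,C)=\emptyset$ encodes both uniqueness hypotheses simultaneously.

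The main obstacle is making this bookkeeping tight enough that $S\subseteq\dom(\map(S))$ holds in every case: the natural $C=\map_B(\tilde S)$ can exclude from $\dom_B(C)$ those elements $x\in S$ whose marginal in $\comp{g}^B$ at $C$ is large (equivalently, whose marginal in $g^B$ is very negative), and such elements must be absorbed through a more careful symmetric choice of $(B,C)$ recording the positive-marginal elements of $S$ in $B$ and the negative-marginal elements of $S$ (relative to $f$) in $C$, so that both Lipschitz bounds and both containment conditions are witnessed simultaneously. Once this symmetric indexing is in place, the rest of the argument is a routine consequence of Lemma~\ref{lem:tolerantstructure} applied twice; the tolerance $\sens/12$ in the hypothesis is exactly what each application consumes, so no error accumulates beyond a single invocation.
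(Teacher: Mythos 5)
Your overall architecture is the same as the paper's: decompose $f$ once via Lemma~\ref{lem:tolerantstructure}, complement each piece $g^B$ using~\eqref{eq:lip}, decompose again, and complement back, so that the second pass caps the negative marginals that the first pass left uncontrolled. Your treatment of Property~\ref{item:Lipschitz2} and Property~4 is correct and essentially identical to the paper's.

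The gap is in Properties~2 and~3, and it is a real one that you yourself flag but do not close. With your choice $C=\map_B(\dom_f(B)\setminus S)$, completeness of the second decomposition applied to $\tilde S=\dom_f(B)\setminus S$ gives $\tilde S\subseteq\dom_B(C)$, which means every element of $\dom_f(B)\setminus\dom_B(C)$ lies in $S$; so whenever $\dom_B(C)\subsetneq\dom_f(B)$ (i.e., whenever $\comp{g}^B$ has any high-influence element at $C$) you provably have $S\not\subseteq\dom(B,C)$, and you also get $C\cap S=\emptyset$ rather than $C\subseteq S$. You acknowledge both failures and defer to ``a more careful symmetric choice of $(B,C)$,'' but that choice is exactly the content of Properties~2--3 --- it is what makes the event $\{\map(S)=(B,C)\}$ a subcube on which $g^{B,C}$ is simultaneously defined, Lipschitz, and equal to $f$, which is what Theorem~\ref{thm:approx1} consumes --- so leaving it unspecified leaves the theorem unproved. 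For comparison, the paper takes the other horn of the same dilemma: it sets $C=\map_B(S)$ with no complementation of $S$, which makes the containments $B,C\subseteq S\subseteq\dom(B,C)$ and the uniqueness characterization via $\rest(B,C)=\rest_f(B)\cup\rest_B(C)$ fall out directly from two applications of Lemma~\ref{lem:tolerantstructure}, at the cost that the evaluation identity $g^{\map(S)}(S)=f(S)$ (which you verify carefully for your indexing) becomes the delicate, essentially unargued step. You have correctly located the crux of the theorem --- reconciling the complementation with the indexing of the subcubes --- but your writeup resolves only one side of it.
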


\begin{proof}
First we show Item~\ref{item:Lipschitz2}
\begin{claim}[Lipschitz]
For every $g^{B,C} \in \cG$, $g^{B,C}$ is submodular and $\sup_{x \in \dom(B,C), S \subseteq \dom(B,C)} |\partial_x g^{B,C}(S) | \leq \sens$.
\end{claim}
\begin{proof}
Submodularity follows directly from Property~\ref{item:Lipschitz} of Lemma~\ref{lem:tolerantstructure}.  The same property of the lemma guarantees that for every $g^B \in \cG(f)$ and $g^C \in \cG(B)$, $\sup_{x \in V_{B}(C), S \subseteq V_{B}(C)} \partial_x g^{C}(S) \leq \sens$.  Moreover, by~\eqref{eq:lip}, $\inf_{x \in V_f(B), S \subseteq V_f(B)} \partial_x \comp{g}^B(S) \geq -\sens$.  Taken together, we obtain $\sup_{x \in V(B,C), S \subseteq V(B,C)} |\partial_x g^{B,C}(S)| \leq \sens$.
\end{proof}

\paragraph{Definition of $\map$ and proof of Item~\ref{item:mapping}.}
Item~\ref{item:mapping2} will follow from the analogous property in Lemma~\ref{lem:tolerantstructure} almost directly.  To construct the mapping $\map(S)$, we want to first compute the appropriate function $g^{B} \in \cG(f)$, using $\map_{f}(S)$ and then find the appropriate function $g^{C} \in \cG(B)$ using $\map_B(S)$.  Thus we can take $\map(S) = (\map_{f}(S), \map_{F_f(S)}(S))$.  By Lemma~\ref{lem:tolerantstructure}, Item~\ref{item:mapping} we have $B \subseteq S \subseteq \dom_{f}(B)$ and $C \subseteq S \subseteq \dom_{B}(C)$, so we conclude $B,C \subseteq S \subseteq \dom(B,C)$.

\paragraph{Definition of $T$ and proof of Item~\ref{item:unique}.}
Item~\ref{item:unique2} will also follow from the analogous property in Lemma~\ref{lem:tolerantstructure}.  By Lemma~\ref{lem:tolerantstructure}, Item~\ref{item:unique}, we have that $\map_f(S) = B$ if and only if $B \subseteq S \subseteq \dom_f(B)$ and $S \cap \rest_f(B) = \emptyset$.  By the same Lemma, we also have that $\map_B(S) = C$ if and only if $C \subseteq S \subseteq \dom_B(C)$ and $S \cap \rest_B(C) = \emptyset$.  So if we define $\rest(B,C) = \rest_f(B) \cup \rest_B(C)$, we can conclude that $\map(S) = (B,C)$ if and only if $B,C \subseteq S \subseteq \dom(B,C)$ and $S \cap \rest(B,C) = \emptyset$.

Now it is clear that $\map(S) = (B,C)$ if $\map_{f}(S) = B$ and $\map_{B}(S) = C$, which by Property~\ref{item:unique} of Lemma~\ref{lem:tolerantstructure} necessitates that $B \subseteq S \subseteq \dom_{f}(B)$, $S \cap \rest_{f}(B) = \emptyset$, $C \subseteq S \subseteq \dom_{B}(S)$, and $S \cap \rest_{B}(C) = \emptyset$.  We have already defined $V(B,C)$ and now we define $\rest(B,C) = \rest_{f}(B) \cup \rest_{B}(C)$.  It is clear now that $\map(S) = (B,C)$ if and only if $B,C \subseteq S \subseteq \dom(B,C)$ and $S \cap \rest(B,C) = \emptyset$.

The size of $\cG$ and running time bounds in Property~\ref{item:efficient2} also follow directly from the analogous property of Lemma~\ref{lem:structure}.  The fact that we can compute the family $\cG$ and the associated mappings $\map,\dom,\rest$ using oracle access to $f$ with tolerance $\sens/12$ follows from the fact that each invocation of Lemma~\ref{lem:structure} can be computed using queries with tolerance $\sens/12$ and from the fact that Algorithm~\ref{alg:decompose2} only queries $f$ in order to invoke Lemma~\ref{lem:tolerantstructure}.  This completes the proof of the Theorem.
\end{proof}


We now present our algorithm for learning arbitrary submodular functions
over product distributions. For a subset of the universe $V \subseteq C$, let $\cD_V$ denote the distribution $\cD$ restricted to the variables in $V$. Note that if $\cD$ is a product distribution, then $\cD_V$ remains a product distribution and is easy to sample from.
\begin{algorithm}
  \caption{Approximating a non-monotone submodular function}
  \textbf{Learn}($f,\alpha,\beta, \mathcal{D}$)
  \begin{algorithmic}
    \STATE Let $\sens = \frac{\alpha^2}{6\log(2/\beta)}.$

    \STATE \textbf{Construct} the collection of functions~$\cG$ and the associated mappings $\map, \dom, \rest$ given by
    Theorem~\ref{thm:structure} with parameter~$\sens.$ 

    \STATE \textbf{Estimate} the value
    $\mu_{g^{B,C}}=\E_{S \sim \cD_{V(B,C) \setminus T(B,C)}}[g^{B,C}(S)]$ for each
    $g^{B,C}\in\cG.$ 

    \STATE \textbf{Output} the data structure~$h$ that consists of the values $\mu_{g^{B,C}}$ for every $g^{B,C} \in \cG$ as well as the mapping $\map$.
\end{algorithmic}
\label{alg:approx-submodular}
\end{algorithm}
To avoid notational clutter, throughout this section we will not consider the details of how we construct our estimate $\mu_g$. However, it is an easy observation that this quantity can be estimated to a sufficiently high degree of accuracy using a small number of random samples.
\begin{theorem}\label{thm:approx1}
  For any $\alpha, \beta \in (0,1]$, Algorithm~\ref{alg:approx-submodular}
  $(\alpha,\beta)$-approximates any submodular function $f\colon2^U\to[0,1]$
  under any product distribution in time $|U|^{O(\alpha^{-2}\log(1/\beta))}$
using oracle queries to~$f$ of tolerance $\alpha^2/72\log(1/\beta)$
\end{theorem}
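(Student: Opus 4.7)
The plan is to mimic the proof of Theorem~\ref{thm:approx0} almost verbatim, substituting the non-monotone structure theorem (Theorem~\ref{thm:structure}) for its monotone counterpart (Lemma~\ref{lem:tolerantstructure}). The key improvement of Theorem~\ref{thm:structure} over Lemma~\ref{lem:tolerantstructure} is that every $g^{B,C}\in\cG$ is $\sens$-Lipschitz \emph{in absolute value}, i.e.\ $|\partial_x g^{B,C}(S)|\le\sens$; this is precisely the hypothesis required by Corollary~\ref{cor:concentration-submodular}, so the same concentration argument goes through without any monotonicity assumption on $f$.

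First, I would set $\sens=\alpha^2/(6\log(2/\beta))$ as in the algorithm and invoke Theorem~\ref{thm:structure} to obtain the collection $\cG$ together with the mappings $\map,\dom,\rest$; because the tolerance $\alpha^2/72\log(1/\beta)\le\sens/12$, the theorem applies. I would first analyze the situation as if the means $\mu_{g^{B,C}}$ and the function values $g^{B,C}(S)$ were computed with exact queries to $f$, and then at the end add back an extra $O(\sens)\le\alpha/6$ of additive slack to account for tolerant queries, exactly as is done in the proof of Theorem~\ref{thm:approx0}.

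By the Completeness property of Theorem~\ref{thm:structure}, $g^{\map(S)}(S)=f(S)$ for every $S$, so the error $|f(S)-h(S)|$ equals $|g^{\map(S)}(S)-\mu_{g^{\map(S)}}|$. By the Uniqueness property, conditioning on the event $\{\map(S)=(B,C)\}$ is equivalent to fixing all coordinates in $B\cup C$ to $1$ and all coordinates in $\rest(B,C)$ to $0$, and leaving the coordinates in $\dom(B,C)\setminus \rest(B,C)$ distributed according to $\cD$ restricted to those coordinates. Since $\cD$ is a product distribution, this conditional distribution is again a product distribution over $\dom(B,C)\setminus\rest(B,C)$, and $\mu_{g^{B,C}}$ is exactly the expectation of $g^{B,C}$ under it. Expanding,
\begin{align*}
\Pr_{S\sim\cD}\bigl\{|f(S)-h(S)|>5\alpha/6\bigr\}
=\sum_{g^{B,C}\in\cG}\Pr\{\map(S)=(B,C)\}\cdot
\Pr_{S\sim\cD_{\dom(B,C)\setminus\rest(B,C)}}\!\bigl\{|g^{B,C}(S)-\mu_{g^{B,C}}|>5\alpha/6\bigr\}.
\end{align*}

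Now I would apply Corollary~\ref{cor:concentration-submodular} to each $g^{B,C}$, which is $\sens$-Lipschitz submodular and $[0,1]$-valued, with deviation parameter $t=5\alpha/(6\sens)=5\log(2/\beta)/\alpha$. A short calculation, identical to the one in Theorem~\ref{thm:approx0}, shows that the right-hand side of the concentration bound is at most $\beta$, so each inner probability is bounded by $\beta$, and hence the total deviation probability is at most $\beta$. Adding back the $\alpha/6$ slack from using tolerant rather than exact queries yields $|f(S)-h(S)|\le\alpha$ except with probability $\beta$. The running time bound follows because $|\cG|=|U|^{O(1/\sens)}=|U|^{O(\alpha^{-2}\log(1/\beta))}$ by the Size property of Theorem~\ref{thm:structure}, and each $\mu_{g^{B,C}}$ can be estimated to the required accuracy from a polynomial (in $1/\alpha,\log(1/\beta)$) number of samples. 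No serious obstacle arises; the whole argument is a routine lift of the monotone proof enabled by the absolute-value Lipschitz guarantee of the new structure theorem.
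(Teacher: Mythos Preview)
Your proposal is correct and follows essentially the same approach as the paper's own proof: both invoke Theorem~\ref{thm:structure} in place of Lemma~\ref{lem:tolerantstructure}, use Completeness and Uniqueness to reduce to a product-distribution concentration bound for each $g^{B,C}$, apply Corollary~\ref{cor:concentration-submodular} with $t=5\alpha/(6\sens)$, and absorb the $\alpha/6$ slack from tolerant queries at the end. The structure, the choice of $\sens$, and the calculations are identical.
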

\begin{proof}
  For a set $S \subseteq U$, we let $(B,C) = \map(S)$ and $g^{B,C}$ be
  the corresponding submodular function as in Theorem~\ref{thm:structure}.
  Note that since
  the queries have tolerance $\alpha^2 / 72 \log(1/\beta) \leq \sens/12$, the lemma applies.
  We will analyze the error probability as if the estimates $\mu_{g^{B}}$ were computed using
  exact oracle queries to $f$, and will note that using tolerant queries to $f$ can only introduce an additional
  error of $\alpha^2/72\log(1/\beta) \leq \alpha/6$.
  We claim that, under this condition
  We claim that
  \begin{align}\label{eq:f-to-g1}
    \Pr_{S\sim \cD}\left\{ \left|f(S)-h(S)\right| > 5\alpha/6 \right\}
    &=\Pr_{S \sim \cD}\left\{
      \left|g^{\map(S)}(S)-\mu_{g^{\map(S)}}\right|> 5\alpha/6 \right\} \notag \\
     &= \sum_{g^{B,C} \in \cG} \Pr_{S\sim \cD}\left\{ (B,C) = \map(S) \right\} \cdot \Pr_{S \sim \cD}\left\{ \left|g^{B,C}(S) - \mu_{g^{B,C}}\right| > 5\alpha/6 \ | \ (B,C) = \map(S)\right\} \mper
  \end{align}
  To see this, recall that for every $S \subseteq U$, $g^{\map(S)}(S) = f(S)$.  By Property~\ref{item:unique} of Lemma~\ref{lem:structure}, the condition that $B = \map(S)$ is equivalent to the conditions that $B,C \subseteq S \subseteq \dom(B,C)$ and $S \cap \rest(B,C) = \emptyset$.  Hence,
  \[
   \Pr_{S \sim \cD}\left\{ \left|g^{B,C}(S) - \mu_{g^{B,C}}\right| > 5\alpha/6 \ | \ (B,C) = \map(S)\right\}
   = \Pr_{S \sim \cD_{\dom(B,C) \setminus \rest(B,C)}}\left\{ \left|g^{B}(S) - \mu_{g^{B}}\right| > 5\alpha/6 \right\}\mper
  \]
   Now, applying the concentration inequality for submodular
  functions stated as Corollary~\ref{cor:concentration-submodular}, we
  get
  \begin{equation}
    \Pr_{S\sim\cD_{V(B,C) \setminus T(B,C)}}\left\{\left|g^{B,C}(S) - \mu_{g^{B,C}}\right|\geq \sens t\right\}
    \leq 2\exp\left(-\frac{t^2}{2(1/\sens + 5t/6)}\right)\mper
  \end{equation}
  Plugging in $t = 5\alpha/6\sens =
  \frac{5\log(2/\beta)}{\alpha}$ and simplifying we get
  $\Pr_{S\sim \cD_{V(B,C) \setminus T(B,C)}}\left\{\left|g^{B,C}(S) - \mu_{g^{B,C}}\right| > \alpha\right\} \leq
  \beta\mper$ Combining this with Equation~\eqref{eq:f-to-g1}, the claim
  follows.
\end{proof}

\forsubmit{We can extend these ideas to approximating submodular functions
  where answers are approximate, more general submodular-like functions,
  and some non-product distributions. Details appear in Appendix~\ref{app:sec-submod}.}

\section{Applications to privacy-preserving query release}
\label{sec:conjunctions}
In this section, we show how to apply our algorithm from
Section~\ref{sec:submodular} to the problem of releasing monotone conjunctions over a
boolean database. 
In Section~\ref{app:graphcut}, we also show how our mechanism can be applied
to release the \emph{cut function} of an arbitrary graph.

Let us now begin with the monotone disjunctions. We will then extend the
result to monotone conjunctions.  Given our previous results, we only need to
argue that monotone disjunctions can be described by a submodular function. 
Indeed, every element $S\in\{0,1\}^d$ naturally corresponds to a monotone
Boolean disjunction $d_S\colon\{0,1\}^d\to\{0,1\}$ by putting 
\[ 
d_S(x) \defeq \bigvee_{i \colon S(i) = 1}x_i\mper 
\] 
Note that in contrast to
Section~\ref{sec:submodular} here we use $x$ to denote an element
of~$\{0,1\}^d.$
Let $F_{\textrm{Disj}}: \{0,1\}^d \rightarrow [0,1]$ be the function such
that $F_{\textrm{Disj}}(S) = d_S(D)$.  It is easy to show that
$F_{\textrm{Disj}}(S)$ is a monotone submodular function.
\begin{lemma}
$F_{\textrm{Disj}}$ is a monotone submodular function.
\end{lemma}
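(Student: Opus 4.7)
The plan is to verify monotonicity and submodularity pointwise on the data set and then average. Recall that $F_{\textrm{Disj}}(S) = d_S(D) = \tfrac{1}{n}\sum_{x\in D} d_S(x)$, where $d_S(x)=\bigvee_{i:S(i)=1}x_i$. Since non-negative linear combinations of monotone submodular functions are again monotone submodular, it suffices to show that for each fixed record $x\in\{0,1\}^d$, the set function $S\mapsto d_S(x)$, viewed as a function on subsets of $[d]$, is monotone and submodular. Averaging over $x\in D$ then yields the lemma.

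For a fixed $x$, let $U(x)=\{i:x_i=1\}$ be its support. Then $d_S(x)=1$ precisely when $S\cap U(x)\ne\emptyset$, so $d_S(x)=\mathbf{1}[S\cap U(x)\ne\emptyset]$. Monotonicity is immediate: if $S\subseteq T$ and $S\cap U(x)\ne\emptyset$, then $T\cap U(x)\ne\emptyset$, hence $d_S(x)\le d_T(x)$. For submodularity, I will check the diminishing marginals characterization of Fact~\ref{fact:marginal}: fix $S\subseteq T\subseteq[d]$ and $i\in[d]$, and compare $\partial_i d_S(x)$ with $\partial_i d_T(x)$. Both marginals lie in $\{0,1\}$, and $\partial_i d_S(x)=1$ requires $i\in U(x)$ and $S\cap U(x)=\emptyset$; the latter condition forces $T\cap U(x)=\emptyset$ only when $T\cap U(x)$ was already empty, otherwise $\partial_i d_T(x)=0$. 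Thus whenever $\partial_i d_T(x)=1$, we also have $\partial_i d_S(x)=1$, giving $\partial_i d_S(x)\ge \partial_i d_T(x)$.

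Averaging over the $n$ records preserves both monotonicity (a pointwise inequality averages to an inequality) and submodularity (the submodular cone is closed under non-negative linear combinations), so $F_{\textrm{Disj}}$ is a monotone submodular function taking values in $[0,1]$. No serious obstacle is expected here; the only thing to be careful about is the bookkeeping that $F_{\textrm{Disj}}$ is the \emph{average} of the indicator functions $S\mapsto d_S(x)$, so its values lie in $[0,1]$ and its sensitivity with respect to changing a single record is $1/n$, which will matter when this lemma is plugged into the privacy-preserving release mechanism of the next section.
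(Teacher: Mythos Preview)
Your proof is correct. The approach differs from the paper's, though the underlying idea is closely related. The paper observes that $F_{\textrm{Disj}}$ is (up to the $1/n$ normalization) a \emph{set-coverage function}: letting $X_i^+=\{x\in D: x_i=1\}$, one has $n\cdot F_{\textrm{Disj}}(S)=\bigl|\bigcup_{i\in S}X_i^+\bigr|$, and coverage functions are a standard example of monotone submodular functions, so the lemma follows by citation. You instead decompose $F_{\textrm{Disj}}$ as an average of the per-record functions $S\mapsto d_S(x)=\mathbf{1}[S\cap U(x)\ne\emptyset]$ and verify monotonicity and diminishing marginals for each of these directly, then use closure of the monotone-submodular cone under nonnegative combinations. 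The two arguments are really the same computation unpacked at different levels: expanding the coverage function as $\sum_{x\in D}\mathbf{1}[x\in\bigcup_{i\in S}X_i^+]$ gives exactly your per-record indicators. Your version is self-contained and does not appeal to the coverage-function fact as a black box; the paper's version is shorter if the reader already knows that fact. Either is perfectly acceptable here.
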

\begin{proof}
Let $X_i^+$ denote the set of elements $x \in D$ such that $x_i = 1$, and let
$X_i^-$ denote the set of elements $x \in D$ such that $x_i = 0$. Consider the
set system $U = \{X_i^+, X_i^-\}_{i=1}^d$ over the universe of elements $x \in
D$. Then there is a natural bijection between $F_{\textrm{Disj}}(D)$ and the
set coverage function $\textrm{Cov}:2^U\rightarrow [0,|D|]$ defined to be
$\textrm{Cov}(S) = |\bigcup_{X \in U} X|$, which is a monotone submodular
function.  
\end{proof}
We therefore obtain the following corollary directly by combining
Theorem~\ref{thm:approx0} with Proposition~\ref{prop:sq2dp}.
\begin{corollary}
\label{cor:disjunctions}
Let $\alpha,\beta,\epsilon>0.$ There is an
$\epsilon$-differentially private algorithm that $(\alpha,\beta)$-releases the
set of monotone Boolean disjunctions over any product distribution 
in time $d^{t(\alpha,\beta)}$ for
any data set of size~$|D|\ge d^{t(\alpha,\beta)}/\epsilon$ where
$t(\alpha,\beta)=O(\alpha^{-2}\log(1/\beta)).$
\end{corollary}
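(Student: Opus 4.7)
The plan is to combine Theorem~\ref{thm:approx0} with Proposition~\ref{prop:sq2dp} in a straightforward way. The preceding lemma shows that $F_{\textrm{Disj}}\colon \{0,1\}^d \to [0,1]$, defined by $F_{\textrm{Disj}}(S) = d_S(D)$, is a monotone submodular function on the universe $\{0,1\}^d$. So Theorem~\ref{thm:approx0} (applied with parameters $\alpha,\beta$) yields an algorithm $A$ that, using at most $k = d^{O(\alpha^{-2}\log(1/\beta))}$ tolerant oracle queries to $F_{\textrm{Disj}}$ of tolerance $\tau = \alpha^2/(72\log(2/\beta))$, outputs a data structure $h$ satisfying $\Pr_{S \sim \cD}\{|F_{\textrm{Disj}}(S) - h(S)| \le \alpha\} \ge 1-\beta$ for any product distribution $\cD$ on $\{0,1\}^d$.

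Next I would observe that each oracle query $F_{\textrm{Disj}}(S) = d_S(D) = \frac{1}{n}\sum_{x \in D} d_S(x)$ is precisely a counting (statistical) query on the database $D$ with predicate $d_S\colon \{0,1\}^d \to \{0,1\}$. Therefore $A$ can be viewed as an SQ algorithm that makes $k$ queries with tolerance $\tau$ to the uniform distribution over $D$. Applying Proposition~\ref{prop:sq2dp}, we simulate this oracle by adding Laplace noise $\Lap(k/(n\epsilon))$ to each counting query; this ensures $\epsilon$-differential privacy and, provided $n \ge k(\log k + \log(1/\beta))/(\epsilon \tau)$, guarantees that all $k$ responses are within tolerance $\tau$ with probability at least $1-\beta$.

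Plugging in the values of $k$ and $\tau$, and using $\log k = O(\alpha^{-2}\log(1/\beta) \cdot \log d)$, the database size requirement becomes $n \ge d^{O(\alpha^{-2}\log(1/\beta))}/\epsilon$, exactly the threshold $|D| \ge d^{t(\alpha,\beta)}/\epsilon$ stated in the corollary. Conditioning on the high-probability event that every noisy query is within tolerance $\tau$, the analysis of Theorem~\ref{thm:approx0} (which is already carried out assuming tolerant rather than exact queries, incurring an additional $\alpha/6$ slack that is absorbed into the final error $\alpha$) implies that the resulting data structure satisfies the $(\alpha,\beta)$-release guarantee for monotone disjunctions. The extension to monotone conjunctions is immediate via the standard duality $c_S(x) = 1 - d_S(\bar x)$ (equivalently, releasing conjunctions on $D$ reduces to releasing disjunctions on the bit-flipped database), so the same bounds apply.

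There is no real obstacle here — the argument is a clean composition of two black boxes. The only bookkeeping care required is verifying that the tolerance budget demanded by Theorem~\ref{thm:approx0} matches the noise that the Laplace mechanism of Proposition~\ref{prop:sq2dp} provides, and that the resulting $n$ polynomially tracks $k/\tau$ so that the final bound on $|D|$ is $d^{O(\alpha^{-2}\log(1/\beta))}/\epsilon$ as claimed.
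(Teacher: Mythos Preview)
Your proposal is correct and follows essentially the same route as the paper, which obtains the corollary ``directly by combining Theorem~\ref{thm:approx0} with Proposition~\ref{prop:sq2dp}'' after noting that $F_{\textrm{Disj}}$ is monotone submodular and that each value query $F_{\textrm{Disj}}(S)=d_S(D)$ is a counting query on $D$. Your bookkeeping for the tolerance and the resulting database-size bound is exactly what is needed, and the remark about conjunctions via $c_S(x)=1-d_S(\bar x)$ matches the paper's subsequent observation.
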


For completeness, we will present the algorithm for privately releasing monotone disjunctions over a product
distribution $\cD$ for a data set $D$, though we will rely on Corollary~\ref{cor:disjunctions} for the formal analysis.
\begin{algorithm}
 \caption{Privately Releasing Monotone Disjunctions}
  \textbf{Release}($D,\alpha,\beta, \epsilon, \mathcal{D}$)
\begin{algorithmic}
    \STATE \textbf{Simulate} the oracle queries $F_{Disj}(S)$ by answering with $d_{S}(D) + Lap(t(\alpha, \beta)/\epsilon |D|)$.

    \STATE Let $\sens = \frac{\alpha^2}{6\log(2/\beta)}.$

    \STATE \textbf{Construct} the collection of functions~$\cG$ and the associated mappings $\map, \dom, \rest$ given by Lemma~\ref{lem:tolerantstructure} on the function $F_{Disj}$ with parameter~$\sens.$

    \STATE \textbf{Estimate} the value
    $\mu_{g^{B}}=\E_{S \sim \cD_{V(B) \setminus T(B)}}[g^{B}(S)]$ for each
    $g^{B}\in\cG.$ 

    \STATE \textbf{Output} the data structure~$h$ that consists of the estimated values $\mu_{g^{B}}$ for every $g^{B} \in \cG$ as well as the mapping $\map$.  To evaluate any monotone disjunction query $d_{S}(D)$, compute $\mu_{g^{\map(S)}}$.
\end{algorithmic}
\end{algorithm} 

We will next see that this corollary directly transfers to monotone
conjunctions.
A monotone Boolean conjunction $c_S: \{0,1\}^d \rightarrow \{0,1\}$ is
defined as 
\[
c_S(x) \defeq \bigwedge_{i\in S}x_i
= 1-\bigvee_{i\in S} (1-x_i)\mper
\]
Given the last equation, it is clear that in order to release 
conjunctions over some distribution, it is sufficient to release 
disjunctions over the same distribution after replacing every data item $x\in
D$ by its negation $\bar x,$ i.e., $\bar x_i=1-x_i.$ Hence,
Corollary~\ref{cor:disjunctions} extends directly to monotone conjunctions.

\paragraph{Extension to width~$w.$}
Note that the uniform distribution on disjunctions of width~$w$ is
not a product distribution, which is what we require to apply
Theorem~\ref{thm:approx1} directly. However, in Lemma~\ref{lem:product2slice}
we show that for monotone submodular functions (such as
$F_{\textrm{Disj}}^{D}$) the concentration of measure property required in the
proof Theorem~\ref{thm:approx1} is still satisfied.  Of course, we can
instantiate the theorem for every $w\in\{1,\dots,k\}$ to obtain a statement
for conjunctions of any width.

Indeed, given a monotone submodular function $f\colon 2^U\rightarrow
\bR$, let $S \in 2^U$ be the random variable where for every $x\in U,$
independently $x\in S$ with probability $w/d$ and $x\not\in S$ with
probability $1-w/d.$ On the other hand, let $T\in
2^U$ denote the uniform distribution over strings in $2^U$ of weight
$w.$ The following lemma is due to Balcan and Harvey~\cite{BH10}.
\begin{lemma}
\label{lem:product2slice}
Assume $f:2^U\rightarrow \bR$ is monotone function,
and $S$ and $T$ are chosen at random as above. Then,
\begin{equation}
\label{direc:1}
\Pr[f(T) \geq \tau] \leq 2\Pr[f(S) \geq \tau]
\end{equation}
\begin{equation}
\label{direc:2}
\Pr[f(T) \leq \tau] \leq 2\Pr[f(S) \leq \tau]
\end{equation}
\end{lemma}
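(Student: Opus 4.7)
The plan is to exploit monotonicity via a coupling, and then control the probability mass of the binomial distribution above and below the mean $w$. Throughout, let $S_k$ denote the uniform distribution over weight-$k$ subsets of $U$; note that $T = S_w$, and that $S$ conditioned on $|S|=k$ is distributed as $S_k$.

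First, I would show that for $k \le w$, $\Pr[f(S_k) \ge \tau] \le \Pr[f(S_w) \ge \tau]$, and symmetrically that for $k \ge w$, $\Pr[f(S_k) \le \tau] \le \Pr[f(S_w) \le \tau]$. The coupling is the obvious one: sample $S_w$ uniformly from weight-$w$ subsets, then obtain $S_k$ by uniformly keeping $k$ of its elements. By symmetry the marginal of $S_k$ is uniform over weight-$k$ subsets, and we have $S_k \subseteq S_w$ always, so monotonicity gives $f(S_k) \le f(S_w)$ pointwise under the coupling, which yields the inequality. The reverse direction (for $k \ge w$) is analogous, reversing the roles.

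Next I would decompose
\[
\Pr[f(S) \ge \tau] = \sum_{k=0}^{d} \Pr[|S|=k]\, \Pr[f(S_k) \ge \tau] \ge \sum_{k \ge w} \Pr[|S|=k]\, \Pr[f(S_k) \ge \tau],
\]
and then use the first step (applied in the $k \ge w$ direction) to lower bound each summand by $\Pr[f(T) \ge \tau]$, obtaining $\Pr[f(S) \ge \tau] \ge \Pr[f(T) \ge \tau]\cdot \Pr[|S| \ge w]$. The analogous argument, restricting the sum to $k \le w$ and using the other direction of the coupling bound, gives $\Pr[f(S) \le \tau] \ge \Pr[f(T) \le \tau]\cdot \Pr[|S| \le w]$.

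Finally I would invoke the fact that $|S|$ is binomially distributed with parameters $d$ and $w/d$, so $\E|S| = w$ is an integer, and a classical theorem (e.g.\ Kaas--Buhrmann) shows that when a binomial has an integer mean, that mean is also a median. Consequently $\Pr[|S| \ge w] \ge 1/2$ and $\Pr[|S| \le w] \ge 1/2$, and rearranging gives both inequalities of the lemma with the factor $2$. The main obstacle is really just the binomial-median fact; the coupling and monotonicity steps are routine once $f$ is monotone, and no submodularity is needed here.
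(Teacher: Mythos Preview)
Your proposal is correct and follows essentially the same argument as the paper's proof: condition on $|S|$, restrict to $|S|\ge w$ (respectively $|S|\le w$), use monotonicity to compare against the weight-$w$ slice $T$, and then use that the integer mean $w$ of the binomial $|S|$ is a median so that $\Pr[|S|\ge w]\ge 1/2$. You are more explicit than the paper in spelling out the coupling that justifies the monotonicity step and in citing the Kaas--Buhrmann median fact, which the paper simply asserts.
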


\begin{remark}
Throughout this section we focus on the case of \emph{monotone} disjunctions
and conjunctions. Our algorithm can be extended to non-monotone
conjunctions/disjunctions as well. However, this turns out to be less
interesting than the monotone case. Indeed, a random non-monotone conjunction of width $w$ is false on any fixed data item with probability $2^{-w}$, thus when $w \geq \log(1/\alpha)$, the constant function $0$ is a good approximation to $F_{Disj}$ on a random non-monotone conjunction of width $w$.  We therefore omit the non-monotone case from our presentation.
\end{remark}

\subsection{Releasing the cut function of a graph}
\label{app:graphcut}
Consider a graph $G = (V, E)$ in which the edge-set represents the private
database (We assume here that each individual is associated with a single edge
in $G$. The following discussion generalizes to the case in which individuals
may be associated with multiple edges, with a corresponding increase in
sensitivity). The \emph{cut function} associated with $G$ is
$f_G:2^V\rightarrow [0,1]$, defined as: $$f_G(S) =
\frac{1}{|V|^2}\cdot\left|\{(u,v) \in E : u \in S, v \not \in S\}\right|$$ We
observe that the graph cut function encodes a collection of counting queries
over the database $E$ and so has sensitivity $1/|V|^2$.
\begin{fact} For any
graph $G$, $f_G$ is submodular.  \end{fact}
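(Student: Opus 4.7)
The plan is to exhibit $f_G$ as a nonnegative combination of very simple ``single-edge'' submodular functions, and invoke the fact that nonnegative combinations of submodular functions are submodular. Concretely, for each edge $e = \{u,v\} \in E$, define the single-edge cut indicator $f_e\colon 2^V \to \{0,1\}$ by $f_e(S) = 1$ if $|S \cap \{u,v\}| = 1$ and $f_e(S) = 0$ otherwise. Then by construction
\[
f_G(S) \;=\; \frac{1}{|V|^2}\sum_{e \in E} f_e(S),
\]
so it suffices to check that each $f_e$ is submodular.

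To verify submodularity of $f_e$, I would use the diminishing-returns characterization (Fact~\ref{fact:marginal}): for all $S \subseteq T \subseteq V$ and all $x \in V$, show $\partial_x f_e(S) \geq \partial_x f_e(T)$. If $x \notin \{u,v\}$ both marginals are $0$ and the inequality is trivial, so I only need to consider $x \in \{u,v\}$; by the symmetric role of $u$ and $v$ in the definition of $f_e$, it is enough to handle $x = u$. A brief case split on whether $v$ lies in $S$ (resp.\ $T$) shows that $\partial_u f_e(S) = +1$ when $v \notin S$ and $u \notin S$, equals $0$ when $u \in S$, and equals $-1$ when $u \notin S$ but $v \in S$. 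Enlarging the base set from $S$ to $T$ can only move $v$ from ``outside'' to ``inside'' (and similarly for $u$), and inspection of the three cases shows this never increases the marginal, giving $\partial_u f_e(S) \geq \partial_u f_e(T)$.

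There is essentially no obstacle here beyond a small amount of bookkeeping: the reduction to single-edge functions is immediate from the definition of $f_G$, and the single-edge check is a finite case analysis. An equally clean alternative is to observe that $f_e(S) = \min(|S \cap \{u,v\}|, 1) + \min(|S^c \cap \{u,v\}|, 1) - 1$ (or to write it as a truncated coverage function) and invoke the standard fact that capped cardinality functions are submodular; this avoids the case check at the cost of introducing a matroid-rank-style identity. Either route gives submodularity of each $f_e$, hence of $f_G$.
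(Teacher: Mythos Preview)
The paper states this as a bare \texttt{fact} with no proof, so there is nothing to compare your argument against; you are supplying a proof where the paper supplies none. Your decomposition into single-edge cut indicators and the verification that each $f_e$ is submodular is correct and is the standard route.

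One small technical point: the diminishing-returns characterization you invoke (Fact~\ref{fact:marginal}) is stated a bit loosely in the paper; the equivalence with submodularity really requires $x \notin T$ (otherwise the forward direction would force monotonicity). With that restriction your case analysis is clean: taking $x = u \notin T$, you only need to track whether $v$ lies in $S$ versus $T$, and moving $v$ from outside to inside drops $\partial_u f_e$ from $+1$ to $-1$, as you say. Without the restriction $u \notin T$, the case $u \in T \setminus S$, $v \in S$ would give $\partial_u f_e(S) = -1 < 0 = \partial_u f_e(T)$, so your phrase ``and similarly for $u$'' should be dropped and replaced by the assumption $u \notin T$.
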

\begin{lemma}
\label{lem:Gsizeforcuts}
The decomposition from Theorem~\ref{thm:structure} constructs a collection of functions $\cG$ of size $|\cG| \leq 2^{2/\alpha}$.
\end{lemma}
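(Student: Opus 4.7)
The plan is to exploit the fact that the cut function has a very structured marginal behavior: the marginal $\partial_x f_G(S)$ can be large only when the vertex $x$ itself has many incident edges, and by an averaging argument over edges there cannot be many such vertices. This will bound the sets produced by the decomposition, regardless of $|V|$.

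First, I would observe that for any $S \subseteq V$ and any $x \in V$,
\[
|\partial_x f_G(S)| \le \frac{\deg(x)}{|V|^2},
\]
since adding or removing $x$ from one side of the cut can alter the cut value by at most $\deg(x)$ edges. Next, I would define the set of ``high-marginal'' vertices $H = \{x \in V : \deg(x) > \alpha |V|^2\}$ and bound its size by a simple averaging: since $\sum_{x \in V} \deg(x) = 2|E| \le |V|^2$, we get $|H| \le 1/\alpha$.

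The next step is to trace the role of $H$ through Algorithm~\ref{alg:decompose2}. In the outer invocation (Algorithm~\ref{alg:tolerantdecompose} applied to $f_G$), an element $x$ gets added to some $B \in \cI_f$ only if $\partial_x f_G(B') > \alpha$ for some predecessor $B'$, which by Step 1 forces $x \in H$. Hence every $B \in \cI_f$ is a subset of $H$, and $|\cG(f_G)| = |\cI_f| \le 2^{|H|} \le 2^{1/\alpha}$. For the inner invocation on $\comp{g}^B$, note that $\partial_x \comp{g}^B(S) = -\partial_x f_G(T \setminus \{x\})$ for $T = (V \setminus S) \cup B$, so $|\partial_x \comp{g}^B(S)| \le \deg(x)/|V|^2$ as well. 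The same argument therefore yields $|\cG(B)| \le 2^{1/\alpha}$ for every $B$.

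Finally, I would combine the two bounds:
\[
|\cG| \;=\; \sum_{g^B \in \cG(f_G)} |\cG(B)| \;\le\; |\cG(f_G)| \cdot \max_B |\cG(B)| \;\le\; 2^{1/\alpha} \cdot 2^{1/\alpha} \;=\; 2^{2/\alpha}.
\]
I expect no real obstacles beyond being careful with the marginal of the complementary function $\comp{g}^B$ in the inner decomposition; the absolute-value bound on the marginal transfers cleanly because taking complements only flips the sign. The constants associated with the tolerant version (threshold $\alpha/3$ rather than $\alpha$) can be absorbed into the $O(1/\alpha)$ bound on $|H|$, matching the stated exponent up to an unwritten constant.
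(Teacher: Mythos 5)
Your proof is correct and follows essentially the same route as the paper's: bound the marginal of the cut function by the vertex degree, use an averaging argument to show only $O(1/\alpha)$ vertices can have high influence, and conclude that the index sets of the decomposition are subsets of this small set. You are in fact somewhat more careful than the paper's three-line sketch, since you explicitly handle the two-level (outer/inner) structure of Algorithm~\ref{alg:decompose2}, the sign flip for $\comp{g}^B$, and the constant-factor slack from the tolerant thresholds.
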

\begin{proof}
Let $u \in V$, and $S \subset V$ such that $|\partial_uf_G(S)| \geq \alpha$. It must be that the degree of $u$ in $G$ is at least $\alpha\cdot |E|$. But there can be at most $2/\alpha$ such high-influence vertices, and therefore at most $2^{2/\alpha}$ subsets of high influence vertices.
\end{proof}

\begin{corollary}
Algorithm \ref{alg:approx-submodular} can be used to privately $(\alpha,\beta)$-release the cut function on any graph over any product distribution in time $t(\alpha, \beta, \epsilon)$ for any database of size $|D| \geq t(\alpha, \beta, \epsilon)$, while preserving $\epsilon$-differential privacy, where:
$$t(\alpha, \beta, \epsilon) = \frac{2^{O(\alpha^{-2}\log(1/\beta))}}{\epsilon}$$
\end{corollary}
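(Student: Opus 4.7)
The plan is to derive the corollary by a direct instantiation of Algorithm~\ref{alg:approx-submodular} on the cut function $f_G$, exploiting the graph-specific size bound of Lemma~\ref{lem:Gsizeforcuts} to replace the generic $|U|^{O(1/\sens)}$ bound by the exponentially smaller $2^{O(1/\sens)}$, and then lifting the resulting SQ algorithm to a differentially private one via Proposition~\ref{prop:sq2dp}.

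First, since $f_G$ is submodular by the stated Fact, Theorem~\ref{thm:approx1} applies: invoking the learner with parameters $\alpha,\beta$ and $\sens=\alpha^2/(6\log(2/\beta))$ produces a data structure $h$ that $(\alpha,\beta)$-approximates $f_G$ over any product distribution, using only tolerant value queries of tolerance $\alpha^2/(72\log(1/\beta))$. The number of distinct oracle queries made by the decomposition in Theorem~\ref{thm:structure}, together with the estimation of the means $\mu_{g^{B,C}}$, is $\poly(|\cG|)$, so the entire computation is controlled by $|\cG|$.

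Next I would bound $|\cG|$ using Lemma~\ref{lem:Gsizeforcuts}. The proof of that lemma shows that a vertex with marginal $|\partial_u f_G(S)|\ge\sens$ must account for an $\sens$-fraction of the edges, so at most $O(1/\sens)$ vertices can ever appear in any set $B$ produced by the outer decomposition; the same argument applies to the complementary function $\bar g^B$ inside the non-monotone decomposition of Theorem~\ref{thm:structure}, yielding $|\cG|\le 2^{O(1/\sens)}=2^{O(\alpha^{-2}\log(1/\beta))}$. Consequently the entire learner runs in time $2^{O(\alpha^{-2}\log(1/\beta))}$, i.e.\ independent of $|V|$.

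Finally, each value $f_G(S)$ is a counting query on the edge-database $D=E$: the predicate $q_S(e)=\mathbf{1}[e\text{ crosses the cut }S]$, normalized by $1/|V|^2$, has per-record sensitivity $1/|V|^2\le 1/|D|$, so $f_G$ is a bona fide counting query. Applying Proposition~\ref{prop:sq2dp} with $k=2^{O(\alpha^{-2}\log(1/\beta))}$ total queries and tolerance $\tau=\alpha^2/(72\log(1/\beta))$ gives $\epsilon$-differential privacy whenever $|D|\ge k(\log k+\log(1/\beta))/(\epsilon\tau)$, which simplifies to $|D|\ge 2^{O(\alpha^{-2}\log(1/\beta))}/\epsilon$, matching the claimed bound $t(\alpha,\beta,\epsilon)$. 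The estimates $\mu_{g^{B,C}}$ needed by the learner are expectations over a product distribution, so they can be computed by sampling without further database access, and the union bound over the $O(k)$ noisy oracle answers absorbs into $k$ with only logarithmic loss.

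The only genuinely non-mechanical step is the second one — confirming that the size-improvement of Lemma~\ref{lem:Gsizeforcuts} survives the two-stage non-monotone decomposition of Theorem~\ref{thm:structure}. This should go through because the outer call applies the monotone decomposition to $f_G$ itself, and the inner call applies it to $\bar g^B(S)=g^B(V_f(B)\setminus S)$, whose marginals are (up to sign) marginals of $f_G$ and so inherit the same degree-based bound on high-influence vertices; the rest is bookkeeping.
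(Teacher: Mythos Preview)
Your proposal is correct and follows essentially the same approach as the paper: instantiate Theorem~\ref{thm:approx1} on $f_G$, replace the generic $|U|^{O(1/\sens)}$ size bound by the graph-specific $2^{O(1/\sens)}$ from Lemma~\ref{lem:Gsizeforcuts}, and then privatize via Proposition~\ref{prop:sq2dp}. Your caution about whether the size improvement survives the two-stage non-monotone decomposition is reasonable but already absorbed by the paper's statement of Lemma~\ref{lem:Gsizeforcuts}, which is phrased directly for the decomposition of Theorem~\ref{thm:structure}; the high-degree argument bounds the pool of vertices that can ever appear in \emph{any} $B$ or $C$, so both stages are covered at once.
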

\begin{proof}
This follows directly from  a simple modification of Theorem \ref{thm:approx1}, by applying Lemma~\ref{lem:Gsizeforcuts} and plugging in the size of the decomposition $\cG$. The algorithm can then be made privacy preserving by applying proposition \ref{prop:sq2dp}.
\end{proof}

\section{Equivalence between agnostic learning and query release}
\label{sec:agnostic}

\newcommand{\cB}{{\cal B}}
\newcommand{\stat}{\mathrm{STAT}}

In this section we show an information-theoretic equivalence between
\emph{agnostic learning} and \emph{query release} in the statistical queries
model.  In particular, given an agnostic learning algorithm for a specific
concept class we construct a query release algorithm for the same concept
class.

Consider a distribution $A$ over $X \times \{0,1\}$ and a concept class $C$.
An \emph{agnostic learning} algorithm (in the strong sense) finds the concept
$q \in C$ that approximately maximizes $\Pr_{(x,b) \sim A} \left\{q(x) =
b\right\}$ to within an additive error of $\alpha$.  Our reduction from query
release to agnostic learning actually holds even for \emph{weak agnostic
learning}.  A weak agnostic learner is not required to maximize $\Pr_{(x,b)
\sim A} \left\{q(x) = b\right\}$, but only to find a sufficiently good
predicate $q$ provided that one exists.

We use $\stat_\tau(A)$ to denote the \emph{statistical query oracle for distribution $A$} that takes as input a predicate $q: X \to \{0,1\}$ and returns a value $v$ such that $| v - \mathbb{E}_{x \sim A}[q(x)] |$

\begin{definition}[Weak Agnostic SQ-Learning] \label{def:swal}
Let $C$ be a concept class and $\gamma, \tau > 0$ and $0 < \beta < \alpha \leq
1/2$.  An algorithm $\cA$ with oracle access to $\stat_\tau(A)$ is an
\emph{$(\alpha, \beta, \gamma, \tau)$-weak agnostic learner for $C$} if for
every distribution $A$ such that there exists $q^* \in C$ satisfying
$\Pr_{(x,b) \sim A} \left\{q^*(x) = b\right\}\ge \nfrac12 + \alpha \mcom$
$\cA(A)$ outputs a predicate $q: X \rightarrow \{0,1\}$ such that
$\Pr_{(x,b) \sim A} \left\{q(x) = b\right\}\ge \nfrac12 + \beta \mcom$
with probability at least $1-\gamma.$
\end{definition}
Note that if we can agnostically learn~$C$ in the strong sense from queries of
tolerance~$\tau$ to within additive error $\alpha - \beta$ with probability
$1-\gamma,$ then there is also an $(\alpha,\beta,\gamma,\tau)$-weak agnostic
learner.

We are now ready to state the main result of this section, which shows that a
weak agnostic SQ-learner for any concept class is sufficient to release the
same concept class in the SQ model.

\begin{theorem} \label{thm:agnostic2release}
Let $C$ be a concept class.
Let $\cA$ be an algorithm that $(\alpha/2, \beta, \gamma, \tau)$ weak agnostic-SQ
learns $C$ with $\tau \leq \beta/8$.  Then there exists an algorithm $\cB$ that invokes
$\cA$ at most $T = 8 \log |X| / \beta^2$ times and $(\alpha, 0)$-releases $C$ with probability at least $1 - T\gamma$.
\end{theorem}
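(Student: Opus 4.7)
The plan is to mimic the multiplicative-weights (MW) mechanism of \cite{HardtRo10}, but to use the weak agnostic learner as the subroutine that finds a "distinguishing query" in each round. The algorithm $\cB$ maintains a fictitious distribution $D_t$ over $X$, initialized to the uniform distribution $D_0$. In each of at most $T$ rounds, $\cB$ either (i) finds some $q \in C$ on which $q(D)$ and $q(D_t)$ differ noticeably and uses it to update $D_t$ via an exponential reweighting $D_{t+1}(x) \propto D_t(x)\exp(\pm \eta\, q(x))$, or (ii) concludes that $|q(D)-q(D_t)|\le \alpha$ for every $q \in C$ and halts, outputting the data structure $D_t$ (from which every $q(D)$ is $\alpha$-approximated by $q(D_t)$).

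The key reduction is that finding such a distinguishing query is precisely a weak agnostic learning problem. Define the distribution $A_t$ on $X\times\{0,1\}$ that draws $(x,1)$ with probability $D(x)/2$ and $(x,0)$ with probability $D_t(x)/2$. A direct computation gives
\[
\Pr_{(x,b)\sim A_t}\bigl\{q(x)=b\bigr\} \;=\; \tfrac12 + \tfrac12\bigl(q(D)-q(D_t)\bigr).
\]
Thus a query $q^*\in C$ with $q^*(D)-q^*(D_t)\ge \alpha$ corresponds exactly to a hypothesis with accuracy at least $1/2+\alpha/2$ under $A_t$, and the $(\alpha/2,\beta,\gamma,\tau)$-weak agnostic learner returns some $q$ with $q(D)-q(D_t)\ge 2\beta$ with probability $1-\gamma$. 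Running the learner on the label-swapped distribution $A_t'$ catches the opposite sign $q(D_t)-q(D)\ge 2\beta$. Crucially, an SQ of tolerance $\tau$ to $A_t$ can be simulated using a single SQ of tolerance $O(\tau)$ to the real database $D$, since $D_t$ is known to $\cB$ exactly; with $\tau\le\beta/8$ the slack is sufficient.

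The analysis is the standard MW potential argument with $\Phi_t = \KL(D\,\|\,D_t)$. Starting from $\Phi_0 \le \log|X|$, setting $\eta = \Theta(\beta)$, and using $\log Z_t \le \eta\, q(D_t) + \eta^2$ for the partition function of the update, one computes
\[
\Phi_{t+1}-\Phi_t \;\le\; -\eta\bigl(q(D)-q(D_t)\bigr) + \eta^2 \;\le\; -\Omega(\beta^2)
\]
whenever the learner succeeds in returning a $q$ witnessing a gap of at least $2\beta$ (in the appropriate sign, with an $O(\tau)$ correction absorbed by choosing $\tau\le\beta/8$). Since $\Phi_t \ge 0$ always, at most $T = O(\log|X|/\beta^2)$ successful update rounds can occur; if no distinguishing query is found in some round, then every $q\in C$ satisfies $|q(D)-q(D_t)|\le\alpha$ and $\cB$ halts and outputs $D_t$. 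The termination check is done by re-estimating $q(D)$ via one fresh SQ after each learner invocation and comparing with the exact $q(D_t)$. A union bound over the $T$ invocations of $\cA$ gives overall success probability $1-T\gamma$.

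The main technical nuisance, rather than a deep obstacle, is bookkeeping the various $\tau$-slacks: the learner's tolerance, the simulation of $A_t$-queries from $D$-queries, and the verification step that decides whether to update or halt. One must choose the MW step size $\eta$ and the threshold used to distinguish "update" from "halt" so that (a) a successful update strictly decreases $\Phi_t$ by $\Omega(\beta^2)$, and (b) failing to update certifies the $(\alpha,0)$-release guarantee. The assumption $\tau\le \beta/8$ is precisely what makes this bookkeeping close; a separate, purely linear-algebraic issue is that the weak learner only finds $q$ with $q(D)>q(D_t)$, so the construction must be run symmetrically (either by also invoking the learner on the label-flipped distribution $A_t'$ each round, or by working with $C\cup\{1-q: q\in C\}$), contributing at most a factor of $2$ to the invocation count, still $O(\log|X|/\beta^2)$.
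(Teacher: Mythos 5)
Your proposal is correct and follows essentially the same route as the paper's own proof: the same multiplicative-weights potential argument on $\KL(D\|D_t)$, the same labeled mixture distribution $A_t=\nfrac12(D,1)+\nfrac12(D_{t-1},0)$ together with its label-flipped twin to catch both signs, the same identity $\Pr_{(x,b)\sim A_t}\{q(x)=b\}=\nfrac12+\nfrac12(q(D)-q(D_t))$ reducing the search for a distinguishing query to weak agnostic learning, and the same union bound over the $T$ learner invocations. The tolerance bookkeeping you flag as the main nuisance is handled in the paper exactly as you describe, with the termination threshold $v_t\le\beta/2-\tau$ and $\tau\le\beta/8$ making the slack close.
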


\begin{algorithm}
\caption{Multiplicative weights update}\label{alg:mult-weights}
\noindent Let $D_0$ denote the uniform distribution over~$X.$

\noindent \textbf{For} $t=1,...,T = \lceil8\log|X|/\beta^2\rceil+1$:
\begin{algorithmic}
\STATE Consider the distributions
\begin{equation*}
A^+_t = \nfrac12(D,1) + \nfrac12(D_{t-1},0) \qquad\qquad
A^-_t = \nfrac12(D,0) + \nfrac12(D_{t-1},1) \mper
\end{equation*}
Let $q^+_t=\cA(A^+_t)$ and $q^-_t=\cA(A^-_t)$.
Let $v^+_t$ be the value returned by $\stat_\tau(A^+_t)$ on the query $q^+_t$ and $v^-_t$ be the value returned by $\stat_\tau(A^-_t)$ on the query $q^-_t$.  Let $v_t = \max\{v^+_t, v^-_t\} - 1/2$ and $q_t$ be the corresponding query.
\STATE \textbf{If:}
\begin{equation}\label{eq:termination}
v_t \le \frac\beta2 - \tau \mcom
\end{equation}
proceed to ``output'' step.
\STATE \textbf{Update:} Let $D_t$ be the distribution obtained from $D_{t-1}$ using a multiplicative weights update step with penalty function induced by $q_t$ and penalty parameter $\eta = \beta/2$ as follows:
\[
D'_{t}(x) = \exp(\eta q_t(x))\cdot D_{t-1}(x)
\]
\[
D_{t}(x) = \frac{ D'_{t} }{ \sum_{x \in X} D'_{t}(x)}
\]
\end{algorithmic}
\textbf{Output} $a_c=\E_{x\sim D_T}c(x)$ for each $c\in C.$
\end{algorithm}
The proof strategy is as follows. We will start from $D_0$ being the uniform
distribution over $X.$ We will then construct a short sequence of distributions
$D_1,D_2,\dots,D_T$ such that no concept in $C$ can distinguish between $D$
and $D_T$ up to bias $\alpha.$  Each distribution $D_t$ is obtained from the
previous one using a multiplicative weights approach as in~\cite{HardtRo10}
and with the help of the learning algorithm that's given in the assumption of
the theorem.
Intuitively, at every step we use the agnostic learner to give us the
predicate $q_t\in C$ which distinguishes between $D_t$ and $D.$ In
order to accomplish this we feed the agnostic learner with the distribution
$A_t$ that labels elements sampled from $D$ by~$1$ and elements sampled from
$D_t$ by~$0.$ For a technical reason we also need to consider the distribution
with $0$ and $1$ flipped. Once we obtained~$q_t$ we can use it as a penalty
function in the update rule of the multiplicative weights method. This has the
effect of bringing $D$ and $D_t$ closer in relative entropy. A typical
potential argument then bounds the number of update steps that can occur
before we reach a distribution $D_t$ for which no good distinguisher in~$C$
exists.
\forreals{
\subsection{Proof of Theorem~\ref{thm:agnostic2release}}

\newcommand{\opt}{\mathrm{opt}}

\begin{proof}
We start by relating the probability that $q_t$ predicts $b$ from $x$ on the distribution $A^+_t$ to the difference in expectation of $q_t$ on $D$ and $D_{t-1}$.

\begin{lemma}
\label{lem:prob2ex}
For any $q\colon X \to \{0,1\}$,
\begin{equation} \label{eq:prob2errpos}
\Pr_{(x,b) \sim A^+_t} \{q(x) = b\} - \frac12 = \frac12 \left( \E_{x\sim D}q(x) - \E_{x\sim D_{t-1}}q(x) \right)
\end{equation}
\end{lemma}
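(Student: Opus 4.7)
The plan is to unfold the definition of the mixture distribution $A^+_t$ and compute the probability directly. By definition, $A^+_t$ is the equal mixture of two distributions on $X \times \{0,1\}$: with probability $1/2$ the pair $(x,b)$ is drawn as $x \sim D$, $b=1$, and with probability $1/2$ as $x \sim D_{t-1}$, $b=0$. So I would write
\begin{equation*}
\Pr_{(x,b) \sim A^+_t}\{q(x) = b\} = \tfrac12 \Pr_{x \sim D}\{q(x) = 1\} + \tfrac12 \Pr_{x \sim D_{t-1}}\{q(x) = 0\}.
\end{equation*}

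Next I would use the fact that $q$ is $\{0,1\}$-valued, so $\Pr_{x \sim D}\{q(x)=1\} = \E_{x\sim D}q(x)$ and $\Pr_{x \sim D_{t-1}}\{q(x)=0\} = 1 - \E_{x\sim D_{t-1}}q(x)$. Substituting yields
\begin{equation*}
\Pr_{(x,b) \sim A^+_t}\{q(x) = b\} = \tfrac12 \E_{x \sim D} q(x) + \tfrac12\bigl(1 - \E_{x \sim D_{t-1}}q(x)\bigr) = \tfrac12 + \tfrac12\bigl(\E_{x\sim D}q(x) - \E_{x \sim D_{t-1}}q(x)\bigr),
\end{equation*}
and subtracting $1/2$ from both sides gives the claimed identity. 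There is no real obstacle here: the lemma is a direct algebraic reformulation of the definition of $A^+_t$, and the only thing to be careful about is correctly tracking which component of the mixture carries which label.
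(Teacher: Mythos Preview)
Your proof is correct and essentially identical to the paper's own argument: unfold the mixture $A^+_t$, split into the two labeled halves, rewrite the $\{0,1\}$-valued probabilities as expectations, and simplify. The paper additionally records the analogous observation for $A^-_t$ in the same proof, but for the stated lemma your argument matches theirs line for line.
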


\begin{proof}
If $q_t = q^+_t$ then
\begin{align*}
\Pr_{(x,b)\sim A^+_t}\{q(x)=b\}
& = \frac12\Pr_{x \sim D}\{q(x)=1\}+\frac12\Pr_{x \sim D_{t-1}}\{q(x)=0\} \\
& = \frac12\E_{x \sim D} \left[q(x)\right]
 + \frac12\E_{x \sim D_{t-1}}\left[1-q(x)\right] \\
& = \frac12+\frac12\left(\E_{x\sim D}q(x) - \E_{x\sim D_{t-1}}q(x)\right)
\end{align*}
Note that $\Pr_{(x,b) \sim A^-_t}\{q(x) = b\} = 1 - \Pr_{(x,b) \sim A^-_t}\{q(x) = (1-b)\} = 1 - \Pr_{(x,b) \sim A^+_t}\{ q(x) = b \}$, so if $q_t = q^-_t$ then
\begin{align*}
\Pr_{(x,b) \sim A^+_t}\{q(x) = b\} = 1 - \Pr_{(x,b) \sim A^-_t}\{q(x) = b\}
&= 1 - \left(\frac12 - \frac12\left(\E_{x\sim D}q(x) - \E_{x\sim D_{t-1}}q(x)\right) \right) \\
&= \frac12 + \frac12\left(\E_{x\sim D}q(x) - \E_{x\sim D_{t-1}}q(x)\right)
\end{align*}
\end{proof}

The rest of the proof closely follows~\cite{HardtRo10}.  For two distributions $P, Q$ on a universe $X$ we define
the \emph{relative entropy} to be $\KL(P||Q) = \sum_{x \in X} P(x) \log (P(x) / Q(x))$.
We consider the potential
\[
\Psi_t = \KL(D||D_t)\mper
\]
\begin{fact}\label{fact:nonneg}
$\Psi_t\ge0$
\end{fact}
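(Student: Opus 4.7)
The plan is to invoke the standard Gibbs' inequality, which asserts that for any two probability distributions $P, Q$ on a finite set, $\KL(P \| Q) \geq 0$. This is a basic information-theoretic fact, so the main task is simply to confirm that the technical prerequisites are in place; namely, that $D_t$ is a genuine probability distribution on $X$ with full support so that $\KL(D \| D_t)$ is well-defined and finite.

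The approach is to apply Jensen's inequality to the convex function $\phi(u) = -\log u$. Writing
\begin{equation*}
\Psi_t \;=\; \KL(D \| D_t) \;=\; \sum_{x \in X} D(x) \log \frac{D(x)}{D_t(x)} \;=\; \sum_{x \in X} D(x)\, \phi\!\left(\frac{D_t(x)}{D(x)}\right),
\end{equation*}
and interpreting the outer sum as an expectation under $D$, Jensen's inequality yields
\begin{equation*}
\sum_{x \in X} D(x)\, \phi\!\left(\frac{D_t(x)}{D(x)}\right) \;\geq\; \phi\!\left(\sum_{x \in X} D(x) \cdot \frac{D_t(x)}{D(x)}\right) \;=\; \phi\!\left(\sum_{x \in X} D_t(x)\right) \;=\; \phi(1) \;=\; 0,
\end{equation*}
which is exactly the claim.

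The only minor subtlety is well-definedness. We need $D_t(x) > 0$ whenever $D(x) > 0$ in order to avoid an undefined $\log(\cdot / 0)$. This is immediate from the update rule in Algorithm~\ref{alg:mult-weights}: the initial distribution $D_0$ is uniform on $X$, and each iteration multiplies $D_{t-1}(x)$ by the strictly positive factor $\exp(\eta q_t(x))$ before renormalizing, so $D_t$ retains full support throughout. Points with $D(x) = 0$ contribute $0$ under the convention $0 \log 0 = 0$, and hence $\Psi_t$ is a finite non-negative real number. There is no real obstacle here: the fact is a one-line consequence of Jensen, with the rest being pure bookkeeping.
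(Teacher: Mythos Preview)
Your proposal is correct. The paper does not actually prove this statement; it merely records it as a standard fact about relative entropy (Gibbs' inequality), so your Jensen-based argument is exactly the textbook justification one would supply if asked to fill in the details.
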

\begin{fact}\label{fact:startvalue}
$\Psi_0\le\log|X|$
\end{fact}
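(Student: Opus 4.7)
The plan is to verify the bound $\Psi_0 \le \log|X|$ by a direct computation of the relative entropy against the uniform distribution $D_0$, which is the standard ``KL-to-uniform equals log-cardinality minus entropy'' identity. Since $D_0(x) = 1/|X|$ for every $x \in X$, I would expand
\[
\Psi_0 = \KL(D \| D_0) = \sum_{x \in X} D(x) \log \frac{D(x)}{D_0(x)} = \sum_{x \in X} D(x) \log\bigl(|X| \cdot D(x)\bigr).
\]
Splitting the logarithm and using $\sum_x D(x) = 1$ gives
\[
\Psi_0 = \log|X| + \sum_{x \in X} D(x) \log D(x) = \log|X| - H(D),
\]
where $H(D)$ denotes the Shannon entropy of $D$.

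The second step is to invoke nonnegativity of entropy: since $D(x) \in [0,1]$, each term $-D(x)\log D(x)$ is nonnegative (with the usual convention $0 \log 0 = 0$), so $H(D) \ge 0$. Consequently $\Psi_0 = \log|X| - H(D) \le \log|X|$, which is the desired bound.

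There is no real obstacle here; the only thing to be careful about is the convention at $D(x) = 0$ (handled by $0\log 0 = 0$) and the fact that $D$ in this paper is the uniform distribution over an $n$-element data set viewed as a distribution on $X$, so $D$ is a genuine probability distribution on $X$ and the identity applies without change. The whole argument is two lines and can be stated inline in the paper.
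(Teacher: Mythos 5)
Your proof is correct and is precisely the standard argument the paper implicitly relies on (it states this as a Fact without proof): expanding $\KL(D\|D_0)$ against the uniform $D_0$ gives $\log|X|-H(D)\le\log|X|$ by nonnegativity of entropy. Nothing further is needed.
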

We will argue that in every step the potential drops by at least $\beta^2/4$
Hence, we know that there can be at most $4\log|X|/\alpha^2$
steps before we reach a distribution that satisfies~(\ref{eq:termination}).

The next lemma gives a lower bound on the potential drop in terms of the
concept, $q_t$, returned by the learning algorithm at time~$t$. Recall, that
$\eta$ (used below) is the penalty parameter used in the multiplicative
weights update rule.

\begin{lemma}[\cite{HardtRo10}]
\label{lem:drop}
\begin{equation}
\Psi_{t-1}-\Psi_t\ge
\eta\left|\E_{x \sim D} q_t(x)-\E_{x \sim D_{t-1}} q_t(x)\right|-\eta^2
\end{equation}
\end{lemma}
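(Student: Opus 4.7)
The strategy is the standard multiplicative-weights potential-drop calculation in the spirit of~\cite{HardtRo10}. I would begin by expanding the difference of relative entropies directly using the update rule $D_t(x) = \exp(\eta q_t(x)) D_{t-1}(x) / Z_t$, where $Z_t = \E_{x \sim D_{t-1}}[\exp(\eta q_t(x))]$ is the normalizer. This yields the clean identity
\[
\Psi_{t-1} - \Psi_t \;=\; \sum_x D(x) \log\frac{D_t(x)}{D_{t-1}(x)} \;=\; \eta\,\E_{x \sim D} q_t(x) - \log Z_t\mper
\]

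The next step is to upper-bound $\log Z_t$. Using $e^z \le 1 + z + z^2$ for $|z| \le 1$ (valid since $\eta = \beta/2 \le 1/4$ and $q_t(x) \in [0,1]$) together with $\log(1+u) \le u$ and $q_t(x)^2 \le 1$, I would obtain $\log Z_t \le \eta \E_{x\sim D_{t-1}} q_t(x) + \eta^2$. Substituting back produces the one-sided bound
\[
\Psi_{t-1} - \Psi_t \;\ge\; \eta\bigl(\E_D q_t - \E_{D_{t-1}} q_t\bigr) - \eta^2\mper
\]

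To recover the absolute value in the lemma, I would treat the two possible choices of $q_t$ in Algorithm~\ref{alg:mult-weights} symmetrically. By Lemma~\ref{lem:prob2ex}, when the algorithm picks $q_t = q_t^+$ the difference $\E_D q_t - \E_{D_{t-1}} q_t$ is nonnegative and the bound above is exactly the desired one. When it picks $q_t = q_t^-$ the difference is nonpositive, and the displayed update is to be interpreted with $\eta$ replaced by $-\eta$ (equivalently, the penalty is taken against $1 - q_t$ rather than $q_t$) so that $D_{t-1}$ always moves toward $D$; repeating the identical calculation with the sign of $\eta$ flipped gives $\Psi_{t-1} - \Psi_t \ge \eta(\E_{D_{t-1}} q_t - \E_D q_t) - \eta^2$. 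Combining the two cases yields the absolute value.

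The main point requiring care is purely bookkeeping: making explicit the sign convention in the update when $q_t = q_t^-$, since the algorithm displays the update uniformly with $\exp(+\eta q_t)$. Apart from this, the proof is the textbook quadratic approximation of the partition function of a multiplicative-weights update and introduces no new ideas beyond those in~\cite{HardtRo10}; I do not anticipate any genuine obstacle.
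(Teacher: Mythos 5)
Your proof is correct and is precisely the standard multiplicative-weights potential argument that the paper defers to \cite{HardtRo10} for (the paper states this lemma as a citation and gives no proof of its own): the exact identity $\Psi_{t-1}-\Psi_t=\eta\,\E_{x\sim D}q_t(x)-\log Z_t$, the quadratic bound $\log Z_t\le\eta\,\E_{x\sim D_{t-1}}q_t(x)+\eta^2$ via $e^z\le 1+z+z^2$, and a sign-flipped repetition for the other case. Your observation that the update as displayed in Algorithm~\ref{alg:mult-weights} must be read with the exponent's sign tied to whether $q_t=q_t^+$ or $q_t=q_t^-$ (i.e., the ``penalty function induced by $q_t$'' carries the sign) is accurate and is exactly the bookkeeping needed for the absolute value to be justified.
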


Let
\[
\opt_t = \sup_{q \in C} \left| \Pr_{(x,b) \sim A^+_t}\left\{ q(x) = b \right\} - \frac12 \right| \mper
\]
Note that $\Pr_{(x,b) \sim A^-_t} \{ q(x) = b \} = 1 - \Pr_{(x,b) \sim A^+_t} \{ \neg q(x) = b \}$.  For the remainder of the proof we treat the two cases symmetrically and only look at how far from $1/2$ these probabilities are.
The next lemma shows
that either $\opt_t$ is large or else we are done in the sense that $D_t$ is
indistinguishable from $D$ for any concept from $C.$

\begin{lemma}\label{lem:opt}
Let $\alpha>0.$ Suppose
\[
\opt_t \le \frac{\alpha}{2} \mper
\]
Then, for all $q \in C,$
\begin{equation}
\left|
\E_{x\sim D} q(x)
-\E_{x\sim D_t} q_t(x)
\right|
\le \alpha
\end{equation}
\end{lemma}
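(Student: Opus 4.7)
The plan is to derive this lemma as a direct unfolding of Lemma~\ref{lem:prob2ex}, with no need for any multiplicative-weights machinery: the statement is purely about converting $\opt_t$, which is phrased as distinguishability on the synthetic labeled distribution $A^+_t$, into a bound on the expectation gap between $D$ and $D_{t-1}$. (The ``$D_t$'' and ``$q_t$'' in the displayed conclusion appear to be typos for $D_{t-1}$ and $q$ respectively; a bound involving $D_t$ would require knowing $\opt_{t+1}$, which is not hypothesized. I proceed on the assumption that the intended statement is that $|\E_D q - \E_{D_{t-1}} q| \le \alpha$ for every $q \in C$.)

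First I would fix an arbitrary $q \in C$ and invoke Lemma~\ref{lem:prob2ex} to obtain
$$\Pr_{(x,b)\sim A^+_t}\{q(x)=b\} - \tfrac{1}{2} \;=\; \tfrac{1}{2}\left(\E_{x\sim D}q(x) - \E_{x\sim D_{t-1}}q(x)\right).$$
Solving for the expectation gap and taking absolute values,
$$\left|\E_{x\sim D}q(x) - \E_{x\sim D_{t-1}}q(x)\right| \;=\; 2\left|\Pr_{(x,b)\sim A^+_t}\{q(x)=b\} - \tfrac{1}{2}\right| \;\le\; 2\,\opt_t,$$
where the last inequality is just the definition of $\opt_t$ as the supremum over $q \in C$ of exactly that absolute deviation. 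Plugging in the hypothesis $\opt_t \le \alpha/2$ yields $|\E_D q - \E_{D_{t-1}} q| \le \alpha$, and since $q \in C$ was arbitrary the conclusion follows.

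There is essentially no obstacle here once Lemma~\ref{lem:prob2ex} is available; the calculation is one line. The only subtlety worth flagging is why the $A^-_t$ distribution does not enter this argument: the definition of $\opt_t$ already takes an absolute value, so both signs of the expectation gap $\E_D q - \E_{D_{t-1}} q$ are covered via $A^+_t$ alone. The reason $A^-_t$ is nevertheless needed inside the \emph{algorithm} is different: a weak agnostic learner is only guaranteed to output a predictor $q \in C$ with accuracy $\ge 1/2 + \beta$, not its negation, so when the gap has the ``wrong'' sign we must swap labels and feed the learner $A^-_t$ instead. But this issue concerns how $\opt_t$ is \emph{achieved} by $\cA$, not how it \emph{controls} the distinguishability of $D$ from $D_{t-1}$, which is all that Lemma~\ref{lem:opt} asserts.
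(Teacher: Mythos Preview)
Your proof is correct and follows the same approach as the paper: both invoke Lemma~\ref{lem:prob2ex} to convert the bound on $\opt_t$ into a bound on the expectation gap. Your version is in fact slightly cleaner, since you exploit the absolute value already present in the definition of $\opt_t$ to handle both signs at once, whereas the paper bounds the positive direction via $A^+_t$ and the negative direction via $A^-_t$ separately; but this is a cosmetic difference, and your observation about the $D_t$/$q_t$ typos matches what the paper's own argument actually establishes.
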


\begin{proof}
From Lemma~\ref{lem:prob2ex} we have that for every $q \in C$
\begin{equation*}
\frac{\alpha}{2} \geq \opt_t \geq \Pr_{(x,b)\sim A^+_t}\{q(x)=b\} - \frac12 = \frac12 \left(
\E_{x\sim D} q(x)
-\E_{x\sim D_t} q_t(x)
\right)
\end{equation*}
Thus $\alpha \geq \left(
\E_{x\sim D} q(x)
-\E_{x\sim D_t} q_t(x)
\right)$.
Similarly,
\begin{equation*}
\frac{\alpha}{2} \geq \opt_t \geq \Pr_{(x,b)\sim A^-_t}\{q(x)=b\} - \frac12 = \frac12 \left(
\E_{x\sim D_{t}} q(x)
-\E_{x\sim D} q_t(x)
\right)
\end{equation*}
Thus $- \alpha \leq \left(
\E_{x\sim D} q(x)
-\E_{x\sim D_t} q_t(x)
\right)$.  So we conclude $\alpha \geq \left|
\E_{x\sim D} q(x)
-\E_{x\sim D_t} q_t(x)
\right|.$
\end{proof}

We can now finish the proof of Theorem~\ref{thm:agnostic2release}. By our
assumption, we have that so long as $\opt_t \geq \alpha/2$ the algorithm $\cA$
produces a concept $q_t$ such that with probability $1-\gamma$
\begin{equation}\label{eq:hyp2opt}
\left| \Pr_{(x,b) \sim A^+_t}\{q_t(x)=b\} - \frac12 \right| \ge \beta \mper
\end{equation}
For the remainder of the proof we assume that our algorithm returns a concept satisfying
Equation~\eqref{eq:hyp2opt} in every stage for which $\opt_t \geq  \alpha/2$.  By a union bound
over the stages of the algorithm, this event occurs with probability at least $1-T\gamma$.

Assuming Equation~\eqref{eq:termination} is not satisfied we have that
\[
\frac{\beta}{4} \leq \frac{\beta}{2} - 2\tau \leq v_t - \tau \leq \left| \Pr_{A^+_t}\{q_t(x) = b\} \right| \mper
\]
The leftmost inequality follows because $\tau \leq \beta/8$.
We then get
\begin{align*}
\Psi_{t-1}-\Psi_t
&\ge
\eta\left|\E_D q_t(x)-\E_{D_{t-1}} q_t(x)\right|-\eta^2 \tag{Lemma~\ref{lem:drop}}\\
&\ge
\eta\left|4\Pr_{A_t}\{q_t(x)=b\}-2\right|-\eta^2 \tag{Lemma~\ref{lem:prob2ex}}\\
&\ge \eta \cdot \beta -\eta^2
\tag{Equation~\ref{eq:termination} not satisfied}\\
&\ge \frac{\beta^2}{2}-\frac{\beta^2}{4}
\tag{$\eta=\beta/2$}\\
&= \frac{\beta^2}{4}
\end{align*}

Hence, if we put $T\ge4\log|X|/\beta^2,$ we must reach a distribution that satisfies~(\ref{eq:termination}).  But at that point, call it $t$, the subroutine $\cA$ outputs a concept $q_{t}$ such that
\[
\left| \Pr_{(x,b) \sim A^+_t}(q_{t}(x) = b) - \frac12 \right| \leq v_t + \tau < \frac{\beta}{2} + \tau < \beta
\]
In this case, by our assumption that Equation~\ref{eq:hyp2opt} is satisfied
whenever $\mathrm{opt}_t \geq 1/2 + \alpha/2$, we conclude that $\mathrm{opt}_t < 1/2 + \alpha/2$.
By Lemma~\ref{lem:opt}, we get
\begin{equation*}
\sup_{q \in C}\left|
\E_{x\sim D} q(x)
-\E_{x\sim D_t} q_t(x)
\right|\le\alpha\mper
\end{equation*}
But this is what we wanted to show, since it means that our output on all
concepts in $C$ will be accurate up to error~$\alpha.$
\end{proof}

We remark that for clarity, we let the failure probability of the release algorithm grow linearly in the number of calls we made to the learning algorithm (by the union bound). However, this is not necessary: we could have driven down the probability of error in each stage by independent repetition of the agnostic learner.
}

This equivalence between release and agnostic learning also can easily be seen
to hold in the reverse direction as well. 

\begin{theorem} \label{thm:release2agnostic}
Let $C$ be a concept class. If there exists an algorithm $\cB$ that $(\alpha, 0)$-releases $C$ with probability
$1-\gamma$ and accesses the database using at most $k$ oracle accesses to $\textrm{STAT}_\tau(A)$, then there is an algorithm that makes $2k$ queries to $\stat_\tau(A)$ and agnostically learns $C$ in the strong sense with accuracy $2\alpha$ with probability at least $1-2\gamma$.
\end{theorem}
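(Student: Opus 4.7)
The strategy is to reduce strong agnostic learning on a labeled distribution $A$ over $X\times\{0,1\}$ to two invocations of the release algorithm $\cB$ on distributions over $X$. Let $p=\Pr_{A}\{b=1\}$ and let $A^+$ and $A^-$ denote the conditionals of $A$ on $b=1$ and on $b=0$ respectively. The workhorse identity is
\[
\Pr_{(x,b)\sim A}\{q(x)=b\} \;=\; p\,\E_{A^+}[q] \,+\, (1-p)\bigl(1-\E_{A^-}[q]\bigr),
\]
which expresses the agnostic objective of every $q\in C$ as an affine combination of two counting queries from $C$ on the two label-conditional distributions.

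The first step is to simulate SQ access to $A^+$ and to $A^-$ using $\stat_\tau(A)$. When $\cB$ asks its simulated $A^+$-oracle a query $q\colon X\to\{0,1\}$, I forward the predicate $\tilde q^+(x,b)=q(x)\cdot b$ to $\stat_\tau(A)$: the returned value is $\tau$-close to $\E_A[q(x)b]=p\,\E_{A^+}[q]$, and rescaling by $1/p$ (with $p$ itself estimated once via $\tilde q\equiv b$) yields the intended SQ answer on $A^+$. Symmetrically, $\tilde q^-(x,b)=q(x)(1-b)$ simulates $\stat_\tau(A^-)$. Each simulated query costs one real query to $\stat_\tau(A)$.

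Invoking $\cB$ once on each simulated oracle uses at most $2k$ queries to $\stat_\tau(A)$. By a union bound the two invocations jointly succeed with probability at least $1-2\gamma$, producing data structures $h^+$ and $h^-$ satisfying $|h^\pm(q)-\E_{A^\pm}[q]|\le\alpha$ for every $q\in C$. For each $q\in C$ I then form the combined estimate $\hat P_q = p\,h^+(q) + (1-p)\bigl(1-h^-(q)\bigr)$; by the identity above and the triangle inequality, $|\hat P_q - \Pr_A\{q(x)=b\}| \le p\alpha + (1-p)\alpha = \alpha$. Outputting $q^\star=\arg\max_{q\in C}\hat P_q$ therefore yields $\Pr_A\{q^\star(x)=b\}\ge \sup_{q\in C}\Pr_A\{q(x)=b\}-2\alpha$ by the standard ``argmax of $\alpha$-accurate estimates is $2\alpha$-suboptimal'' argument, which is exactly strong agnostic learning with accuracy $2\alpha$.

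The one point I expect to be the main technicality is tolerance bookkeeping in the SQ simulation: dividing by $p$ inflates the effective tolerance on $A^+$ from $\tau$ to $\tau/p$, and symmetrically for $A^-$. This is harmless unless the labels are very unbalanced, in which case the problem is already trivial: if $p\le\alpha$ the constant predicate $0$ agnostically solves $C$ to accuracy $2\alpha$, and symmetrically when $1-p\le\alpha$, so one may assume $p,1-p\ge\alpha$ and the simulation preserves tolerance up to a constant factor. Alternatively, one can view $\cB$ as running directly on the subprobability measures $\mu^\pm(x)=\Pr_A[x,b=1]$ and $\Pr_A[x,b=0]$, since an SQ-based release mechanism only uses that oracle answers lie in $[0,1]$ and are $\tau$-close to the intended target, bypassing the division entirely.
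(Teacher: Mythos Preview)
Your overall strategy---run $\cB$ once on the positives and once on the negatives, combine, and take the argmax---is exactly the paper's. The difference is in how the two runs are set up, and that difference matters.

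The paper does not condition. It feeds $\cB$ the oracle that on query $q$ returns (a $\tau$-approximation to) the \emph{joint} quantity $\Pr_{(x,b)\sim A}[q(x)=1\wedge b=1]$, and symmetrically for $b=0$; each such answer is obtained from a single call to $\stat_\tau(A)$ with no rescaling. It then maximizes $a^Y_q - a^N_q$, which equals $\Pr_A[q(x)=b]$ up to the additive constant $(1-p)$ that does not depend on $q$. This is precisely your ``alternative'' of running $\cB$ on the subprobability measures, and it is the clean route: tolerance stays at $\tau$, no estimate of $p$ is needed, and the query count is exactly $2k$.

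Your primary route through the conditionals $A^\pm$ has a real gap that your first workaround does not close. Dividing by $p$ turns a $\tau$-accurate answer into a $(\tau/p)$-accurate one; assuming $p\ge\alpha$ only gets you tolerance $\tau/\alpha$, which is not ``a constant factor'' unless $\alpha$ is fixed, and in any case $\cB$'s guarantee is stated for tolerance $\tau$, not $O(\tau)$. You also spend an extra query to estimate $p$ and incur a further additive error from using $\hat p$ in the combination $\hat P_q$, neither of which is accounted for in the $2k$/$2\alpha$ budget. So: drop the conditional formulation and promote your last paragraph to the main argument---that is what the paper does, and then your proof is correct and matches it.
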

\begin{proof}
Let $Y$ denote the set of examples with label $1$, and let $N$ denote the set
of examples with label $0$.  We use $\stat_\tau(A)$ to simulate oracles
$\stat_\tau(Y)$ and $\stat_\tau(N)$ that condition the queried concept on the
label.  That is, $\stat_\tau(Y)$, when invoked on concept $q$, returns an
approximation to $\Pr_{x \sim A}\{q(x) = 1 \land (x \in Y) \}$ and
$\stat_\tau(N)$ returns an approximation to $\Pr_{x \sim A}\{q(x) = 1 \land (x
\in Y)]$.  We can simulate a query to either oracle using only one query to
$\stat_\tau(A)$.

Run $\cB(Y)$ to obtain answers $a^Y_1,\ldots,a^Y_{|C|}$, and run $\cB(N)$ to
obtain answers $a^N_1,\ldots,a^N_{|C|}$. Note that this takes at most $2k$
oracle queries, using the simulation described above, by our assumption on
$\cB$.  By the union bound, except with probability $2\gamma$, we have for all
$q_i \in C$: $|q_i(Y) - a^Y_i| \leq \alpha$ and $|q_i(B) - a^N_i| \leq
\alpha$. Let $q^* = \arg\max_{q_i \in C}(a^Y_i - a^N_i)$. Observe that $q^*(D)
\geq \max_{q\in C}q(D) - 2\alpha$, and so we have agnostically learned $C$ up
to error $2\alpha$.  
\end{proof}

Feldman proves that even monotone conjunctions cannot be agnostically learned to subconstant error with polynomially many SQ queries:

\begin{theorem}[\cite{Feldman10}]
Let $C$ be the class of monotone conjunctions. Let $k(d)$ be any polynomial in $d$, the dimension of the data space. There is no algorithm $\cA$ which agnostically learns $C$ to error $o(1)$ using $k(D)$ queries to $\textrm{STAT}_{1/k(d)}$.
\end{theorem}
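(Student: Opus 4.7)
The plan is to prove this via the agnostic statistical query dimension framework, a natural extension of the SQ-dim of Blum, Furst, Jackson, Kearns, Mansour, and Rudich to the agnostic setting. The framework reduces the SQ lower bound to a combinatorial question about the concept class: exhibit super-polynomially many labelling distributions on $\{0,1\}^d$ that (i) each admit a monotone conjunction achieving correlation $\Omega(1)$ with the label, yet (ii) are pairwise nearly indistinguishable to any bounded statistical query.

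First, I would set up the correlational statistical dimension bound: if $\{A_c\}_{c \in F}$ is a family of distributions on $\{0,1\}^d \times \{0,1\}$, indexed by monotone conjunctions $c$ in some subfamily $F \subseteq C$, all sharing a common marginal $D$ on $\{0,1\}^d$, such that the label of $A_c$ is biased by $\alpha$ toward $c(x)$ and the centred conjunctions $\{c - \E_D[c]\}_{c \in F}$ are pairwise at most $\rho$-correlated in $L^2(D)$, then any agnostic SQ-learner achieving advantage $\alpha$ on every $A_c$ must make at least $\Omega(|F|\tau^2/(\alpha^2(1+\rho|F|)))$ queries of tolerance $\tau$. This is essentially Feldman's SDA-based lower bound and reduces the whole problem to constructing a nearly-orthogonal family. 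Second, I would construct such a family: on the uniform distribution monotone conjunctions are very far from orthogonal, so I would switch to a biased product distribution on $\{0,1\}^d$ with coordinate bias $p = 2^{-1/w}$ under which width-$w$ conjunctions are balanced ($\E_D[c] = 1/2$) and disjoint width-$w$ conjunctions are exactly uncorrelated, then use a probabilistic or algebraic pigeonhole argument to select $|F| = \exp(d^{\Omega(1)})$ width-$w = \Theta(\log d)$ conjunctions whose pairwise intersections are small enough to force $\rho = 1/|F|^{\Omega(1)}$. Third, I would plug these parameters into the dimension bound with $\tau = 1/\mathrm{poly}(d)$ and conclude that $\mathrm{poly}(d)$ queries cannot drive the agnostic error to $o(1)$.

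The hard part will be the second step: monotone conjunctions simply do not possess the clean orthogonality that parities do under the uniform distribution, so guaranteeing small \emph{pairwise} (not merely average) correlation across exponentially many conjunctions is delicate. If a direct probabilistic construction balks, I would fall back on Feldman's reductive strategy: rather than exhibiting the orthogonal family of conjunctions explicitly, reduce from a known SQ-hard problem, for example weakly learning sparse parities of width $\Theta(\log d)$ under the uniform distribution. The reduction would simulate a single SQ query to the parity distribution using an SQ query to an appropriate planted distribution over $\{0,1\}^d \times \{0,1\}$ whose best conjunction predictor is tied to the underlying parity, thereby transferring the classical SQ lower bound for sparse parities over to monotone conjunctions in the agnostic setting and completing the proof.
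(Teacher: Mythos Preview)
This theorem is not proved in the paper at all: it is quoted from Feldman~\cite{Feldman10} and used as a black box to derive the corollary that no polynomial-query SQ algorithm can release monotone conjunctions to $o(1)$ error. So there is no ``paper's own proof'' to compare your attempt against.

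That said, your sketch is a reasonable outline of how Feldman actually establishes the result. The direct route you describe first---building a super-polynomial family of nearly-orthogonal monotone conjunctions under a carefully biased product distribution and then invoking the agnostic SQ-dimension (SDA) machinery---is plausible but, as you yourself anticipate, delicate: monotone conjunctions do not enjoy the exact pairwise orthogonality that parities have, and getting $\rho$ small enough across exponentially many concepts requires real work. Your fallback is in fact closer to what Feldman does: he observes that under the uniform distribution a width-$k$ monotone conjunction has Fourier weight $2^{-k}$ on the parity of the same $k$ coordinates, so an agnostic learner for conjunctions with constant advantage yields a weak learner (with advantage $2^{-k}$) for width-$k$ parities; taking $k=\Theta(\log d)$ and invoking the classical SQ lower bound for parities finishes the argument. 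If you intend to write up a self-contained proof, I would lead with the reduction-from-parities argument rather than treating it as a fallback, since that is both the cleaner and the historically correct route.
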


\begin{corollary}
\label{cor:lowerbound}
For any polynomial in $d$, $k(d)$, no algorithm that makes $k(d)$ statistical queries to a database of size $k(d)$ can release the class of monotone conjunctions to error $o(1)$.
\end{corollary}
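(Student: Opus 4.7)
The plan is to derive the lower bound by contrapositive, combining the release-to-learning reduction of Theorem~\ref{thm:release2agnostic} with Feldman's SQ lower bound for agnostically learning monotone conjunctions. Suppose for contradiction that there is an algorithm $\cB$ that $(\alpha,0)$-releases the class $C$ of monotone conjunctions with $\alpha = o(1)$ using $k(d)$ queries to $\stat_\tau(A)$ where $\tau \le 1/k(d)$ (this is the natural tolerance afforded by a database of size $k(d)$). Our goal will be to turn $\cB$ into an agnostic learner fast enough to contradict Feldman.

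I would first invoke Theorem~\ref{thm:release2agnostic} directly on $\cB$, which produces an algorithm $\cA$ that makes $2k(d)$ queries to $\stat_\tau(A)$ and agnostically learns $C$ in the strong sense with accuracy $2\alpha = o(1)$. Setting $K(d) \defeq 2k(d)$, the learner $\cA$ makes $K(d)$ queries and $K(d)$ is still polynomial in $d$; the only remaining bookkeeping is on the oracle tolerance. We have $\tau \le 1/k(d) = 2/K(d)$, so $\cA$ is designed to succeed against the relatively weak oracle $\stat_{2/K(d)}$. But any algorithm that tolerates oracle error up to $2/K(d)$ automatically succeeds when given the strictly more accurate oracle $\stat_{1/K(d)}$, since shrinking the tolerance only tightens the guarantees provided to $\cA$. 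Hence we obtain an agnostic learner for monotone conjunctions that uses $K(d)$ queries to $\stat_{1/K(d)}(A)$ and achieves error $o(1)$.

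This directly contradicts the cited theorem of Feldman~\cite{Feldman10}, applied with the polynomial $K(d)$ in place of $k(d)$, which rules out exactly such an agnostic SQ learner. Therefore no release algorithm of the assumed form can exist, proving Corollary~\ref{cor:lowerbound}.

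The proof is essentially a bookkeeping exercise once the reduction of Theorem~\ref{thm:release2agnostic} is available, so I do not anticipate a serious obstacle. The only place requiring care is the translation between ``database of size $k(d)$'' and the SQ tolerance parameter fed into Feldman's bound: one must verify that an oracle with tolerance $\tau \le 1/k(d)$ is no stronger than $\stat_{1/K(d)}$ after the factor-of-two blowup in queries, which is immediate from the monotonicity of the tolerance parameter. All other parts (the $o(1)$ error surviving a factor-of-two blowup, the preservation of polynomial query complexity) are automatic.
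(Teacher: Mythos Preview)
Your proposal is correct and matches the paper's intended argument: the paper states Corollary~\ref{cor:lowerbound} without an explicit proof, presenting it as an immediate consequence of combining Theorem~\ref{thm:release2agnostic} with Feldman's SQ lower bound, which is precisely the contrapositive you carry out. Your additional bookkeeping on the tolerance parameter (relating database size $k(d)$ to $\stat_{1/k(d)}$ and handling the factor-of-two blowup in query count) is more careful than the paper bothers to be, but it is sound and does not diverge from the paper's approach.
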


Note that formally, Corollary~\ref{cor:lowerbound} only precludes algorithms
which release the approximately correct answers to \emph{every} monotone
conjunction, whereas our algorithm is allowed to make arbitrary errors on a
small fraction of conjunctions.
\begin{remark}
It can be shown that the lower bound from Corollary~\ref{cor:lowerbound} 
in fact does \emph{not} 
hold when the accuracy requirement is relaxed so that the algorithm may err arbitrarily on $1$\% of all 
the conjunctions. Indeed, there is an inefficient algorithm (runtime
$\poly(2^d)$) that makes $\poly(d)$ statistical queries and releases random
conjunctions up to a small additive error. 
The algorithm roughly proceeds by running multiplicative weights
privately (as in~\cite{HardtRo10} or above) 
while sampling, say, $1000$ random 
conjunctions at every step and checking if any of them have large error. If
so, an update occurs. We omit the formal description and analysis of the
algorithm.
\end{remark}

We also remark that the proofs of Theorems~\ref{thm:agnostic2release}
and~\ref{thm:release2agnostic} are not particular to the statistical queries
model: we showed generically that it is possible to solve the query release
problem using a small number of black-box calls to a learning algorithm,
\emph{without accessing the database except through the learning algorithm}.
This has interesting implications for any class of algorithms that may make
only restricted access to the database. For example, this also proves that if
it is possible to agnostically learn some concept class $C$ while preserving
$\epsilon$-differential privacy (even using algorithms that do not fit into
the SQ model), then it is possible to release the same class while preserving
$T\epsilon \approx \log |X|\epsilon$-differential privacy.

\section*{Acknowledgments}
We would like to thank Guy Rothblum and Salil Vadhan for many insightful discussions, and Nina Balcan and Nick Harvey for pointing out key distinctions between our algorithmic model and that of \cite{BH10}.

\bibliographystyle{alpha}
\bibliography{localconjunctions}
\appendix

\remove{
\section{Concentration properties of submodular functions}
\label{sec:concentration}

In this section we present the concentration inequalities for submodular
functions that were needed in the analysis of our main algorithm. We
will employ ``dimension-free'' concentration bounds which apply to
\emph{self-bounding} functions (i.e., those whose concentration does not
explicitly depend on the number of random variables):
\begin{definition}[Self Bounding Functions~\cite{DP09}]
  Let $\Omega = \prod_{i=1}^d\Omega_i$ be an arbitrary product space.  A
  function $f:\Omega\rightarrow\bR$ is $(a,b)$-\emph{self-bounding} if
  there are functions $f_i:\prod_{j \ne i}\Omega_i\rightarrow\bR$ such
  that if we denote $x^{(i)} = (x_1,\ldots,x_{i-1},x_{i+1},\ldots,x_d)$,
  then for all $x\in \Omega, i$,
  \begin{equation}
    \label{SBCondition1}
    0 \leq f(x) - f_i(x^{(i)}) \leq 1
  \end{equation}
  and
  \begin{equation}
    \label{SBCondition2}
    \sum_{i=1}^d(f(x)-f_i(x^{(i)})) \leq af(x)+b
  \end{equation}
\end{definition}
Self-bounding functions satisfy the following strong concentration properties:
\begin{theorem}[\cite{BLM00,BLM09}]
\label{thm:concentration}
If $f$ is a $(a,b)$-self-bounding function for $a > 1/3$, and $Z =
f(X_1,\ldots,X_d)$ where each $X_i$ taking values in $\Omega_i$ is
independently random, then:
\begin{equation}
\Pr\left\{\left|Z - \E Z\right| \geq t\right\}
\leq 2\exp\left(-\frac{t^2}{2(\E Z + b + ct)}\right)\mcom
\end{equation}
where $c = (3a-1)/6\mper$
\end{theorem}
We remark that Theorem \ref{thm:concentration} is dimension-free: it
shows that $f(Z)$ is concentrated around $\E[Z]$ with standard deviation
$O(\sqrt{\E[Z]})$, rather than merely $O(\sqrt{d})$, which holds for any
Lipschitz function. This tighter concentration is crucial to our application.

Submodular functions satisfying a Lipschitz condition are in fact
self-bounding. This was observed by Vondrak~\cite{Von10} for $p=2.$
\begin{lemma}[Vondrak \cite{Von10}]
\label{lem:self-bounding}
If $f\colon2^{U}\rightarrow \bR$ is a submodular function such that
\begin{equation}\label{eq:lip}
\sup_{S,x} |\partial_x f(S)| \leq 1\mcom
\end{equation}
then $f$ is $(2,0)$-\emph{self-bounding}.
\end{lemma}

The above bounds are all that are needed for our algorithm for approximating
submodular functions over product distributions.
}

\remove{
For application to Boolean conjunctions, we need a generalization to $p>2.$
However, we only need this claim for \emph{monotone submodular functions}.
\begin{lemma}[Concentration for generalized monotone submodular functions]
\label{lem:concentration-gen-submodular}
Let $f\colon[p]^{U}\rightarrow \bR$ be a monotone
submodular function satisfying the
Lipschitz condition
\begin{equation}\label{eq:lip2}
\sup_{S,x,i}\left|\partial_{x,i}f(S)\right|\le1\mper
\end{equation}
Then for any product distribution~$S$ over $[p]^U,$ we have
\begin{equation}
\Pr\left\{\left|f(S) - \E f(S)\right| \geq t\right\}
\leq 2\exp\left(-\frac{t^2}{2(\E Z + t/3)}\right)\mcom
\end{equation}
\end{lemma}
\begin{proof}
Using Theorem~\ref{thm:concentration} it suffices to show that the function is
$(1,0)$-self bounding.

To argue this claim it will be convenient to use the notation from
Theorem~\ref{thm:concentration}. We will therefore denote an element from
$[p]^U$ by a string $x\in[p]^d.$
In this notation we
write $\partial_{i,a} f(x)=f(x[i:=a])-f(x)$ whenever $x_i=0$ and~$a\in[p].$
For each $x \in [p]^d,$ denote by $x^{(i)}$ (as above) the vector
$(x_1,\ldots,x_{i-1}, x_{i+1},\ldots x_d )\in [p]^{d-1}$ and by $x_{(i)}$ the
vector $(x_1,\ldots,x_{i-1}, 0, \ldots, 0) \in [p]^d$. Put $f_i(x^{(i)}) =
f(x_1,\ldots,x_{i-1}, 0, x_{i+1}, \ldots, x_u)$. We then have:
\begin{align*}
\sum_{i=1}^d f(x) - f_i(x^{(i)})
&= \sum_{i=1}^d\partial_{i,x_{i}}f(x^{(i)}) \\
&\le \sum_{i=1}^{u} \partial_{i,x_{i}}f(x_{(i)}) \\
&=\sum_{i=1}^{u}f(x_1,\ldots, x_{i}, 0, \ldots, 0)
-  f(x_1,\ldots, x_{i-1}, 0, \ldots, 0) \\
&\le   f(x)\qedhere
\end{align*}
\end{proof}

We now show that we have
roughly the same concentration bounds when we choose elements of weight~$w$
randomly rather than elements from a related product distribution over
$[p]^U.$ While this doesn't use submodularity, it does require the
function to be \emph{monotone}.
Indeed, given a monotone submodular function $f\colon[p]^U\rightarrow \bR$,
let $S\in[p]^U$ be the random variable where $S(x) = 0$ with
probability $1-(w/d)$, and with probability $w/d$ we choose $S(x)$
uniformly at random from $\{1,\dots,p-1\}.$ On the other hand, let $T\in[p]^d$
denote the uniform distribution over strings in $[p]^d$ of weight $w.$
The following claim can be found in~\cite{BH10}.
\begin{lemma}
\label{lem:product2slice}
Assume $f:[p]^U\rightarrow \bR$ is monotone function,
and $S$ and $T$ are chosen at random as above. Then,
\begin{equation}
\label{direc:1}
\Pr[f(T) \geq \tau] \leq 2\Pr[f(S) \geq \tau]
\end{equation}
\begin{equation}
\label{direc:2}
\Pr[f(T) \leq \tau] \leq 2\Pr[f(S) \leq \tau]
\end{equation}
\end{lemma}

\begin{proof}
First we prove inequality \ref{direc:1}.
\begin{align*}
\Pr[f(S) \geq \tau] &= \sum_{i=0}^d\Pr[f(S) \geq \tau \mid |S| = i]\cdot \Pr[|S| = i] \\
&= \sum_{i=w}^d \Pr[f(S) \geq \tau\mid |S|=i] \cdot \Pr[|S| = i] \\
&\geq \sum_{i=w}^d \Pr[f(T) \geq \tau] \cdot \Pr[|S| = i] \tag{monotonicity}\\
&= \Pr[f(T) \geq \tau]\cdot \Pr[|S| \geq w] \\
&\geq \nfrac12\Pr[f(T) \geq \tau]
\end{align*}
The proof of~(\ref{direc:2}) works the same way.
\end{proof}

}

\end{document}